\newtheorem{theorem}{Theorem}
\newtheorem{lemma}{Lemma}
\newtheorem{definition}{Definition}
\newtheorem{corollary}{Corollary}
\newtheorem{remark}{Remark}
\newtheorem{example}{Example}
\begin{document}

\vspace*{3cm} \thispagestyle{empty}
\vspace{5mm}

\noindent \textbf{\Large Maximal Fermi charts and geometry\\ of inflationary universes}\\

\textbf{\normalsize  \textbf{\normalsize David Klein}\footnote{Department of Mathematics and Interdisciplinary Research Institute for the Sciences, California State University, Northridge, Northridge, CA 91330-8313. Email: david.klein@csun.edu.}}
\\

\vspace{4mm} \parbox{11cm}{\noindent{\small A proof is given that the maximal Fermi coordinate chart for any comoving observer in a broad class of Robertson-Walker spacetimes consists of all events within the cosmological event horizon, if there is one, or is otherwise global.  Exact formulas for the metric coefficients in Fermi coordinates are derived. Sharp universal upper bounds for the proper radii of leaves of the foliation by Fermi spaceslices are found, i.e., for the proper radii of the spatial universe at fixed times of the comoving observer.  It is proved that the radius at proper time $\tau$ diverges to infinity for non inflationary cosmologies as $\tau\to\infty$, but not necessarily for cosmologies with periods of inflation.  It is shown that any spacelike geodesic orthogonal to the worldline of a comoving observer has finite proper length and terminates within the cosmological event horizon (if there is one) at the big bang.  Geometric properties of inflationary versus non inflationary cosmologies are compared, and opposite inequalities for the inflationary and non inflationary cases, analogous to Hubble's law, are obtained for the Fermi relative velocities of comoving test particles.  It is proved that the Fermi relative velocities of radially moving test particles are necessarily subluminal for inflationary cosmologies in contrast to non inflationary models, where superluminal relative Fermi velocities necessarily exist.}\\

\noindent {\small KEY WORDS: Robertson-Walker cosmology, maximal Fermi coordinate chart, inflation, event horizon, Fermi relative velocity, kinematic relative velocity}\\

\noindent Mathematics Subject Classification: 83F05, 83C10}\\
\vspace{6cm}
\pagebreak

\setlength{\textwidth}{27pc}
\setlength{\textheight}{43pc}

\section{Introduction}

\noindent For an observer following a geodesic path in a spacetime, Fermi coordinates constitute a locally inertial coordinate system along the path.  A Fermi coordinate frame is nonrotating in the sense of Newtonian mechanics and is realized physically as a system of gyroscopes \cite{walker, MTW}.   Applications are extensive and include the study of tidal dynamics, gravitational waves, relativistic statistical mechanics, and the influence spacetime curvature on quantum mechanical phenomena \cite{CM, CM2, Ishii, pound, FG, KC9, KY, B,P80, PP82, rinaldi}.  Fermi coordinate charts, and the notion of simultaneity determined by the Fermi time coordinate, are also essential in the study of geometrically defined velocities of test particles relative to a distant observer \cite{KC10,randles,Bolos12, sam}. \\  

\noindent  Motivated by discussions for the need for a strict definition of \textquotedblleft radial velocity\textquotedblright\ at the  2000 General Assembly of
the International Astronomical Union (see
\cite{Soff03,Lind03}), V.J. Bol\'os,  introduced, in a series of papers \cite{Bolos02,Bolos05,bolos}, four geometrically defined (but inequivalent) notions of relative velocity in an arbitrary spacetime $\mathcal{M}$. Two of these relative velocities, the Fermi and kinematic relative velocites, depend on the foliation by Fermi spaceslices $\{\mathcal{M}_{\tau}\}$ of some neighborhood $\mathcal{U}$ of an observer's worldline, $\beta(t)$, defined by,

\begin{equation}\label{slice2}
\mathcal{M}_{\tau}\equiv \varphi_{\tau}^{-1}(0),
\end{equation}
where $\varphi_{\tau} : \mathcal{U} \rightarrow \mathbb{R}$ by,

\begin{equation}\label{slice}
\varphi_{\tau}(p)=g(\exp_{\beta(\tau)}^{-1}p,\, \dot{\beta}(\tau)),
\end{equation}
and $\tau$ denotes proper time along $\beta$.  In Eq.\eqref{slice}, $g$ is the metric and the exponential map, $\exp_{p}(v)$ denotes the evaluation at affine parameter $1$ of the geodesic starting at the point $p\in\mathcal{M}$, with initial derivative $v$. The Fermi spaceslice $\mathcal{M}_{\tau}$ consists of all the spacelike geodesics orthogonal to the path of the Fermi observer $\beta(t)$ at fixed proper time $\tau$. \\

\noindent An open neighborhood $\mathcal{U}$ of $\beta$ with this foliation therefore identifies the spacetime locations and paths of possible test particles whose Fermi and kinematic relative velocities may in principle be defined.  For this reason, the full utility of these relative velocities is realized only on maximal neighborhoods of this type, i.e., on a maximal neighborhood, $\mathcal{U}_{\mathrm{Fermi}}$, for Fermi coordinates for $\beta$.\\ 

\noindent Fermi coordinates are associated to the foliation $\{\mathcal{M}_{\tau}\}$ in a natural way.  Each spacetime point on $\mathcal{M}_{\tau}$ is assigned time coordinate $\tau$, and the spatial coordinates are defined relative to a parallel transported orthonormal reference frame.   Specifically, a Fermi coordinate system \cite{walker, MTW, MM63, KC1} along $\beta$ is determined by an orthonormal frame of vector fields, $e_{0}(\tau), e_{1}(\tau), e_{2}(\tau), e_{3}(\tau)$ parallel along $\beta$, where  $e_{0}(\tau)$ is the four-velocity of the Fermi observer, i.e., the unit tangent vector of $\beta(\tau)$. Fermi coordinates $x^{0}$, $x^{1}$, $x^{2}$, $x^{3}$ relative to this tetrad   are defined by,

\begin{equation}\label{F2}
\begin{split}
x^{0}\left (\exp_{\beta(\tau)} (\lambda^{j}e_{j}(\tau))\right)&= \tau \\
x^{k}\left (\exp_{\beta(\tau)} (\lambda^{j}e_{j}(\tau))\right)&= \lambda^{k}, 
\end{split} 
\end{equation}

\noindent where Latin indices  run over $1,2,3$ (and Greek indices run over $0,1,2,3$), and where it is assumed that the $\lambda^{j}$ are sufficiently small so that the exponential maps in Eq.\eqref{F2} are defined.\\

\noindent  For a Robertson-Walker spacetime, $(\mathcal{M}, g)$ with metric tensor $g$ given by the line element of Eq.\eqref{frwmetric} (below), and scale factor $a(t)$, let $\beta(t)$ be a comoving observer.  It was proved in \cite{randles} that the Fermi chart $(x^{\alpha}, \mathcal{U}_{\mathrm{Fermi}})$ for $\beta(t)$ in a non inflationary\footnote{A Robertson-Walker space-time is non inflationary if $\ddot{a}(t)\leq0$ for all $t$.} Robertson-Walker space-time, with increasing scale factor, is global, i.e., $\mathcal{U}_{\mathrm{Fermi}}=\mathcal{M}$.  Exact formulas for the Fermi coordinates $(x^{0}, x^{1}, x^{2}, x^{3})$ were also given there for sufficiently small neighborhoods of comoving observers in general Robertson-Walker cosmologies.  \\  

\noindent As a first step in this paper, we eliminate the restriction to non inflationary cosmologies required for the results in \cite{randles}. Our results hold for a broad class spacetimes, including  realistic inflationary cosmologies consistent with astronomical measurements (see the discussion following Definition \ref{regular}).  The key condition we impose on the  scale factor $a(t)$ is, 

\begin{equation}\label{condition0}
\frac{a(t)\ddot{a}(t)}{\dot{a}(t)^{2}}\leq 1,
\end{equation}
for all $t>0$. This dimensionless condition allows for the possibility that $\ddot{a}(t)>0$ for some or all values of $t$, i.e., for periods of inflation, and if,

\begin{equation}
\chi_{\mathrm{horiz}}(t_{0})\equiv\int_{t_{0}}^{\infty}\frac{1}{a(t)}dt<\infty
\end{equation}
for some $t_{0}>0$ (and hence any $t_{0}>0$), then the spacetime includes a cosmological event horizon \cite{rindler1} for $\beta$; $\chi_{\mathrm{horiz}}(t)$ is the $\chi$-coordinate at time $t$ of the cosmological event horizon, beyond which the co-moving observer at $\chi = 0$ can never receive a light signal.  Cosmological event horizons may also be described in terms of Penrose diagrams \cite{penrose} (see, for example, \cite{GP}). \\

\noindent For Robertson-Walker spacetimes that include a big bang and have a cosmological event horizon, we prove that the maximal Fermi chart $\mathcal{U}_{\mathrm{Fermi}}$ consists of all spacetime points within (but not including) the cosmological event horizon.  The event horizon is the topological boundary of $\mathcal{U}_{\mathrm{Fermi}}$.  For cosmologies with no event horizon, it is shown, for both inflationary and non inflationary models, that the Fermi coordinate chart is global.\\

\noindent We prove that all spacelike geodesics with initial point on the worldline $\beta$ of a comoving observer at proper time $\tau$, and orthogonal to $\beta$, terminate at the big bang in a finite proper distance $\rho_{\mathcal{M}_{\tau}}$, the radius of $\mathcal{M}_{\tau}$.  In this sense, as noted by Page \cite{page} using Rindler's observations \cite{rindler}, the big bang is simulaneous with all spacetime events.  We show that $\rho_{\mathcal{M}_{\tau}}$ is an increasing function of $\tau$ and has a universal upper bound, $\pi/2H(\tau)$, where $H(\tau)$ is the Hubble parameter.  We prove, 

\begin{equation}
\lim_{\tau\to\infty}\rho_{\mathcal{M}_{\tau}}=\infty,
\end{equation}
for any regular\footnote{See Definition \ref{regular}.}, eventually non inflationary cosmology, but give examples of physically reasonable inflationary cosmologies (see Remark \ref{finiteuniverse}) for which,

\begin{equation}
\lim_{\tau\to\infty}\rho_{\mathcal{M}_{\tau}}=\sup_{\tau>0}\rho_{\mathcal{M}_{\tau}}<\infty.
\end{equation}

\noindent The worldline of a radially moving test particle within the event horizon (if there is one) intersects each space slice $\mathcal{M}_{\tau}$ at a point $\Psi_{\tau}(\rho(\tau))$ on a spacelike geodesic $\Psi_{\tau}(\rho)$ in $\mathcal{M}_{\tau}$.  The Fermi speed for such a particle is,
\begin{equation}\label{introvf}
\|v_{\mathrm{Fermi}}\|=\frac{d}{d\tau}\rho(\tau).
\end{equation}
In Eq.\eqref{introvf}, $\rho(\tau)$ is the proper distance at proper time $\tau$ from the Fermi observer to the test particle's position in $\mathcal{M}_{\tau}$.\footnote{General definitions and properties of Fermi relative velocity for observers and test particles following arbitrary timelike paths are given in \cite{bolos}.}  This Fermi speed may be computed from another geometrically defined relative velocity, the kinematic relative velocity (see Definition \ref{defkin}) and the following metric coefficient in Fermi coordinates, 

\begin{equation*}
g_{\tau\tau}(\tau,\rho)=-\frac{\|v_{\mathrm{Fermi}}\|^2}{\|v_{\mathrm{kin}}\|^2}.
\end{equation*}
This relationship establishes an important connection between relative velocities and the geometry of the spacetime.\\

\noindent For a comoving test particle with Fermi time coordinate $\tau$ and cosomological time coordinate $t_{0}$, we prove ``Hubble inequalities" for inflationary periods and non inflationary periods of a Robertson-Walker cosmology.  Specifically, we show that if $a(t)$ is a smooth, increasing, unbounded function of $t$, and $\ddot{a}(t)\leq0$ for $t_{0}<t<\tau$, then
\begin{equation}
\|v_{\mathrm{Fermi}}\|=\dot{\rho}\geq H(\tau)\rho.
\end{equation}
On the other hand, if $\ddot{a}(t)\geq0$ for $t_{0}<t<\tau$, then,
\begin{equation}
\|v_{\mathrm{Fermi}}\|=\dot{\rho}\leq H(\tau)\rho, 
\end{equation}
where $H(\tau)= \dot{a}(\tau)/a(\tau)$ is Hubble's parameter.  Moreover, for a radially moving test particle with Fermi time coordinate $\tau$ and cosomological time coordinate $t_{0}$,
\begin{equation}
\|v_{\mathrm{Fermi}}\|<-g_{\tau\tau}(\tau,\rho)<1,
\end{equation}
if $\ddot{a}(t)\geq0$ for $t_{0}<t<\tau$.  In contrast, superluminal relative Fermi velocities necessarily exist in non inflationary Robertson-Walker cosmologies \cite{randles, Bolos12, sam}.\\

\noindent This paper is organized as follows.  In Section 2 we introduce notation.  Section 3 provides basic results on inflation and event horizons.  Section 4 introduces the key definition and condition for scale factors, and establishes the main results on Fermi coordinate charts, along with new formulas for metric coefficients.   Section 5 provides results on geometric properties of spacelike geodesics orthogonal to a comoving observer's worldline. Section 6 establishes properties of Fermi and kinematic relative velocities, including Hubble inequalities, and Section 7 is devoted to concluding remarks.   \\

\section{Notation}

\noindent The Robertson-Walker metric on space-time $\mathcal{M}=\mathcal{M}_{k}$ is given by the line element,

\begin{equation}\label{frwmetric}
ds^2=-dt^2+a^2(t)\left[d\chi^2+S^2_k(\chi)d\Omega^2\right],
\end{equation}

\noindent where $d\Omega^2=d\theta^{2}+\sin^{2}\theta \,d\varphi^{2}$, $a(t)$ is the scale factor, and, 

\begin{equation}\label{Sk}
S_{k}(\chi)=
\begin{cases}
\sin\chi &\text{if}\,\, k= 1\\
 \chi&\text{if}\,\,k= 0\\ 
 \sinh\chi&\text{if}\,\,k=-1.
\end{cases}
\end{equation}

\noindent The coordinate $t>0$ is cosmological time and $\chi, \theta, \varphi$ are dimensionless. The values $+1,0,-1$ of the parameter $k$ distinguish the three possible maximally symmetric space slices for constant values of $t$ with positive, zero, and negative curvatures respectively.  The radial coordinate $\chi$ takes all positive values for $k=0$ or $-1$, but is bounded above by $\pi$ for $k=+1$.\\

\noindent We assume henceforth  that $k= 0$ or $-1$ so that the range of $\chi$ is unrestricted. The techniques needed for the case $k=+1$ are the same, but require the additional restriction that $\chi<\pi$ so that spacelike geodesics do not intersect.  We note that $k=+1$ for the Einstein static universe, for which  Fermi coordinates for geodesic observers are global (except for the antipode, $\chi = \pi$) \cite{KC3}. \\

\section{Inflation and Event Horizons}

\noindent In this section we provide some elementary observations for the convenience of the reader. The following lemma and corollary establish a connection between the existence of event horizons and inflationary scale factors.

\begin{lemma}\label{a(t)}
If $\chi_{\mathrm{horiz}}(t_{0})=\int_{t_{0}}^{\infty}\frac{1}{a(t)}dt<\infty$, for some $t_{0}>0$, then 
$$\lim_{\tau\rightarrow\infty} \frac{\tau}{a(\tau)}=0=\lim_{\tau\rightarrow\infty} \dfrac{1}{\dot{a}(\tau)}.$$
\end{lemma}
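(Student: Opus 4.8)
The plan is to derive both limits from the finiteness of the improper integral $\int_{t_0}^\infty a(t)^{-1}\,dt$, using the implicit hypotheses that $a$ is smooth, positive, and increasing (so $\dot a > 0$). First I would handle $\lim_{\tau\to\infty}\tau/a(\tau)=0$. Since $a$ is increasing, for $t\ge \tau/2$ we have $a(t)\ge a(\tau/2)$, but a cleaner route is: convergence of $\int^\infty a^{-1}$ forces the tail $\int_{\tau/2}^{\tau} a(t)^{-1}\,dt \to 0$ as $\tau\to\infty$. Because $a$ is increasing, $\int_{\tau/2}^{\tau} a(t)^{-1}\,dt \ge (\tau/2)\cdot a(\tau)^{-1}$, i.e. $0 \le \tau/(2a(\tau)) \le \int_{\tau/2}^{\tau} a(t)^{-1}\,dt \to 0$, which gives $\tau/a(\tau)\to 0$.

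Next, for $\lim_{\tau\to\infty} 1/\dot a(\tau) = 0$, equivalently $\dot a(\tau)\to\infty$, I would argue as follows. Since $a$ is increasing and $\int^\infty a^{-1} < \infty$, necessarily $a(\tau)\to\infty$ (otherwise $a$ is bounded, say $a\le M$, and $\int^\infty a^{-1}\ge \int^\infty M^{-1} = \infty$). Now I want to leverage the differential structure. Consider $h(t) := t/a(t)$; we just showed $h(t)\to 0$. Then $\dot h = (a - t\dot a)/a^2 = \frac{1}{a}\bigl(1 - t\dot a/a\bigr)$. The quantity $t\dot a/a$ is, up to the factor $t/t_0$-type scaling, controlled by condition \eqref{condition0}: actually a more direct approach is to use that $\int^\infty a^{-1}<\infty$ together with monotonicity of $\dot a$ in the relevant regime. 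I expect the main obstacle here: $\dot a$ need not be monotone in general, so one cannot simply say "$\dot a$ increasing and $a\to\infty$ implies $\dot a \to \infty$." The fix is to invoke condition \eqref{condition0}, which says $a\ddot a \le \dot a^2$, i.e. $\frac{d}{dt}(\dot a / a) = (\ddot a a - \dot a^2)/a^2 \le 0$, so $H(t) = \dot a/a$ is nonincreasing. Combined with $a(\tau)\to\infty$ this still does not immediately force $\dot a\to\infty$; instead I would write $\dot a(\tau) = H(\tau) a(\tau)$ and bound $H$ from below near $\tau$ by its value further out.

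The cleanest argument for the second limit: fix $\tau$ and for $s\ge \tau$ note $a(s)^{-1} \le a(\tau)^{-1}$ is the wrong direction, so instead use $\dot a(s)\ge$ something. Actually, since $H = \dot a/a$ is nonincreasing, for $s \in [\tau, 2\tau]$ we have $\dot a(s) = H(s)a(s) \le H(\tau) a(s)$, hence $\int_\tau^{2\tau} \frac{ds}{a(s)} \ge \int_\tau^{2\tau} \frac{\dot a(s)}{H(\tau) a(s)^2}\,ds = \frac{1}{H(\tau)}\left(\frac{1}{a(\tau)} - \frac{1}{a(2\tau)}\right)$. Since the left side $\to 0$ and, writing $1/a(\tau) - 1/a(2\tau) = (1/a(\tau))(1 - a(\tau)/a(2\tau))$, one can show this does not vanish faster than $1/a(\tau)$ does — this is the delicate point, requiring that $a(2\tau)/a(\tau)$ stays bounded, which again follows from $H$ nonincreasing since $\log a(2\tau) - \log a(\tau) = \int_\tau^{2\tau} H \le \tau H(\tau)$ combined with $\tau/a(\tau)\to 0$. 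Assembling these estimates yields $1/(H(\tau)a(\tau)) = 1/\dot a(\tau) \to 0$. I anticipate the bookkeeping of these nested inequalities to be the fiddly part; everything else is a direct consequence of monotonicity of $a$ and of $H$, the latter being exactly what condition \eqref{condition0} provides.
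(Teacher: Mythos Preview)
Your argument for the first limit, $\tau/a(\tau)\to 0$, is correct and in fact more elementary than the paper's: you bound $\tau/(2a(\tau))\le \int_{\tau/2}^{\tau} a(t)^{-1}\,dt\to 0$ directly from monotonicity of $a$, whereas the paper first shows $\frac{1}{\tau}\int_{t_0}^{\tau}\frac{t}{a(t)}\,dt\to 0$ via dominated convergence and then invokes L'H\^opital to identify this Ces\`aro-type average with $\lim \tau/a(\tau)$.

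For the second limit there is a genuine gap. You explicitly invoke condition~\eqref{condition0}, equivalently that $H=\dot a/a$ is nonincreasing, and your entire chain of estimates (bounding $a(2\tau)/a(\tau)$ through $\log a(2\tau)-\log a(\tau)=\int_\tau^{2\tau}H\le \tau H(\tau)$, etc.) rests on this. But Lemma~\ref{a(t)} is stated in Section~3, \emph{before} regular scale factors are introduced in Definition~\ref{regular}; its only hypothesis is the finiteness of $\int_{t_0}^{\infty}a^{-1}$. So you are importing an assumption that is not available at this point in the paper. The paper's proof of the second limit is a single sentence: once $\tau/a(\tau)\to 0$ is known, L'H\^opital's rule applied to the $\infty/\infty$ quotient $\tau/a(\tau)$ yields $\lim 1/\dot a(\tau)=0$, with no need to control $H$ or ratios like $a(2\tau)/a(\tau)$.
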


\begin{proof}
The second equality follows from the first by L'H\^{o}pital's rule. By the Lebesgue dominated convergence theorem,

\begin{equation}\label{t/a}
\lim_{\tau\rightarrow\infty}\frac{1}{\tau}\int_{t_{0}}^{\tau} \dfrac{t}{a(t)}dt=\lim_{\tau\rightarrow\infty}\int_{t_{0}}^{\infty} \dfrac{t}{\tau}I_{[t_{0},\tau]}(t)\dfrac{1}{a(t)}dt=0,
\end{equation}
where $I_{[t_{0},\tau]}(t)$ is the indicator function for the interval $[t_{0},\tau]$.
Now applying L'H\^{o}pital's rule to the left side of Eq.\eqref{t/a} finishes the proof.
\end{proof}

\begin{remark}\label{log}
The converse to Lemma \ref{a(t)} is false, as illustrated by the scale factor,

$$a(t)=(t+1)\log(t+1),$$
which satisfies Definition \ref{regular} below.  The Robertson-Walker cosmology with this scale factor is inflationary since $\ddot{a}(t)=1/(t+1) >0$, but has no event horizon for comoving observers.
\end{remark}

\begin{corollary}\label{eventhorizoninfl}
If the scale factor $a(t)$ is twice continuously differentiable and $\chi_{\mathrm{horiz}}(t_{0})=\int_{t_{0}}^{\infty}\frac{1}{a(t)}dt<\infty$, for some $t_{0}>0$, then the the Robertson-Walker cosmology with scale factor $a(t)$ has inflationary periods for arbitrarily large cosmological times, that is, for any $N>0$, there exists a non empty open interval $(a,b)$ with $a>N$ such that $\ddot{a}(t)>0$ on $(a,b)$.
\end{corollary}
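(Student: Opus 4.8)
The plan is to argue by contradiction, using Lemma \ref{a(t)} to convert the hypothesis $\chi_{\mathrm{horiz}}(t_{0})<\infty$ into quantitative growth of $\dot a$ that is incompatible with $a$ being concave near infinity. Suppose the conclusion fails: then there is some $N>0$ with $\ddot a(t)\le 0$ for all $t>N$, so $\dot a$ is non-increasing on $(N,\infty)$ and the limit $L:=\lim_{\tau\to\infty}\dot a(\tau)$ exists in $[-\infty,\,\dot a(N^{+})]$; in particular $L<+\infty$. I would also record two standing consequences of the hypotheses: first, since $\int_{t_{0}}^{\infty}dt/a(t)$ converges as an ordinary improper integral, $a$ cannot vanish on $(t_{0},\infty)$, and being continuous and (by the usual sign convention for a scale factor) positive there, $a(t)>0$ for all $t>t_{0}$; second, Lemma \ref{a(t)} gives $\lim_{\tau\to\infty}1/\dot a(\tau)=0$.

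Next I would rule out each possibility for $L$. If $L>0$, then $1/\dot a(\tau)\to 1/L\neq 0$, contradicting Lemma \ref{a(t)}. If $L=0$, then since $\dot a$ decreases to $0$ we have $\dot a(\tau)>0$ for all $\tau>N$, hence $1/\dot a(\tau)\to+\infty\neq 0$, again contradicting Lemma \ref{a(t)}. Finally, if $L<0$ (including $L=-\infty$), pick $M>\max(N,t_{0})$ and $c>0$ with $\dot a(t)\le -c$ for all $t\ge M$; then $a(t)\le a(M)-c(t-M)\to-\infty$, contradicting $a(t)>0$. Every case is impossible, so no such $N$ exists: for each $N>0$ there is a point $t^{*}>N$ with $\ddot a(t^{*})>0$, and since $\ddot a$ is continuous, $\ddot a>0$ on some open interval containing $t^{*}$, which we may take inside $(N,\infty)$. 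This is the required interval.

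The argument is short, and the only mild subtlety is the last case: Lemma \ref{a(t)} by itself yields only $|\dot a(\tau)|\to\infty$, which a priori is consistent with $\dot a(\tau)\to-\infty$, so monotonicity of $\dot a$ (from the assumed eventual concavity) together with positivity of the scale factor is what excludes that branch. I expect this to be the one place calling for care; the rest is immediate.
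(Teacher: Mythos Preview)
Your proof is correct and takes essentially the same approach as the paper's: both invoke Lemma~\ref{a(t)} to show that $\dot a$ must be unbounded above, and then conclude that $\ddot a>0$ somewhere past any given $N$. The paper's version is the terse forward direction (``By Lemma~\ref{a(t)}, $\lim_{t\to\infty}\dot a(t)=\infty$, so the result follows from the Mean Value Theorem''), while yours is the contrapositive with the case analysis on $L=\lim_{\tau\to\infty}\dot a(\tau)$ spelled out---and you are in fact more careful than the paper about excluding the branch $\dot a\to-\infty$.
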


\begin{proof}
By Lemma \ref{a(t)}, $\lim_{t\to\infty}\dot{a}(t)=\infty$, so the result follows from the Mean Value Theorem.
\end{proof}

\section{Regular scale factors}\label{scalecondition}\label{defreg}

We introduce the following definition, which is key to the results of this paper.

\begin{definition}\label{regular} Define the scale factor $a(t):[0,\infty)\rightarrow[0,\infty)$ to be \emph{regular} if: 
\begin{enumerate}
\item [(a)] $a(0)=0$, i.e., the associated cosmological model includes a big bang.
\item[(b)] $a(t)$ is increasing and continuous on $[0,\infty)$ and twice continuously differentiable on $(0,\infty)$, with inverse function $b(t)$ on $[0,\infty)$.
\item[(c)] For all $t>0$,
\begin{equation}\label{condition}
\frac{a(t)\ddot{a}(t)}{\dot{a}(t)^{2}}\leq 1.
\end{equation}
\end{enumerate}
\end{definition}

\noindent Definition \ref{regular} is consistent with the standard $\Lambda$CDM model of cosmology.  Part (c) is equivalent to the requirement that the Hubble parameter, $H(t)=\dot{a}(t)/a(t)$ is a non increasing function of $t$, and may be expressed in the form $q\geq-1$, where, 

\begin{equation}
q=-\frac{a(t)\ddot{a}(t)}{\dot{a}(t)^{2}}
\end{equation}
is referred to as the deceleration parameter.  In terms of the dimensionless density parameters, $\Omega_{M}, \Omega_{R}, \Omega_{\Lambda}$ for mass, radiation (and relativisitic matter), and cosmological constant, respectively, $q$ may expressed as,
\begin{equation}\label{q}
q=\frac{1}{2}(\Omega_{M}-2\Omega_{\Lambda}+2\Omega_{R}).
\end{equation}
Since each of the densities takes values between $0$ and $1$, it follows from Eq.\eqref{q} that $q\geq-1$.  The present value, $q_{0}$, has been measured as $-0.58$ by the Supernova Cosmology Project \cite{weinberg}.\\

\noindent In what follows the assumption of Definition \ref{regular}a is not essential, but its inclusion streamlines the proofs and simplifies statements of results.  For the maximal Fermi charts in de Sitter and anti de Sitter spacetimes, which do not satisfy Definition \ref{regular}, see \cite{CM,KC3}.\\

\noindent The following example is considered further in Remark \ref{finiteuniverse}.

\begin{example}\label{2/3}
Assume a cosmological constant $\Lambda>0$, and curvature parameter $k=0$.  Let the equation of state for a perfect fluid be given by, $p=(\gamma-1)\rho$, where for this example only, $p$ is pressure, $\rho$ is energy density, and $1\leq\gamma\leq2$ is an appropriate constant.  Then it follows  \cite{GP} from the Einstein field equations that,
\begin{equation}\label{lambdamatter}
a(t)= A\left[\sinh\left(\frac{3}{2}\sqrt{\frac{\Lambda}{3}}\,\gamma \,t\right)\right]^{2/3\gamma},
\end{equation}
for a constant $A$.  For a universe, with positive cosmological constant, comprised of matter alone, $\gamma=1$.  For radiation but no matter, $\gamma=4/3$.  It is easy to verify that the scale factors given by Eq.\eqref{lambdamatter} satisfy Definition \ref{regular} for any $\gamma>0$.

\end{example}
\section{Maximal Fermi Coordinate Chart}

In this section we prove that the maximal Fermi coordinate chart for any comoving observer in a Robertson-Walker spacetime with regular scale factor (see Definition \ref{regular}) consists of all events within the cosmological event horizon, if there is one, or is otherwise global.  Exact formulas for the metric coefficients in Fermi coordinates are also given. \\

\noindent  There is a coordinate singularity in Eq.\eqref{frwmetric} at $\chi=0$, but this will not affect what follows.  Consider the submanifold $\mathcal{M}_{\theta_{0},\varphi_0}=\mathcal{M}_{\theta_{0},\varphi_0,k}$ determined by $\theta=\theta_{0}$ and $\varphi=\varphi_{0}$.  The restriction of the metric to $\mathcal{M}_{\theta_{0},\varphi_0}$ is given by, 

\begin{equation}\label{frwmetric2}
ds^{2}=-dt^2+a^{2}(t) d\chi^{2},
\end{equation}  
for which there is no longer a coordinate singularity, and there is no loss of generality in restricting our attention to those spacetime points with space coordinate $\chi\geq0$.\\

\noindent Consider the observer with timelike geodesic path, $\beta(t)=(t,0)$ in $\mathcal{M}_{\theta_{0},\varphi_0}$. Let $\rho$ denote proper length along a spacelike geodesic orthogonal to $\beta $.  Then the vector field,
\begin{equation}
\label{X}
X=\frac{\partial}{\partial \rho}=\frac{d t}{d \rho}\frac{\partial }{\partial t}+\frac{d \chi}{d \rho}\frac{\partial }{\partial \chi}=-\sqrt{\left( \frac{a(\tau )}{a(t)}\right) ^2-1}\,\frac{\partial }{\partial t}+\frac{a(\tau )}{a^2(t)}\frac{\partial }{\partial \chi},
\end{equation}
is geodesic, spacelike, unit, and $X_p$ is orthogonal to the 4-velocity $u=(1,0)$ at the spacetime point $p=(\tau,0)$, i.e. $X_p$ is tangent to $\mathcal{M}_{\tau }$.  
\begin{remark}\label{tdecrease}
It follows from Eq.\eqref{X} that $dt/d\rho <0$ so that the cosmological time coordinate $t$ decreases with proper distance along the spacelike geodesic with initial point $\beta(\tau)$. 
\end{remark}

\noindent Since we assume that the scale factor $a(t)$ is increasing and $a(0)=0$, following \cite{randles}, we can choose as a (non affine) parameter,

\begin{equation}\label{parameter}
\sigma=\left(\frac{a(\tau)}{a(t)}\right)^2\quad\text{with}\quad \sigma\in \left[1,\infty\right),
\end{equation}
for a spacelike geodesic orthogonal to $\beta$, with initial point $\beta(\tau)$.\\

\noindent The following theorem was proved in \cite{randles} under slightly more general hypotheses.

\begin{theorem}\label{general} Let $a(t)$ be a smooth, increasing function of $t$ with $a(0)=0$ and with inverse function $b(t)$.  Then the spacelike geodesic orthogonal to $\beta(t)$ at $t=\tau$ and parametrized by $\sigma$ is given by $\psi_{\tau}(\sigma)=(t(\tau,\sigma),\chi(\tau,\sigma))$ where,

\begin{eqnarray}
t(\tau,\sigma)&=&b\left(\frac{a(\tau)}{\sqrt{\sigma}}\right)\label{thm1}\\
\chi(\tau,\sigma)&=&\frac{1}{2}\int_{1}^{\sigma}\dot{b}\left(\frac{a(\tau)}{\sqrt{\tilde{\sigma}}}\right)\frac{1}{\sqrt{\tilde{\sigma}}\sqrt{\tilde{\sigma}-1}}d\tilde{\sigma},\label{thm2}
\end{eqnarray}
\noindent and where the overdot on $b$ denotes differentiation.  For fixed $\tau$, the arc length $\rho$ along $\psi_{\tau}(\sigma)$ is given by,

\begin{equation}\label{thm3}
\rho=\rho_{\tau}(\sigma)=\frac{a(\tau)}{2}\int_{1}^{\sigma}\dot{b}\left(\frac{a(\tau)}{\sqrt{\tilde{\sigma}}}\right)\frac{1}{\tilde{\sigma}^{3/2}\sqrt{\tilde{\sigma}-1}}d\tilde{\sigma}.
\end{equation}

\end{theorem}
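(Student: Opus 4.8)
The plan is to work entirely on the two-dimensional reduced spacetime $\mathcal{M}_{\theta_0,\varphi_0}$ with line element \eqref{frwmetric2} and to integrate the geodesic equations for a unit spacelike geodesic emanating orthogonally from $\beta(\tau)=(\tau,0)$. First I would write the geodesic system for $ds^2=-dt^2+a^2(t)\,d\chi^2$, parametrized by arc length $\rho$. Since $\chi$ is a cyclic coordinate, the quantity $a^2(t)\,d\chi/d\rho$ is a constant of motion; combining this with the unit-speed normalization $-(dt/d\rho)^2+a^2(t)(d\chi/d\rho)^2=1$ reduces the problem to first order. The orthogonality condition at $p=(\tau,0)$, where the four-velocity is $\partial_t$, forces the initial tangent to be $a(\tau)^{-1}\partial_\chi$, which evaluates the constant of motion to $a(\tau)$ and yields $d\chi/d\rho=a(\tau)/a^2(t)$ and $(dt/d\rho)^2=(a(\tau)/a(t))^2-1$. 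The sign of $dt/d\rho$ is pinned down by the geodesic equation for $t$, namely $d^2t/d\rho^2=-a(t)\dot a(t)(d\chi/d\rho)^2<0$ at $\rho=0$, so that $t$ has a local maximum there and decreases for $\rho>0$; this recovers the vector field $X$ of \eqref{X} and Remark \ref{tdecrease}.

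Next I would change the parameter from $\rho$ to $\sigma=(a(\tau)/a(t))^2$. Because $a$ is strictly increasing and $t$ strictly decreases along the geodesic, $\sigma$ is a strictly increasing function of $\rho$ with $\sigma=1$ at $\beta(\tau)$, so it is a legitimate non-affine reparametrization on $[1,\infty)$. Equation \eqref{thm1} is then immediate from solving $a(t)=a(\tau)/\sqrt\sigma$ for $t$ using the inverse function $b$. For the remaining two formulas I would compute $d\sigma/d\rho$ by the chain rule, substituting $a(t)=a(\tau)/\sqrt\sigma$ together with the identity $\dot a(t)=1/\dot b(a(t))$ (valid since $b=a^{-1}$ and $\dot a>0$), which gives $d\rho/d\sigma=\tfrac12 a(\tau)\,\dot b(a(\tau)/\sqrt\sigma)\,\sigma^{-3/2}(\sigma-1)^{-1/2}$. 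Integrating from $1$ to $\sigma$ with $\rho=0$ at $\sigma=1$ produces \eqref{thm3}; multiplying $d\rho/d\sigma$ by $d\chi/d\rho=a(\tau)/a^2(t)=\sigma/a(\tau)$ and integrating similarly, with $\chi=0$ at $\sigma=1$, produces \eqref{thm2}.

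Finally I would check that the displayed integrals are well defined: near the lower endpoint $\tilde\sigma=1$ each integrand behaves like a constant times $(\tilde\sigma-1)^{-1/2}$, which is integrable, so $\rho_\tau(\sigma)$ and $\chi(\tau,\sigma)$ are finite and continuous for every $\sigma\ge 1$, while smoothness of $a$ and positivity of $\dot a$ make the integrands smooth on $(1,\infty)$. Since the curve $\psi_\tau(\sigma)$ is, by construction, the solution of the geodesic initial value problem with the prescribed orthogonal initial data, uniqueness for the geodesic ODE identifies it with the geodesic in the statement. The only genuinely delicate point is the treatment of the initial instant $\rho=0$, where $dt/d\rho$ vanishes and $\sigma$ has a square-root branch point: this is handled by the second-derivative sign argument above together with the observation that the singularity of $d\rho/d\sigma$ at $\sigma=1$ is integrable, so no information is lost in passing between the $\rho$- and $\sigma$-parametrizations. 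Alternatively, one may simply invoke the more general result of \cite{randles}, of which this is the special case of a comoving observer.
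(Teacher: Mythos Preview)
Your argument is correct and complete: the Killing-vector conservation law $a^2(t)\,d\chi/d\rho=a(\tau)$ together with the unit-speed condition gives exactly the first-order system encoded in \eqref{X}, your second-derivative argument correctly fixes the sign of $dt/d\rho$, and the subsequent change of variables to $\sigma$ produces \eqref{thm1}--\eqref{thm3} by straightforward calculus (I checked the chain-rule computation of $d\rho/d\sigma$ and $d\chi/d\sigma$ and they agree with the stated integrands). The integrability check at $\tilde\sigma=1$ is the right observation to justify evaluating the integrals from the lower limit.

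Note, however, that the paper does not actually supply its own proof of this theorem: it records the geodesic vector field \eqref{X} and the parameter \eqref{parameter} without derivation, and then states Theorem~\ref{general} with the remark that it ``was proved in \cite{randles} under slightly more general hypotheses.'' So your write-up is not reproducing the paper's proof but rather filling in the argument that the paper outsources to \cite{randles}; your final sentence already anticipates this. What you have done is precisely the natural derivation behind \eqref{X} and the subsequent reparametrization, and it matches the setup the paper sketches before invoking the reference.
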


\noindent From Eq.\eqref{thm3} it follows that for a fixed value of $\tau$,  $\rho$ is a smooth, increasing function of  $\sigma$ with a smooth inverse which we denote by,

\begin{equation}\label{inverse}
\sigma_{\tau}(\rho)=\sigma(\rho).
\end{equation}
Combining Eq.\eqref{inverse} with Theorem \ref{general} gives the following corollary.

\begin{corollary}\label{affinegeodesic}
With the hypotheses of Theorem \ref{general}, the spacelike geodesic orthogonal to $\beta(t)$ at $t=\tau$, and parametrized by arc length $\rho$, is given by

\begin{equation}\label{Y(rho)}
\psi_{\tau}(\rho)=(t(\tau,\sigma(\rho)),\chi(\tau,\sigma(\rho))).
\end{equation}
\end{corollary}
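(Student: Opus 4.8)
The corollary is essentially a reparametrization statement, so my plan is to reduce it to the invertibility claim recorded in the remark just before it (Eq.~\eqref{inverse}) and then carry out the substitution. Concretely: once $\rho=\rho_\tau(\sigma)$ of Eq.~\eqref{thm3} is known to be a smooth, strictly increasing bijection of $[1,\infty)$ onto an interval $[0,\rho_{\mathcal{M}_{\tau}})$ (where the radius $\rho_{\mathcal{M}_{\tau}}$ may a priori be $+\infty$), its inverse $\sigma_\tau(\rho)=\sigma(\rho)$ is smooth, and substituting $\sigma=\sigma(\rho)$ into the $\sigma$-parametrization $\psi_\tau(\sigma)=(t(\tau,\sigma),\chi(\tau,\sigma))$ from Theorem~\ref{general} produces a curve that, by the very definition of arc length, is traversed at unit speed in $\rho$ and starts at $\beta(\tau)=\psi_\tau(1)$. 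Since the arc-length reparametrization of a geodesic is again a geodesic --- indeed $X=\partial/\partial\rho$ was already exhibited in Eq.~\eqref{X} as a unit geodesic field tangent to these curves --- this will yield exactly the parametrization asserted in Eq.~\eqref{Y(rho)}.

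It remains to justify the invertibility claim. I would write $f(\tilde\sigma)=\frac{a(\tau)}{2}\,\dot b\!\left(a(\tau)/\sqrt{\tilde\sigma}\right)\tilde\sigma^{-3/2}$, so that the integrand in Eq.~\eqref{thm3} is $f(\tilde\sigma)\,(\tilde\sigma-1)^{-1/2}$. Because $a$ is increasing, its inverse $b$ is increasing with $\dot b>0$ wherever it is defined, so $f$ is smooth and strictly positive on $[1,\infty)$; hence the integrand is positive on $(1,\infty)$ with only an integrable singularity at $\tilde\sigma=1$. This gives $\rho_\tau(1)=0$, shows $\rho_\tau$ is continuous and strictly increasing on $[1,\infty)$, and gives $\rho_\tau'(\sigma)=f(\sigma)(\sigma-1)^{-1/2}>0$ on $(1,\infty)$, so the inverse function theorem furnishes a smooth inverse on the open interval.

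The one delicate point --- and the step I expect to require the most care --- is smoothness of $\sigma(\rho)$ at the endpoint $\rho=0$, where $\rho_\tau'(\sigma)\to\infty$ as $\sigma\to1^{+}$. To handle it I would substitute $\tilde\sigma=1+w^{2}$ in Eq.~\eqref{thm3}, rewriting it as $\rho_\tau(\sigma)=2\int_{0}^{\sqrt{\sigma-1}}f(1+w^{2})\,dw$. The right-hand side is a smooth function of $s=\sqrt{\sigma-1}$ with derivative $2f(1)>0$ at $s=0$, hence a local diffeomorphism near $s=0$; therefore $s=s(\rho)$ is smooth near $\rho=0$ and $\sigma(\rho)=1+s(\rho)^{2}$ is smooth there as well. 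Patching this with the interior smoothness above yields a globally smooth inverse $\sigma:[0,\rho_{\mathcal{M}_{\tau}})\to[1,\infty)$, after which $\psi_\tau(\rho)=(t(\tau,\sigma(\rho)),\chi(\tau,\sigma(\rho)))$ follows by direct substitution into Eqs.~\eqref{thm1}--\eqref{thm2}. Apart from that endpoint check, everything is routine manipulation of the explicit formulas of Theorem~\ref{general}.
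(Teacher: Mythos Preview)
Your proposal is correct and follows the same approach as the paper, which simply asserts (just before the corollary) that $\rho$ is smooth and increasing in $\sigma$ with smooth inverse $\sigma(\rho)$ and then states the corollary as an immediate consequence. Your treatment is in fact more careful than the paper's: the endpoint analysis at $\rho=0$ via the substitution $\tilde\sigma=1+w^{2}$ fills in a smoothness detail that the paper leaves implicit.
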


\begin{remark}\label{4dim}
It follows from symmetry, or by calculation, that in the 4-dimensional Robertson-Walker spacetime $\mathcal{M}$ the unique spacelike geodesic, orthogonal to the observer, $\beta(t)=(t,0,0,0)$ at $t=\tau$, and with fixed angular coordinates $\theta_{0},\phi_{0}$, is given by,
$$\Psi_{\tau}(\rho)=(t(\tau,\sigma(\rho)),\chi(\tau,\sigma(\rho)),\theta_{0},\phi_{0}).$$

\end{remark}

\noindent Let $t_{0}>0$ be arbitrary but fixed, and define a function $\sigma(\tau)$, defined for $\tau\geq t_{0}$, by

\begin{equation}\label{sigma(tau)}
\sigma(\tau)\equiv\left(\frac{a(\tau)}{a(t_0)}\right)^2.
\end{equation}
By Eq.\eqref{thm1}, $\sigma(\tau)$ is the unique value of the parameter $\sigma$ for which $t(\tau,\sigma)=t_{0}$. 

\begin{remark}
Comparing Eqs. \eqref{inverse} and \eqref{sigma(tau)}, in each case the function gives the $\sigma$-parameter of a spacetime point, which, for fixed angular coordinates, may be identified either by its $\tau$ and $\rho$ values, or by $\tau$ and cosmological time $t_{0}$ (see Remark \ref{tdecrease}).
\end{remark}

\noindent The following function will play a key role in what follows.

\begin{equation}\label{key}
\chi_{t_{0}}(\tau)\equiv\chi(\tau,\sigma(\tau))=\frac{1}{2}\int_{1}^{\sigma(\tau)}\dot{b}\left(\frac{a(\tau)}{\sqrt{\tilde{\sigma}}}\right)\frac{1}{\sqrt{\tilde{\sigma}}\sqrt{\tilde{\sigma}-1}}d\tilde{\sigma}.
\end{equation}
As may be seen from Eqs. \eqref{thm1}, and \eqref {thm2}, in geometric terms, $\chi_{t_{0}}(\tau)$ is the value of the $\chi$-coordinate of the spacetime point with $t$-coordinate $t_{0}$ on the spacelike geodesic orthogonal to $\beta$ with initial point $\beta(\tau)$. With the change of variables, $\tilde{\sigma}=\left(a(\tau)/a(t)\right)^{2}$ (with $\tau$ held fixed), Eq.\eqref{key} becomes,

\begin{equation}\label{key2}
\chi_{t_{0}}(\tau)=\int_{t_{0}}^{\tau}\frac{1}{a(t)}\frac{a(\tau)}{\sqrt{a^{2}(\tau)-a^{2}(t)}}\,dt,
\end{equation}
and similarly Eq.\eqref{thm3} may be expressed as,
\begin{equation}\label{properAlt}
\rho=\int_{t_{0}}^{\tau}\frac{a(t)}{\sqrt{a^{2}(\tau)-a^{2}(t)}}\,dt.
\end{equation}

\begin{definition}\label{radiusMtau}
Denote the proper radius of the Fermi spaceslice of $\tau$-simultaneous events, ${\mathcal{M}_{\tau}}$, by $\rho_{\mathcal{M}_{\tau}}$, i.e., let,

\begin{equation}\label{thm3'}
\rho_{\mathcal{M}_{\tau}}=\frac{a(\tau)}{2}\int_{1}^{\infty}\dot{b}\left(\frac{a(\tau)}{\sqrt{\sigma}}\right)\frac{d\sigma}{\sigma^{3/2}\sqrt{\sigma-1}}=\int_{0}^{\tau}\frac{a(t)\,dt}{\sqrt{a^{2}(\tau)-a^{2}(t)}}.
\end{equation}
\end{definition}

\noindent  By Theorem \ref{ineqrho} below, $\rho_{\mathcal{M}_{\tau}}<\infty$ for a regular scale factor.

\begin{lemma}\label{limit}
If $a(t)$ is regular, and $\tau\geq t_{0}$, then,
\begin{equation}\label{tough}
0\leq\chi_{t_{0}}(\tau)-\int_{t_{0}}^{\tau}\frac{1}{a(t)}dt\,< \,\frac{1}{\dot{a}(\tau)}
\end{equation}
\end{lemma}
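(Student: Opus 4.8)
The plan is to estimate the difference
\[
D(\tau)\;\equiv\;\chi_{t_{0}}(\tau)-\int_{t_{0}}^{\tau}\frac{1}{a(t)}\,dt
\;=\;\int_{t_{0}}^{\tau}\frac{1}{a(t)}\!\left(\frac{a(\tau)}{\sqrt{a^{2}(\tau)-a^{2}(t)}}-1\right)dt ,
\]
using the representation \eqref{key2}. The integrand is manifestly nonnegative since $a(\tau)/\sqrt{a^{2}(\tau)-a^{2}(t)}\ge 1$ for $t_{0}\le t<\tau$, which immediately gives the left-hand inequality $D(\tau)\ge 0$. The work is entirely in the upper bound.

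First I would rewrite the integrand in a form that exposes a total derivative. Note that
\[
\frac{a(\tau)}{\sqrt{a^{2}(\tau)-a^{2}(t)}}-1
=\frac{a(\tau)-\sqrt{a^{2}(\tau)-a^{2}(t)}}{\sqrt{a^{2}(\tau)-a^{2}(t)}}
=\frac{a^{2}(t)}{\sqrt{a^{2}(\tau)-a^{2}(t)}\left(a(\tau)+\sqrt{a^{2}(\tau)-a^{2}(t)}\right)} ,
\]
so that
\[
D(\tau)=\int_{t_{0}}^{\tau}\frac{a(t)}{\sqrt{a^{2}(\tau)-a^{2}(t)}}\cdot\frac{1}{a(\tau)+\sqrt{a^{2}(\tau)-a^{2}(t)}}\,dt .
\]
The plan is then to bound the second factor crudely by $1/a(\tau)$ when convenient, but that alone gives $\rho/a(\tau)$ via \eqref{properAlt}, which is not obviously $<1/\dot a(\tau)$ without using regularity. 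So instead I would integrate by parts in the variable $a(t)$: the key observation is that $\dfrac{d}{dt}\!\left[-\sqrt{a^{2}(\tau)-a^{2}(t)}\right]=\dfrac{a(t)\dot a(t)}{\sqrt{a^{2}(\tau)-a^{2}(t)}}$, so
\[
D(\tau)=\int_{t_{0}}^{\tau}\frac{1}{\dot a(t)\left(a(\tau)+\sqrt{a^{2}(\tau)-a^{2}(t)}\right)}\;d\!\left(-\sqrt{a^{2}(\tau)-a^{2}(t)}\right).
\]
Here the regularity condition \eqref{condition} enters through its equivalent form: $H(t)=\dot a(t)/a(t)$ is nonincreasing, hence $\dot a(t)\ge a(t)H(\tau)=a(t)\dot a(\tau)/a(\tau)$ for $t\le\tau$; equivalently $1/\dot a(t)\le a(\tau)/(a(t)\dot a(\tau))$. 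I would use this to dominate $1/\dot a(t)$ and also use $a(\tau)+\sqrt{a^{2}(\tau)-a^{2}(t)}\ge a(\tau)$, reducing the estimate to something like
\[
D(\tau)\;\le\;\frac{1}{\dot a(\tau)}\int_{t_{0}}^{\tau}\frac{1}{a(t)}\cdot\frac{a(t)\,d(-\sqrt{\cdots})}{a(\tau)}\;=\;\frac{1}{a(\tau)\dot a(\tau)}\Big(\sqrt{a^{2}(\tau)-a^{2}(t_{0})}\Big),
\]
and since $\sqrt{a^{2}(\tau)-a^{2}(t_{0})}<a(\tau)$ this yields exactly $D(\tau)<1/\dot a(\tau)$. (I would double-check the bookkeeping: the monotonicity $1/\dot a(t)\le a(\tau)/(a(t)\dot a(\tau))$ must be applied before, not after, pulling constants out, and the boundary term from the integration by parts must be checked to vanish or contribute with the right sign at $t=\tau$, where $\sqrt{a^{2}(\tau)-a^{2}(t)}\to 0$.)

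The main obstacle I anticipate is handling the near-endpoint behavior at $t=\tau$, where the factor $1/\sqrt{a^{2}(\tau)-a^{2}(t)}$ is singular; the integration-by-parts trick is precisely designed to absorb this singularity into an exact antiderivative, but one must confirm integrability (it is, since $a(t)$ is $C^{2}$ with $\dot a(\tau)>0$, so $a^{2}(\tau)-a^{2}(t)\sim 2a(\tau)\dot a(\tau)(\tau-t)$ near $t=\tau$) and that no boundary contribution is lost. The secondary subtlety is ensuring the regularity hypothesis is invoked in the sharp direction — i.e., using $H$ nonincreasing to compare $\dot a(t)$ with $\dot a(\tau)$ through the ratio $a(\tau)/a(t)$, rather than some weaker bound that would only give a constant multiple of $1/\dot a(\tau)$. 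Once the substitution $\sigma$ or $u=a(t)/a(\tau)$ is made the whole computation should collapse to an elementary integral, so after the setup the remainder is routine.
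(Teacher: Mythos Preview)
Your setup is right and tracks the paper closely: the nonnegativity is immediate, you correctly rationalize $D(\tau)$ to
\[
D(\tau)=\int_{t_0}^{\tau}\frac{a(t)}{\sqrt{a^{2}(\tau)-a^{2}(t)}}\cdot\frac{dt}{a(\tau)+\sqrt{a^{2}(\tau)-a^{2}(t)}},
\]
and you correctly identify that regularity enters as monotonicity of $a/\dot a$, i.e.\ $1/\dot a(t)\le a(\tau)/\bigl(a(t)\dot a(\tau)\bigr)$.

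The gap is in the upper bound. If you apply \emph{both} estimates you propose --- the regularity bound on $1/\dot a(t)$ \emph{and} the crude $a(\tau)+\sqrt{\,\cdot\,}\ge a(\tau)$ --- the integrand becomes $\dfrac{1}{a(t)\,\dot a(\tau)}\,d\bigl(-\sqrt{a^{2}(\tau)-a^{2}(t)}\bigr)$, not $\dfrac{1}{a(\tau)\,\dot a(\tau)}\,d\bigl(-\sqrt{\,\cdot\,}\bigr)$ as your display asserts. The $a(\tau)$ in the numerator of the regularity bound cancels the $a(\tau)$ in the denominator of the second bound, leaving $1/a(t)$; there is no further $a(t)$ available to cancel it, since the $a(t)$ hidden in $d(-\sqrt{\,\cdot\,})=\tfrac{a(t)\dot a(t)}{\sqrt{\,\cdot\,}}\,dt$ has already been consumed in forming that differential. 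The honest outcome of your two bounds is therefore
\[
D(\tau)\;\le\;\frac{1}{\dot a(\tau)}\int_{t_0}^{\tau}\frac{\dot a(t)\,dt}{\sqrt{a^{2}(\tau)-a^{2}(t)}}
\;=\;\frac{1}{\dot a(\tau)}\arccos\frac{a(t_0)}{a(\tau)},
\]
which can be as large as $\tfrac{\pi}{2\dot a(\tau)}$ and so does \emph{not} yield $D(\tau)<1/\dot a(\tau)$ once $a(t_0)/a(\tau)<\cos 1$.

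The repair is simply to omit the second crude bound. Apply only $a(t)/\dot a(t)\le a(\tau)/\dot a(\tau)$, keep the full factor $\bigl(a(\tau)+\sqrt{a^{2}(\tau)-a^{2}(t)}\bigr)^{-1}$, and substitute $x=a(t)/a(\tau)$. You arrive at
\[
D(\tau)\;\le\;\frac{1}{\dot a(\tau)}\int_{a(t_0)/a(\tau)}^{1}\frac{dx}{\sqrt{1-x^{2}}\bigl(1+\sqrt{1-x^{2}}\bigr)},
\]
whose antiderivative is $x/\bigl(1+\sqrt{1-x^{2}}\bigr)$; the integral evaluates to a quantity strictly less than~$1$. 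This is exactly the paper's argument. Your recognition of the exact differential $d(-\sqrt{\,\cdot\,})$ is equivalent repackaging, but the factor $1+\sqrt{1-x^{2}}$ in the denominator is what buys the sharp constant and must not be discarded.
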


\begin{proof}
From Eq.\eqref{key2},

\begin{equation}\label{long}
\begin{split}
0\leq&\,\chi_{t_{0}}(\tau)-\int_{t_{0}}^{\tau}\frac{1}{a(t)}dt= \int_{t_{0}}^{\tau}\frac{1}{a(t)}\frac{a(\tau)}{\sqrt{a^{2}(\tau)-a^{2}(t)}}\,dt -\int_{t_{0}}^{\tau}\frac{1}{a(t)}dt\\ 
=&\int_{t_{0}}^{\tau}\frac{1}{a(t)}\left[\frac{a(\tau)}{\sqrt{a^{2}(\tau)-a^{2}(t)}}-1\right]dt=\int_{t_{0}}^{\tau}\frac{1}{a(t)}\left[\frac{1-\sqrt{1-\frac{a^{2}(t)}{a^{2}(\tau)}}}{\sqrt{1-\frac{a^{2}(t)}{a^{2}(\tau)}}}\right]dt
\\
=&\frac{1}{a^{2}(\tau)}\int_{t_{0}}^{\tau}\frac{a(t)}{\sqrt{1-\frac{a^{2}(t)}{a^{2}(\tau)}}\left(1+\sqrt{1-\frac{a^{2}(t)}{a^{2}(\tau)}}\right)}dt\\
=&\frac{1}{a(\tau)}\int_{t_{0}}^{\tau}\frac{a(t)}{\dot{a}(t)}\frac{\dot{a}(t)/a(\tau)}{\sqrt{1-\frac{a^{2}(t)}{a^{2}(\tau)}}\left(1+\sqrt{1-\frac{a^{2}(t)}{a^{2}(\tau)}}\right)}dt\\
\leq&\frac{1}{a(\tau)}\frac{a(\tau)}{\dot{a}(\tau)}\int_{t_{0}}^{\tau}\frac{\dot{a}(t)/a(\tau)}{\sqrt{1-\frac{a^{2}(t)}{a^{2}(\tau)}}\left(1+\sqrt{1-\frac{a^{2}(t)}{a^{2}(\tau)}}\right)}dt,\\
\end{split}
\end{equation}
where in the last step, we have used the fact that the Hubble parameter, $H(t)$, is a decreasing function of $t$, so that $a(t)/\dot{a}(t)$ is increasing.  To evaluate this last integral, we make the change of variable, $x=a(t)/a(\tau)$, which yields,

\begin{equation}\label{integral}
\begin{split}
\chi_{t_{0}}(\tau)-\int_{t_{0}}^{\tau}\frac{1}{a(t)}dt&\leq\frac{1}{\dot{a}(\tau)}\int_{\frac{a(t_{0})}{a(\tau)}}^{1}\frac{dx}{\sqrt{1-x^{2}}(1+\sqrt{1-x^{2}})}\\
&=\frac{1}{\dot{a}(\tau)}\left[1-\left(\frac{a(\tau)}{a(t_{0})}- \sqrt{\frac{a^{2}(\tau)}{a^{2}(t_{0})}-1}  \right) \right]\\
&< \frac{1}{\dot{a}(\tau)}.
\end{split}
\end{equation}
\end{proof}

\begin{corollary}\label{bound}
For a regular scale factor $a(t)$, and any $\tau\geq t_{0}>0$,
\begin{equation}
\chi_{t_{0}}(\tau) < \chi_{\mathrm{horiz}}(t_{0}).
\end{equation}
\end{corollary}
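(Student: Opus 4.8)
The plan is to combine Lemma \ref{limit} with the definition of $\chi_{\mathrm{horiz}}$ and the basic monotonicity properties of the scale factor guaranteed by regularity. Writing $F(\tau)\equiv \chi_{t_0}(\tau)$, Lemma \ref{limit} gives the two-sided estimate
\begin{equation*}
0\le F(\tau)-\int_{t_0}^{\tau}\frac{1}{a(t)}\,dt<\frac{1}{\dot a(\tau)},
\end{equation*}
so it suffices to show that the right-hand member of this chain stays below $\chi_{\mathrm{horiz}}(t_0)=\int_{t_0}^{\infty}\frac{1}{a(t)}\,dt$. That is, I want
\begin{equation*}
\int_{t_0}^{\tau}\frac{1}{a(t)}\,dt+\frac{1}{\dot a(\tau)}\ \le\ \int_{t_0}^{\infty}\frac{1}{a(t)}\,dt,
\end{equation*}
or equivalently $\frac{1}{\dot a(\tau)}\le\int_{\tau}^{\infty}\frac{1}{a(t)}\,dt$.

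The key observation is that regularity (Definition \ref{regular}(c)) says $H(t)=\dot a(t)/a(t)$ is non-increasing, hence for every $t\ge\tau$ we have $\dot a(t)/a(t)\le\dot a(\tau)/a(\tau)$, i.e. $a(t)\ge a(\tau)\dot a(t)/\dot a(\tau)$. Therefore
\begin{equation*}
\int_{\tau}^{\infty}\frac{1}{a(t)}\,dt\ \ge\ \frac{1}{\dot a(\tau)}\int_{\tau}^{\infty}\frac{\dot a(t)}{a(t)}\cdot\frac{a(\tau)}{\dot a(\tau)}\,dt\cdot\frac{\dot a(\tau)}{a(\tau)}
\end{equation*}
— cleaner to argue directly: since $a$ is increasing, $1/a(t)\le 1/a(\tau)$ is the wrong direction, so instead use $a(t)\le a(\tau)\dot a(t)/\dot a(\tau)$? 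No — the inequality $H(t)\le H(\tau)$ gives $a(t)\dot a(\tau)\ge$ nothing immediately useful by itself. The right route: from $H$ non-increasing, $\frac{d}{dt}\log a(t)=H(t)\le H(\tau)$ for $t\ge\tau$ is false in general direction too. Let me instead exploit that $\dot a$ could be handled via the substitution used already in Lemma \ref{limit}. Actually the clean fact is: $a(t)/\dot a(t)$ is non-decreasing (this is exactly what "the Hubble parameter is decreasing" means and is used verbatim in the proof of Lemma \ref{limit}), so for $t\ge\tau$, $a(t)/\dot a(t)\ge a(\tau)/\dot a(\tau)$, hence $1/a(t)\le \dot a(t)/(\dot a(\tau) a(\tau))\cdot$ — still not matching. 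The genuinely useful bound is the reverse: I need a lower bound on $\int_\tau^\infty 1/a$, so I want an upper bound on $a(t)$; but $a$ is increasing and typically unbounded, so no uniform upper bound exists.

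The resolution is that the corollary is actually immediate without needing $\frac{1}{\dot a(\tau)}\le\int_\tau^\infty\frac{1}{a}$: since $\chi_{t_0}(\tau)$ is the $\chi$-coordinate of a point on a spacelike geodesic that, by Remark \ref{tdecrease}, has $t$-coordinate decreasing from $\tau$ down to $t_0$, and $\chi_{\mathrm{horiz}}(t_0)=\int_{t_0}^\infty 1/a$ is the comoving radius of the event horizon at time $t_0$ — so the clean argument is the monotone-convergence one. Concretely: by Lemma \ref{limit}, $\chi_{t_0}(\tau)\le\int_{t_0}^\tau\frac1a\,dt+\frac{1}{\dot a(\tau)}$; but I claim $\int_{t_0}^\tau\frac1a\,dt+\frac1{\dot a(\tau)}<\int_{t_0}^\infty\frac1a\,dt$ follows because $\frac1{\dot a(\tau)}=\int_\tau^\infty\frac{\ddot a(s)}{\dot a(s)^2}\,ds$ when $\dot a(s)\to\infty$ — but that sign can be negative. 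I will therefore instead argue: regularity plus convergence of $\chi_{\mathrm{horiz}}$ forces (Lemma \ref{a(t)}) $1/\dot a(\tau)\to 0$, and more is true — the stronger pointwise bound $\frac1{\dot a(\tau)}\le\int_\tau^\infty\frac1{a(t)}\,dt$ does hold, via: $a(t)/\dot a(t)\ge a(\tau)/\dot a(\tau)$ for $t\ge\tau$ gives $\dot a(t)\le \frac{\dot a(\tau)}{a(\tau)}a(t)$, hence $\frac{1}{a(t)}\ge\frac{\dot a(t)}{\dot a(\tau)a(t)}\cdot\frac{a(\tau)}{a(t)}$ — the term $a(\tau)/a(t)\le1$ ruins it.

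Let me state the plan honestly as what I expect the author does. The approach: apply Lemma \ref{limit} to get $\chi_{t_0}(\tau)<\int_{t_0}^\tau\frac1{a(t)}\,dt+\frac1{\dot a(\tau)}$. Then show the right side is $\le\int_{t_0}^\infty\frac1{a(t)}\,dt=\chi_{\mathrm{horiz}}(t_0)$ by proving $\frac1{\dot a(\tau)}\le\int_\tau^\infty\frac1{a(t)}\,dt$. For this last inequality I would use that $H$ is non-increasing in the form: $\frac{d}{dt}\!\big(a(t)e^{-H(\tau)(t-\tau)}\big)=\big(H(t)-H(\tau)\big)a(t)e^{-H(\tau)(t-\tau)}\le0$ for $t\ge\tau$, so $a(t)\le a(\tau)e^{H(\tau)(t-\tau)}$; this is the wrong direction again, giving a lower bound $\int_\tau^\infty\frac1a\,dt\ge\frac{1}{a(\tau)}\int_\tau^\infty e^{-H(\tau)(t-\tau)}\,dt=\frac{1}{a(\tau)H(\tau)}=\frac{1}{\dot a(\tau)}$ — and that is exactly what is needed! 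So: $a(t)\le a(\tau)e^{H(\tau)(t-\tau)}$ for $t\ge\tau$ (from $H$ non-increasing), whence $\int_\tau^\infty\frac{dt}{a(t)}\ge\frac{1}{a(\tau)}\int_\tau^\infty e^{-H(\tau)(t-\tau)}\,dt=\frac{1}{a(\tau)H(\tau)}=\frac{1}{\dot a(\tau)}$, and combining with Lemma \ref{limit} finishes the proof. The main obstacle is getting the direction of the exponential comparison right; once one notices that $H$ non-increasing yields the upper bound $a(t)\le a(\tau)e^{H(\tau)(t-\tau)}$ on the relevant half-line $t\ge\tau$, the rest is a one-line integration, and convergence of $\chi_{\mathrm{horiz}}(t_0)$ guarantees all the integrals involved are finite so the manipulations are legitimate.
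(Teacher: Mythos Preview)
Your final argument is correct. After invoking Lemma~\ref{limit}, the crux is indeed the inequality $\dfrac{1}{\dot a(\tau)}\le\displaystyle\int_\tau^\infty\dfrac{dt}{a(t)}$, and your Gronwall-type comparison establishes it cleanly: from $H(t)\le H(\tau)$ for $t\ge\tau$ one integrates $\frac{d}{dt}\log a(t)\le H(\tau)$ to obtain $a(t)\le a(\tau)e^{H(\tau)(t-\tau)}$, whence $\displaystyle\int_\tau^\infty\dfrac{dt}{a(t)}\ge\dfrac{1}{a(\tau)H(\tau)}=\dfrac{1}{\dot a(\tau)}$.

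The paper reaches the same key inequality by a different device: it rewrites $\dfrac{1}{\dot a(\tau)}$ as $\displaystyle\int_\tau^\infty\dfrac{\ddot a(t)}{\dot a(t)^{2}}\,dt$ (this relies on Lemma~\ref{a(t)} to guarantee $1/\dot a(T)\to0$ when $\chi_{\mathrm{horiz}}(t_0)<\infty$, the case $\chi_{\mathrm{horiz}}(t_0)=\infty$ being handled trivially) and then applies Definition~\ref{regular}(c) pointwise in the form $\ddot a/\dot a^{2}\le 1/a$. Your exponential-comparison route has the modest advantage of working uniformly, with no appeal to Lemma~\ref{a(t)} and no finite/infinite case split; the paper's route stays closer to the raw inequality~(c) and avoids the exponential detour. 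Both are short and equivalent in strength. Your write-up would benefit from deleting the several abandoned attempts and presenting only the final successful line of reasoning.
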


\begin{proof}
If $ \chi_{\mathrm{horiz}}(t_{0})<\infty$, then by Lemmas \ref{a(t)} and \ref{limit}, 
\begin{equation}\label{tough2}
\begin{split}
\chi_{t_{0}}(\tau)&< \int_{t_{0}}^{\tau}\frac{1}{a(t)}dt+\frac{1}{\dot{a}(\tau)}\\
&\leq  \int_{t_{0}}^{\tau}\frac{1}{a(t)}dt +  \int_{\tau}^{\infty}\frac{\ddot{a}(t)}{\dot{a}(t)^{{2}}}dt\\
&\leq  \int_{t_{0}}^{\tau}\frac{1}{a(t)}dt +  \int_{\tau}^{\infty}\frac{1}{a(t)}dt\\
&= \chi_{\mathrm{horiz}}(t_{0}),
\end{split}
\end{equation}
where in the third line, we used Definition \ref{regular}(c).
\end{proof}

\begin{corollary}\label{horizon}
For a regular scale factor $a(t)$, and any $t_{0}>0$,
\begin{equation}
\lim_{\tau\to\infty}\chi_{t_{0}}(\tau) = \chi_{\mathrm{horiz}}(t_{0}).
\end{equation}
\end{corollary}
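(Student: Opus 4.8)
The plan is to combine the upper bound from Corollary \ref{bound} with a matching lower bound, and then take $\tau\to\infty$. We already know from Corollary \ref{bound} that $\chi_{t_0}(\tau)<\chi_{\mathrm{horiz}}(t_0)$ for every $\tau\geq t_0$, so $\limsup_{\tau\to\infty}\chi_{t_0}(\tau)\leq\chi_{\mathrm{horiz}}(t_0)$ (this holds trivially, and with content, in the case $\chi_{\mathrm{horiz}}(t_0)<\infty$; when $\chi_{\mathrm{horiz}}(t_0)=\infty$ there is nothing to bound from above). So the real work is the lower bound.

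First I would use Lemma \ref{limit}, which gives the two-sided estimate $0\leq\chi_{t_0}(\tau)-\int_{t_0}^{\tau}\frac{1}{a(t)}\,dt<\frac{1}{\dot a(\tau)}$. The left inequality already yields
\begin{equation*}
\chi_{t_0}(\tau)\geq\int_{t_0}^{\tau}\frac{1}{a(t)}\,dt,
\end{equation*}
and letting $\tau\to\infty$ gives $\liminf_{\tau\to\infty}\chi_{t_0}(\tau)\geq\int_{t_0}^{\infty}\frac{1}{a(t)}\,dt=\chi_{\mathrm{horiz}}(t_0)$. Combined with the upper bound this pins the limit at $\chi_{\mathrm{horiz}}(t_0)$ when that quantity is finite, and forces $\chi_{t_0}(\tau)\to\infty=\chi_{\mathrm{horiz}}(t_0)$ when the integral diverges. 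In either case the limit exists (monotonicity is not even needed, since the squeeze is immediate) and equals $\chi_{\mathrm{horiz}}(t_0)$.

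I do not expect a serious obstacle here: the corollary is essentially a squeeze argument, and both inequalities have already been established. The only point requiring a word of care is the bookkeeping between the finite and infinite cases for $\chi_{\mathrm{horiz}}(t_0)$ — I would simply state that when $\int_{t_0}^{\infty}\frac{1}{a(t)}\,dt=\infty$, the lower bound $\chi_{t_0}(\tau)\geq\int_{t_0}^{\tau}\frac{1}{a(t)}\,dt\to\infty$ already gives the conclusion, and when it is finite, Lemma \ref{limit} together with Lemma \ref{a(t)} (which gives $1/\dot a(\tau)\to 0$, using that a finite horizon integral implies $\dot a(\tau)\to\infty$) shows the gap $\chi_{t_0}(\tau)-\int_{t_0}^{\tau}\frac{1}{a(t)}\,dt\to 0$, so that $\chi_{t_0}(\tau)$ and $\int_{t_0}^{\tau}\frac{1}{a(t)}\,dt$ have the same (finite) limit, namely $\chi_{\mathrm{horiz}}(t_0)$.
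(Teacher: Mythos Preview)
Your proposal is correct and matches the paper's proof essentially line for line: the paper also splits into the two cases, using the left inequality of Lemma~\ref{limit} alone when $\chi_{\mathrm{horiz}}(t_0)=\infty$, and combining both inequalities of Lemma~\ref{limit} with Lemma~\ref{a(t)} (to get $1/\dot a(\tau)\to 0$) when $\chi_{\mathrm{horiz}}(t_0)<\infty$. Your initial invocation of Corollary~\ref{bound} for the upper bound is a minor redundancy, since the squeeze from Lemma~\ref{limit} already handles it, but it does no harm.
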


\begin{proof}
If $\chi_{\mathrm{horiz}}(t_{0}) = \int_{t_{0}}^{\infty}\frac{1}{a(t)}dt < \infty$, the result follows directly from Lemmas \ref{a(t)} and \ref{limit}. If $ \int_{t_{0}}^{\infty}\frac{1}{a(t)}dt = \infty$, the result follows from the first inequality of Eq.\eqref{tough}.
\end{proof}

\begin{lemma}\label{increase}
If $a(t)$ is regular, then for all $\tau>t_{0}>0$,

\begin{equation}
\dfrac{d\chi_{t_{0}}}{d\tau}>0.
\end{equation}
\end{lemma}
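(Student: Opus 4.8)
The plan is to differentiate $\chi_{t_0}(\tau)$ directly, but only after recasting it as a function of $A:=a(\tau)$ over an integration contour with \emph{fixed} endpoints, so that the Leibniz rule applies without fuss. Starting from Eq.~\eqref{key2} and substituting $u=a(t)$ (legitimate since regularity forces $\dot a>0$ on $(0,\infty)$: $H=\dot a/a$ is positive and non-increasing), one obtains $\chi_{t_0}=A\int_{a_0}^{A}\dot b(u)\,[u\sqrt{A^2-u^2}\,]^{-1}\,du$ with $a_0:=a(t_0)$, and the further substitution $u=A\sin\psi$ removes the endpoint singularity, giving $\chi_{t_0}=\int_{\psi_0(A)}^{\pi/2}\dot b(A\sin\psi)/\sin\psi\;d\psi$ with $\psi_0(A)=\arcsin(a_0/A)$. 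The integrand is now continuous and continuously differentiable in $A$ on the \emph{compact} $\psi$-interval ($\ddot b$ is continuous since $a\in C^2$ and $\dot a>0$), so differentiating under the integral sign and changing back to the variable $u$ is routine and yields
\[
\frac{d\chi_{t_0}}{dA}=\frac{1}{\dot a(t_0)\sqrt{A^2-a_0^{2}}}+\int_{a_0}^{A}\frac{\ddot b(u)}{\sqrt{A^2-u^2}}\,du,
\]
where the first term is the contribution of the moving lower limit $\psi_0(A)$.

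The heart of the matter is showing this is positive, and this is where Definition~\ref{regular}(c) enters — but only through the clean consequence that $m(u):=u\,\dot b(u)$ is non-decreasing (indeed $m(a(t))=a(t)/\dot a(t)=1/H(t)$ and $H$ is non-increasing, while $b$ is increasing). Writing $\ddot b(u)=m'(u)/u-m(u)/u^{2}$ and integrating the $m(u)/u^{2}$ piece by parts against the antiderivative $V(u)=-\sqrt{A^2-u^2}/(A^2 u)$ of $(u^{2}\sqrt{A^2-u^2})^{-1}$ — crucially $V(A)=0$, so there is no bad boundary term at the singular endpoint — the two resulting $m'$-integrals combine through the identity $\tfrac{1}{\sqrt{A^2-u^2}}-\tfrac{\sqrt{A^2-u^2}}{A^2}=\tfrac{u^{2}}{A^2\sqrt{A^2-u^2}}$, and the two $m(a_0)$ boundary contributions combine analogously. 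Using $m(a_0)=a_0/\dot a(t_0)$ the whole expression collapses to
\[
\frac{d\chi_{t_0}}{dA}=\frac{a_0^{2}}{\dot a(t_0)\,A^{2}\sqrt{A^2-a_0^{2}}}+\frac{1}{A^{2}}\int_{a_0}^{A}\frac{u\,m'(u)}{\sqrt{A^2-u^2}}\,du,
\]
which is strictly positive because $m'\ge 0$. Since $A=a(\tau)$ with $dA/d\tau=\dot a(\tau)>0$, the chain rule gives $d\chi_{t_0}/d\tau=(d\chi_{t_0}/dA)\,\dot a(\tau)>0$, proving the lemma.

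I expect the main obstacle to be not the algebra but making the differentiation past the singularity of the integrand in Eq.~\eqref{key2} at $t=\tau$ rigorous; that is precisely why I would insist on the $u=A\sin\psi$ reduction to a compact interval \emph{before} differentiating, rather than attempting to differentiate Eq.~\eqref{key2} as written. A secondary subtlety worth recording is that the obvious shortcut — bounding $\ddot b(u)\ge -\dot b(u)/u$ pointwise by condition~(c) and comparing the integral term with $\chi_{t_0}$ itself — is provably too lossy: it only reduces the claim to $\chi_{t_0}(\tau)<a(\tau)\big/[\dot a(t_0)\sqrt{a^2(\tau)-a^2(t_0)}\,]$, an inequality that fails in general. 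The integration-by-parts identity above is therefore genuinely needed, because it preserves the favorable $m'$-term instead of discarding it through a crude estimate.
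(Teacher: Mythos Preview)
Your proof is correct and takes a genuinely different route from the paper's. Both arguments arrive at the same derivative formula (your display for $d\chi_{t_0}/dA$ is Eq.~\eqref{dchi4}--\eqref{dchi5} after the change of variables $u=a(t)$), but the paper then proceeds indirectly: it drops the factor $a(\tau)/\sqrt{a^2(\tau)-a^2(t_0)}$ to $1$ in the boundary term, uses the regularity inequality $1/a\ge \ddot a/\dot a^{2}$ pointwise on the integrand, rearranges, and finally invokes Lemma~\ref{limit} to conclude positivity. Your argument is self-contained: the integration by parts against $V(u)=-\sqrt{A^2-u^2}/(A^2u)$ converts the expression into an \emph{identity} with manifestly nonnegative terms, so no appeal to Lemma~\ref{limit} is needed. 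The payoff of your approach is a sharper statement --- an exact closed form for $d\chi_{t_0}/dA$ as a positive boundary term plus a nonnegative $m'$-integral --- which could in principle be reused for quantitative estimates; the paper's approach, by contrast, ties the lemma into the web of inequalities surrounding $\chi_{t_0}$ and $\chi_{\mathrm{horiz}}$, which is natural given how Eq.~\eqref{dchi4} feeds into the Jacobian computation in Lemma~\ref{lem1}. Your $u=A\sin\psi$ reduction to a compact interval is also a cleaner justification of the differentiation step than the paper's appeal to dominated convergence on Eq.~\eqref{key}.
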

\begin{proof}
Leibniz' rule together with the Dominated Convergence theorem applied to Eq.\eqref{key} (and using Eq.\eqref{sigma(tau)}) show that,
\begin{equation}\label{dchi4}
\begin{split}
\frac{d\chi_{t_{0}}}{d\tau}&=\dot{a}(\tau)\left[\frac{\dot{b}(a(t_{0}))a(\tau)}{a^{2}(t_{0})\sqrt{\sigma(\tau)}\sqrt{\sigma(\tau)-1}}+\frac{1}{2}\int_{1}^{\sigma(\tau)}\ddot{b}\left(\frac{a(\tau)}{\sqrt{\tilde{\sigma}}}\right)\frac{1}{\tilde{\sigma}\sqrt{\tilde{\sigma}-1}}d\tilde{\sigma}\right]\\
&=\dot{a}(\tau)\left[\frac{1}{\dot{a}(t_{0})\sqrt{a^{2}(\tau)-a^{2}(t_{0})}}+\frac{1}{2}\int_{1}^{\sigma(\tau)}\ddot{b}\left(\frac{a(\tau)}{\sqrt{\tilde{\sigma}}}\right)\frac{1}{\tilde{\sigma}\sqrt{\tilde{\sigma}-1}}d\tilde{\sigma}\right].
\end{split}
\end{equation}
The change of variable, $\tilde{\sigma}=\left(a(\tau)/a(t)\right)^{2}$ (with $\tau$ held fixed) for the integral gives,
\begin{equation}\label{dchi5}
\frac{d\chi_{t_{0}}}{d\tau}=\frac{\dot{a}(\tau)}{a(\tau)}\left[\frac{a(\tau)}{\dot{a}(t_{0})\sqrt{a^{2}(\tau)-a^{2}(t_{0})}}-\int_{t_{0}}^{\tau}\frac{\ddot{a}(t)}{\dot{a}(t)^{2}}\frac{a(\tau)\,dt}{\sqrt{a^{2}(\tau)-a^{2}(t)}}\right].
\end{equation}
Thus,
\begin{equation}\label{dchi3}
\frac{d\chi_{t_{0}}}{d\tau} > \frac{\dot{a}(\tau)}{a(\tau)}\left[\frac{1}{\dot{a}(t_{0})}-\int_{t_{0}}^{\tau}\frac{\ddot{a}(t)}{\dot{a}(t)^{2}}\frac{a(\tau)\,dt}{\sqrt{a^{2}(\tau)-a^{2}(t)}}\right].
\end{equation}
To show that the right side is positive, we use the regularity of $a(t)$, i.e., $1/a(t) \geq \ddot{a}(t)/\dot{a}(t)^{2}$, from which it follows that,
\begin{equation}\label{dchi}
\int_{t_{0}}^{\tau}\frac{1}{a(t)}\left[\frac{a(\tau)}{\sqrt{a^{2}(\tau)-a^{2}(t)}}-1\right]dt\geq\int_{t_{0}}^{\tau}\frac{\ddot{a}(t)}{\dot{a}(t)^{2}}\left[\frac{a(\tau)}{\sqrt{a^{2}(\tau)-a^{2}(t)}}-1\right]dt.
\end{equation}
Using Eq.\eqref{key2}, inequality\eqref{dchi} becomes,
\begin{equation}\label{dchi2}
\chi_{t_{0}}(\tau)-\int_{t_{0}}^{\tau}\frac{1}{a(t)}dt \geq \int_{t_{0}}^{\tau}\frac{\ddot{a}(t)}{\dot{a}(t)^{2}}\frac{a(\tau)\,dt}{\sqrt{a^{2}(\tau)-a^{2}(t)}} -\frac{1}{\dot{a}(t_{0})}+\frac{1}{\dot{a}(\tau)}.
\end{equation}
Rearranging terms, we have,
\begin{equation}\label{ineq}
\frac{1}{\dot{a}(t_{0})}-\int_{t_{0}}^{\tau}\frac{\ddot{a}(t)}{\dot{a}(t)^{2}}\frac{a(\tau)\,dt}{\sqrt{a^{2}(\tau)-a^{2}(t)}} \geq \int_{t_{0}}^{\tau}\frac{1}{a(t)}dt +\frac{1}{\dot{a}(\tau)}-\chi_{t_{0}}(\tau).
\end{equation}
By Lemma \ref{limit} the right side of Eq.\eqref{ineq} is positive.  Therefore, 
\begin{equation}\label{positive}
\frac{1}{\dot{a}(t_{0})}-\int_{t_{0}}^{\tau}\frac{\ddot{a}(t)}{\dot{a}(t)^{2}}\frac{a(\tau)\,dt}{\sqrt{a^{2}(\tau)-a^{2}(t)}} >0.
\end{equation}
Combining this last inequality with \eqref{dchi3} yields the desired result.
\end{proof}

\noindent For the next lemma, we introduce the following notation,

\begin{eqnarray}
U&=&\{(\tau,\sigma): \tau>0\, \text{and}\,\sigma >1\}\label{U}\\
U_0&=&\{(\tau,\sigma): \tau>0\, \text{and}\,\sigma \geq1\},
\end{eqnarray}
and
\begin{eqnarray}
V&=&\{(t,\chi): t>0\, \text{and}\,0<\chi<\chi_{\mathrm{horiz}}(t)\}\label{V}\\
V_{0}&=&\{(t,\chi): t>0\, \text{and}\,0\leq\chi<\chi_{\mathrm{horiz}}(t)\}.
\end{eqnarray}
Observe that $U$ and $V$ are open subsets of $\mathbb{R}^{2}$.

\begin{lemma}\label{lem1}
Let $a(t)$ be regular. Then the map $F :U_{0}\to V_{0}$ given by,

\begin{equation}\label{Function}
F(\tau,\sigma)=\left(t(\tau,\sigma), \chi(\tau,\sigma)\right)=\psi_{\tau}(\sigma),
\end{equation}

\noindent is a bijection, and $F :U\to V$ is a diffeomorphism. Here, the functions $t$ and $\chi$ are defined by Eqs.\eqref{thm1} and \eqref{thm2} respectively.
\end{lemma}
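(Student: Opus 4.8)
The plan is to establish the four claims—$F:U_0\to V_0$ is well-defined, injective, surjective, and a diffeomorphism on the open sets—in that order, reducing each to the monotonicity and regularity facts already proved. First I would check that $F$ maps into $V_0$: for $(\tau,\sigma)\in U_0$ the value $t(\tau,\sigma)=b(a(\tau)/\sqrt{\sigma})$ lies in $(0,\tau]$ since $b$ is increasing with $b(0)=0$, and $\chi(\tau,\sigma)\ge 0$ by Eq.~\eqref{thm2}; the key point is the strict bound $\chi(\tau,\sigma)<\chi_{\mathrm{horiz}}(t(\tau,\sigma))$. Writing $t_0=t(\tau,\sigma)$, we have $\sigma=\sigma(\tau)$ in the notation of Eq.~\eqref{sigma(tau)} and $\chi(\tau,\sigma)=\chi_{t_0}(\tau)$, so this is exactly Corollary~\ref{bound} (with the convention that when $\sigma=1$, i.e. $t_0=\tau$, both sides are $0$ and the inequality is the trivial $0<\chi_{\mathrm{horiz}}(\tau)$, handled separately since Corollary~\ref{bound} assumes $\tau\ge t_0$ and the strictness there comes from Lemma~\ref{limit}).

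Next, injectivity. Suppose $F(\tau_1,\sigma_1)=F(\tau_2,\sigma_2)=(t_0,\chi_0)$. From Eq.~\eqref{thm1}, $a(\tau_i)/\sqrt{\sigma_i}=a(t_0)$, so $a(\tau_i)=a(t_0)\sqrt{\sigma_i}$; since $a$ is strictly increasing it suffices to show $\tau_1=\tau_2$, which then forces $\sigma_1=\sigma_2$. Fix $t_0$ and consider the map $\tau\mapsto\chi_{t_0}(\tau)$ on $[t_0,\infty)$: by Lemma~\ref{increase} it is strictly increasing, so it is injective, and since $\chi_0=\chi_{t_0}(\tau_1)=\chi_{t_0}(\tau_2)$ we get $\tau_1=\tau_2$. (The case $t_0=\tau_i$ means $\sigma_i=1$ and $\chi_0=0$, and $\chi_{t_0}(\tau)>0$ for $\tau>t_0$ by Lemma~\ref{limit}, so this is consistent.)

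For surjectivity, fix $(t_0,\chi_0)\in V_0$, so $0\le\chi_0<\chi_{\mathrm{horiz}}(t_0)$. By Corollary~\ref{horizon}, $\lim_{\tau\to\infty}\chi_{t_0}(\tau)=\chi_{\mathrm{horiz}}(t_0)$, and $\chi_{t_0}(t_0)=0$; since $\tau\mapsto\chi_{t_0}(\tau)$ is continuous (immediate from Eq.~\eqref{key2}) and strictly increasing (Lemma~\ref{increase}), the intermediate value theorem yields a unique $\tau\ge t_0$ with $\chi_{t_0}(\tau)=\chi_0$. Setting $\sigma=(a(\tau)/a(t_0))^2\ge 1$ gives $t(\tau,\sigma)=t_0$ by Eq.~\eqref{thm1} and $\chi(\tau,\sigma)=\chi_{t_0}(\tau)=\chi_0$, so $F(\tau,\sigma)=(t_0,\chi_0)$. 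This establishes the bijection $F:U_0\to V_0$; restricting to the open subsets, $F(U)\subseteq V$ follows from the strict inequalities $\sigma>1\Leftrightarrow t_0<\tau$ and $\chi_0>0$, and $F$ maps $U$ onto $V$ bijectively.

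Finally, to upgrade $F:U\to V$ to a diffeomorphism it remains to show $F$ is smooth with smooth inverse on the open sets; by the inverse function theorem it is enough that $F$ is $C^1$ with nonvanishing Jacobian on $U$. Smoothness of $t(\tau,\sigma)$ and $\chi(\tau,\sigma)$ on $U$ is clear from Eqs.~\eqref{thm1}–\eqref{thm2}: $b$ is $C^2$ on $(0,\infty)$, the argument $a(\tau)/\sqrt\sigma$ stays in $(0,\infty)$, and the integrand in Eq.~\eqref{thm2} has an integrable $1/\sqrt{\tilde\sigma-1}$ singularity whose $\tau$-derivative (via Leibniz, as already used in the proof of Lemma~\ref{increase}) is likewise dominated locally uniformly. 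The Jacobian is $\det\begin{pmatrix}\partial_\tau t & \partial_\sigma t\\ \partial_\tau\chi & \partial_\sigma\chi\end{pmatrix}$; since $\partial_\sigma t=-\tfrac{a(\tau)}{2\sigma^{3/2}}\dot b(a(\tau)/\sqrt\sigma)<0$ and $\partial_\tau t=\tfrac{\dot a(\tau)}{\sqrt\sigma}\dot b(a(\tau)/\sqrt\sigma)>0$ are both nonzero, the determinant equals $\partial_\sigma t\cdot\partial_\tau\chi-\partial_\tau t\cdot\partial_\sigma\chi$, and I would evaluate it by noting $\partial_\sigma\chi=\tfrac12\dot b(a(\tau)/\sqrt\sigma)\,\sigma^{-1/2}(\sigma-1)^{-1/2}>0$ from Eq.~\eqref{thm2} and then checking the sign of the combination; the cleanest route is to observe that holding $t_0=t(\tau,\sigma)$ fixed corresponds to moving along $\sigma=(a(\tau)/a(t_0))^2$, along which $d\chi/d\tau=d\chi_{t_0}/d\tau>0$ by Lemma~\ref{increase}, which is precisely the statement that the Jacobian does not vanish. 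I expect this last sign computation—disentangling the two partials of $\chi$ and relating their combination to $d\chi_{t_0}/d\tau$—to be the only real obstacle; everything else is bookkeeping over results already in hand.
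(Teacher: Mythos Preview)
Your proposal is correct and follows essentially the same route as the paper: both arguments reduce bijectivity to the strict monotonicity of $\tau\mapsto\chi_{t_0}(\tau)$ (Lemma~\ref{increase}) together with its limiting value $\chi_{\mathrm{horiz}}(t_0)$ (Corollary~\ref{horizon}), and both reduce nonvanishing of the Jacobian on $U$ to the same positivity $d\chi_{t_0}/d\tau>0$. The only cosmetic difference is that the paper quotes an explicit formula for $J(\tau,\sigma)$ from \cite{randles} and simplifies it to $J(\tau,\sigma(\tau))=\dfrac{a(\tau)}{2\dot a(t_0)\sigma(\tau)\sqrt{\sigma(\tau)}}\,\dfrac{d\chi_{t_0}}{d\tau}$, whereas you get the same conclusion by the change-of-basis observation that $DF\cdot(1,\sigma'(\tau))=(0,\,d\chi_{t_0}/d\tau)$, so $J=-\partial_\sigma t\cdot d\chi_{t_0}/d\tau>0$; your route avoids the explicit integral computation but yields the same identity up to a positive factor. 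You also make explicit (via Corollary~\ref{bound}) the verification that $F$ lands in $V_0$, which the paper leaves implicit.
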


\begin{proof}

Let $(t_0,\chi_0)\in V$ be arbitrary but fixed. We first show that $F(\tau_{0},\sigma_{0})=(t_0,\chi_0)$ for a uniquely determined pair $(\tau_{0},\sigma_{0})\in U_{0}$. It follows from Eq.\eqref{thm1} that $\sigma_{0}$ is uniquely determined by $\tau_{0}$ and,

\begin{equation}\label{sigma1}
\sigma_{0}=\left(\frac{a(\tau_{0})}{a(t_0)}\right)^2=\sigma(\tau_{0}),
\end{equation}
where $\sigma(\tau)$ is given by Eq.\eqref{sigma(tau)}. It remains to find $\tau_0$ and show that it is unique. To that end, from Eq.\eqref{key} (see also Eq.\eqref{thm2}),

\begin{equation}\label{lem1.2}
\chi_{t_{0}}(\tau)\equiv\chi(\tau,\sigma(\tau)),
\end{equation}
and by Corollary \ref{limit} and Lemma \ref{increase}, there must exist a unique $\tau_0\geq t_0$ such that $\chi_{t_{0}}(\tau_0)=\chi_0$.  Thus, $F(\tau_{0},\sigma_{0})=(t_0,\chi_0)$ and $F$ is a bijection.
Since $F(\tau,1)=(\tau,0)$, it is easily seen that,
\begin{equation}
F:U_{0}\to V_{0},
\end{equation}
is also a bijection.  The Jacobian determinant $J(\tau,\sigma)$ for $F :U\to V$, computed in \cite{randles}, is given by,
\begin{equation}\label{jacobian}
\begin{split}
J(\tau,\sigma)&=\frac{\dot{a}(\tau)}{2 \sigma}\dot{b}\left(\frac{a(\tau)}{\sqrt{\sigma}}\right)\left(\frac{\dot{b}\left(\frac{a(\tau)}{\sqrt{\sigma}}\right)}{\sqrt{\sigma-1}}+\frac{a(\tau)}{2\sqrt{\sigma}}\int_{1}^{\sigma}\frac{\ddot{b}\left(\frac{a(\tau)}{\sqrt{\tilde{\sigma}}}\right)}{\tilde{\sigma}\sqrt{\tilde{\sigma}-1}}d\tilde{\sigma}\right)\\
&= \frac{\dot{a}(\tau)a(\tau)}{2 \sigma\sqrt{\sigma}}\dot{b}\left(\frac{a(\tau)}{\sqrt{\sigma}}\right)\left(\frac{\sqrt{\sigma}\,\,\dot{b}\left(\frac{a(\tau)}{\sqrt{\sigma}}\right)}{a(\tau)\sqrt{\sigma-1}}+\frac{1}{2}\int_{1}^{\sigma}\frac{\ddot{b}\left(\frac{a(\tau)}{\sqrt{\tilde{\sigma}}}\right)}{\tilde{\sigma}\sqrt{\tilde{\sigma}-1}}d\tilde{\sigma}\right)
\end{split}
\end{equation}
Therefore,
\begin{equation}\label{jacobian2}
\begin{split}
&J(\tau,\sigma(\tau))\\
&=\frac{\dot{a}(\tau)a(\tau)\dot{b}\left(\frac{a(\tau)}{\sqrt{\sigma(\tau)}}\right)}{2 \sigma(\tau)\sqrt{\sigma(\tau)}}\left(\frac{\sqrt{\sigma(\tau)}\,\,\dot{b}\left(\frac{a(\tau)}{\sqrt{\sigma(\tau)}}\right)}{a(\tau)\sqrt{\sigma(\tau)-1}}+\frac{1}{2}\int_{1}^{\sigma(\tau)}\frac{\ddot{b}\left(\frac{a(\tau)}{\sqrt{\tilde{\sigma}}}\right)d\tilde{\sigma}}{\tilde{\sigma}\sqrt{\tilde{\sigma}-1}}\right)
\\
&= \frac{\dot{a}(\tau)a(\tau)}{2 \dot{a}(t_{0})\sigma(\tau)\sqrt{\sigma(\tau)}}\left(\frac{1}{\dot{a}(t_{0})\sqrt{a^{2}(\tau)-a^{2}(t_{0})}}+\frac{1}{2}\int_{1}^{\sigma(\tau)}\frac{\ddot{b}\left(\frac{a(\tau)}{\sqrt{\tilde{\sigma}}}\right)}{\tilde{\sigma}\sqrt{\tilde{\sigma}-1}}d\tilde{\sigma}\right)\\
&= \frac{a(\tau)}{2 \dot{a}(t_{0})\sigma(\tau)\sqrt{\sigma(\tau)}}\left(\dfrac{d\chi_{t_{0}}}{d\tau}\right),
\end{split}
\end{equation}
where in the last step, we used Eq.\eqref{dchi4}. Thus, by Lemma \ref{increase}, $J(\tau,\sigma(\tau))>0$ for all $\tau>t_{0}>0$.  Now, let $\tau>0$ and $\sigma>1$ be given.  Since $a(t)$ is regular, there is a unique $t_{0}<\tau$ such that $\sigma(\tau)=(a^{2}(\tau)/a^{2}(t_{0})) = \sigma$.  Therefore,
\begin{equation}
J(\tau,\sigma)=J(\tau,\sigma(\tau))>0.
\end{equation}
Thus, $F:U\to V$ is a diffeomorphism.
\end{proof}

\noindent Define,
\begin{eqnarray}
W &=& \left\{ (\tau, \rho) : \tau >0 \text{ and } 0< \rho < \rho_{\mathcal{M}_{\tau}}\right\}\label{W},\\
W_{0} &=& \left\{ (\tau, \rho) : \tau >0 \text{ and } 0\leq \rho < \rho_{\mathcal{M}_{\tau}}\right\},
\end{eqnarray}
and let $G:U \to W$ by $G(\tau, \sigma) = (\tau, \rho(\sigma))$ (where $U$ is given by Eq.\eqref{U}).  Then $G$ is a diffeomorphism with inverse, $G^{-1}(\tau,\rho)= (\tau, \sigma(\rho))$.  Using the notation of Lemma \ref{lem1} define,

\begin{equation}\label{H}
H(t,\chi) = G\circ F^{-1}(t,\chi). 
\end{equation}
Then $H:V \to W$ is a diffeomorphism and may be extended to a bijection $H:V_{0} \to W_{0}$ in an obvious way.  We summarize and extend this result as a theorem:

\begin{theorem}\label{tau,s}
Let $a(t)$ be regular. Then the function $(\tau, \rho) = H(t,\chi)$ given by Eq.\eqref{H} is a diffeomorphism from $V$ to $W$ and $H$ may be extended to a bijection from $V_{0}$ to $W_{0}$.  Define the open set $\mathcal{U}\subset\mathcal{M}$ by,
\begin{equation}
\mathcal{U} = \left\{p\in\mathcal{M}: t(p) >0 \text{ and } 0< \chi(p) <\chi_{\mathrm{horiz}}(t(p))\right\}.
\end{equation}
Then $(\{\tau,\rho,\theta,\varphi \}, \mathcal{U})$ is a coordinate chart on $\mathcal{M}$ and,
\begin{equation}
\mathcal{U} = \left\{p\in\mathcal{M}: \tau(p) >0 \text{ and } 0< \rho(p) < \rho_{\mathcal{M}_{\tau(p)}}\right\}.
\end{equation}
The line element in these coordinates is given by,

\begin{equation}\label{fermipolar}
ds^2=g_{\tau\tau} d\tau^2+d\rho^2 + \frac{a^2(\tau)}{\sigma(\rho)}S^2_k(\chi(\tau,\sigma(\rho)))d\Omega^2,
\end{equation}
where,
\begin{equation}\label{gtautau}
g_{\tau\tau}=-(\dot{a}(\tau))^{2}\left(\dot{b}\left(\frac{a(\tau)}{\sqrt{\sigma(\rho)}}\right)+a(\tau)\frac{\sqrt{\sigma(\rho)-1}}{2\sqrt{\sigma(\rho)}}\int_1^{\sigma(\rho)}\frac{\ddot{b}\left(\frac{a(\tau)}{\sqrt{\tilde{\sigma}}}\right)}{\tilde{\sigma}\sqrt{\tilde{\sigma}-1}}d\tilde{\sigma}\right)^2,
\end{equation}
and where $\sigma(\rho)$ and $\chi(\tau,\sigma(\rho))$ are given by Eqs. \eqref{inverse} and \eqref{thm2}.
\end{theorem}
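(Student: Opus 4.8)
The plan is to assemble Theorem \ref{tau,s} almost entirely from the machinery already built, treating it as a bookkeeping synthesis rather than a fresh argument. First I would observe that $G:U\to W$ is a diffeomorphism: this is immediate from the discussion preceding Corollary \ref{affinegeodesic}, since for each fixed $\tau$ the map $\sigma\mapsto\rho_\tau(\sigma)$ of Eq.\eqref{thm3} is smooth, strictly increasing (its integrand is positive), with smooth inverse $\sigma_\tau(\rho)$ by Eq.\eqref{inverse}, and $\tau$ is carried through untouched; smoothness in $\tau$ follows from smoothness of $a$ and $b$ and differentiation under the integral sign. Then $H = G\circ F^{-1}:V\to W$ is a composition of diffeomorphisms by Lemma \ref{lem1}, hence a diffeomorphism, and the extension to $H:V_0\to W_0$ is obtained by adjoining the $\chi=0 \leftrightarrow \rho=0$ boundary, where $F(\tau,1)=(\tau,0)$ and $G(\tau,1)=(\tau,0)$, exactly as in the bijection statement of Lemma \ref{lem1}.

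Next I would identify the set $\mathcal{U}$. The two displayed descriptions of $\mathcal{U}$ — one in terms of $0<\chi(p)<\chi_{\mathrm{horiz}}(t(p))$ and one in terms of $0<\rho(p)<\rho_{\mathcal{M}_{\tau(p)}}$ — must be shown to coincide. For a fixed $\tau$, as $\sigma$ ranges over $(1,\infty)$, $\chi(\tau,\sigma)$ increases to $\chi_{t_0}(\tau)\to\chi_{\mathrm{horiz}}(t_0)$ (Corollary \ref{horizon}) and $\rho_\tau(\sigma)$ increases to $\rho_{\mathcal{M}_\tau}$ (Definition \ref{radiusMtau}), and the point with cosmological time $t$ on the geodesic from $\beta(\tau)$ has $\chi$-coordinate $\chi_{t}(\tau)$; so the condition that $p=(t,\chi)$ lies below the horizon, $\chi<\chi_{\mathrm{horiz}}(t)$, is equivalent, via $H$, to $\rho<\rho_{\mathcal{M}_\tau}$. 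That $\mathcal{U}$ is open and that $(\{\tau,\rho,\theta,\varphi\},\mathcal{U})$ is a chart then follows because $H$ (together with the identity on the angular coordinates, cf.\ Remark \ref{4dim}) is a diffeomorphism onto the open set $W\times(\text{angular domain})$; one should note the coordinate singularity at $\rho=0$ is the usual polar-type one, innocuous exactly as the $\chi=0$ singularity in Eq.\eqref{frwmetric} was declared innocuous.

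For the metric, I would pull back the line element \eqref{frwmetric2} under $\psi_\tau$. Since $d\rho^2$ is by construction the induced length element along each spacelike geodesic, the $\rho\rho$ coefficient is $1$ and there is no $d\tau\,d\rho$ cross term (the geodesics are orthogonal to $\beta$ and the foliation $\{\mathcal{M}_\tau\}$ is orthogonal to $\partial/\partial\tau$ along $\beta$ — this is the Fermi/Gauss-lemma structure). The angular part comes directly from substituting $a^2(t)S_k^2(\chi)$ with $t=t(\tau,\sigma(\rho))$, using $a^2(t)=a^2(\tau)/\sigma$ from Eq.\eqref{parameter}, giving the stated $\frac{a^2(\tau)}{\sigma(\rho)}S_k^2(\chi(\tau,\sigma(\rho)))\,d\Omega^2$. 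The formula \eqref{gtautau} for $g_{\tau\tau}$ is obtained by computing $\partial t/\partial\tau$ and $\partial\chi/\partial\tau$ at fixed $\rho$ — equivalently, differentiating Eqs.\eqref{thm1}–\eqref{thm2} along a curve of constant $\rho$, which forces $\sigma=\sigma(\rho)$ fixed — and plugging into $g_{\tau\tau}=-(\partial t/\partial\tau)^2+a^2(t)(\partial\chi/\partial\tau)^2$; the two terms combine into the single squared expression shown. The main obstacle is this last calculation: one must differentiate $\chi(\tau,\sigma)$ (with $\sigma$ held fixed, not $\sigma(\tau)$) carefully, justify differentiation under the integral sign near the integrable singularity at $\tilde\sigma=1$, and verify that the cross terms assemble into a perfect square — but the structure of Eq.\eqref{jacobian} already exhibits the needed combination $\dot b + \frac{a(\tau)}{2}\frac{\sqrt{\sigma-1}}{\sqrt\sigma}\int\cdots$, so this is a guided computation rather than a genuinely hard step.
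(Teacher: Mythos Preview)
Your assembly of the diffeomorphism and chart statements from Lemma~\ref{lem1} and the definitions of $G$ and $H$ is exactly what the paper does: the theorem is presented as a summary of the preceding discussion, and the line-element formulas are imported from \cite{randles} rather than rederived.

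Your proposed computation of $g_{\tau\tau}$, however, contains a genuine error. You write that differentiating along a curve of constant $\rho$ ``forces $\sigma=\sigma(\rho)$ fixed.'' But $\sigma_\tau(\rho)$ depends on $\tau$ through Eq.~\eqref{thm3}, so holding $\rho$ fixed while varying $\tau$ does \emph{not} hold $\sigma$ fixed: $\partial_\tau|_\rho\neq\partial_\tau|_\sigma$. If you compute $-\bigl(\partial t/\partial\tau\big|_\sigma\bigr)^2+a^2(t)\bigl(\partial\chi/\partial\tau\big|_\sigma\bigr)^2$ you obtain $\dot a(\tau)^2\sigma^{-1}\bigl[-\dot b^2+\tfrac{a^2(\tau)}{4}(\int\cdots)^2\bigr]$, a difference of squares, not the perfect square in Eq.~\eqref{gtautau}; in the Milne case $a(t)=t$ this gives $-1/\sigma$ instead of the correct $-1$. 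The missing piece is $\partial\sigma/\partial\tau\big|_\rho$, obtained by implicitly differentiating Eq.~\eqref{thm3}. A cleaner route, once you have $g_{\rho\rho}=1$ and $g_{\tau\rho}=0$, is to equate $2\times 2$ metric determinants: $g_{\tau\tau}=-a^2(t)\bigl(\partial(t,\chi)/\partial(\tau,\rho)\bigr)^2$, which via the chain rule $\partial(t,\chi)/\partial(\tau,\rho)=J(\tau,\sigma)\cdot(\partial\rho/\partial\sigma)^{-1}$ and Eq.~\eqref{jacobian} yields Corollary~\ref{altforms}(c) and hence Eq.~\eqref{gtautau}.
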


\noindent Referring to the coordinates $\{\tau,\rho,\theta,\varphi \}$ of Thm \ref{tau,s}, define: $x=\rho \sin\theta \cos\varphi$, $y=\rho \sin\theta \sin\varphi$, and $z=\rho \cos\theta$. Then the coordinate map $\{\tau,x,y,z\}$ is defined on $\mathcal{U}$ and may be extended to the open set $\mathcal{U}_{\mathrm{Fermi}}\subset \mathcal{M}$ that includes the path $\beta(\tau)=(\tau,0,0,0)$, given by,
\begin{equation}\label{UFermi}
\begin{split}
\mathcal{U}_{\mathrm{Fermi}} &= \left\{p\in\mathcal{M}: \tau(p) >0 \text{ and } 0\leq \sqrt{x(p)^2+y(p)^2+z(p)^2} < \rho_{\mathcal{M}_{\tau(p)}}\right\}\\
\end{split}
\end{equation}

\noindent The following theorem establishes that for a comoving observer in a Robertson-Walker spacetime with regular scale factor, the maximal Fermi chart consists of all spacetime points within the cosmological event horizon of the observer. \\

\begin{theorem}\label{fermi}
Let $a(t)$ be regular, and let the coordinate functions $\{\tau,x,y,z\}$ be defined as above. Then $\{\partial/\partial\tau,$ $\partial/\partial x, \partial/\partial y, \partial/\partial z\}$ is a parallel tetrad along $\beta(\tau)=(\tau,0,0,0)$.  With respect to this tetrad, $(\{\tau,x,y,z\},\mathcal{U}_{\mathrm{Fermi}})$ is a maximal Fermi coordinate chart for $\beta(\tau)$, and $\mathcal{U}_{\mathrm{Fermi}}$ is given by,
\begin{equation}\label{UFermi2}
\mathcal{U}_{\mathrm{Fermi}} = \left\{p\in\mathcal{M}: t(p) >0 \text{ and } 0\leq \chi(p) <\chi_{\mathrm{horiz}}(t(p))\right\}.
\end{equation}
The line element in Fermi coordinates is,
\begin{equation}
\begin{split}\label{fermimetric}
ds^2=&\,g_{\tau\tau} d\tau^2+dx^2 +dy^2+dz^2\\ 
+&\lambda_{k}(\tau,\rho)\big[(y^2+z^2)dx^2+(x^2+z^2)dy^2+(x^2+y^2)dz^2\\
-&xy(dxdy+dydx)-xz(dxdz+dzdx)-yz(dydz+dzdy)\big],
\end{split}
\end{equation}
where $g_{\tau\tau}$ is given by Eq.\eqref{gtautau}, $\rho=\sqrt{x^2+y^2+z^2}$, and,

\begin{equation}\label{lambda}
\rho^4 \lambda_{k}(\tau,\rho) = \frac{a^2(\tau)}{\sigma(\rho)}S^2_k(\chi(\tau,\sigma(\rho)))-\rho^2.
\end{equation}
The smooth function $\lambda_{k}(\tau,\rho)$ is a function of $\tau$ and $\rho^2$, and the notation in Eq.\eqref{lambda} is the same as in Theorem \ref{general}.

\end{theorem}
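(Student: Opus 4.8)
The plan is to assemble the pieces already established. Theorem~\ref{general} and Corollary~\ref{affinegeodesic} describe the orthogonal spacelike geodesics $\psi_\tau(\rho)$ in the reduced $2$-plane, Remark~\ref{4dim} lifts them to $\Psi_\tau(\rho)$ in the full spacetime, Lemma~\ref{lem1} and Theorem~\ref{tau,s} show that the ``polar'' map $(\tau,\rho,\theta,\varphi)\mapsto p$ is a diffeomorphism of $W$ onto $\mathcal{U}$ with line element~\eqref{fermipolar}, and Corollaries~\ref{bound} and~\ref{horizon} pin down that the $\chi$-reach of these geodesics is exactly $\chi_{\mathrm{horiz}}$. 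It then remains to (i) identify the coordinate frame along $\beta$ as a parallel orthonormal tetrad with $e_0=\dot\beta$; (ii) verify that $(\{\tau,x,y,z\},\mathcal{U}_{\mathrm{Fermi}})$ satisfies the defining relations~\eqref{F2} and is maximal; and (iii) rewrite~\eqref{fermipolar} in the Cartesian-type coordinates $x=\rho\sin\theta\cos\varphi$, $y=\rho\sin\theta\sin\varphi$, $z=\rho\cos\theta$.

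For (i), I would argue as follows. Along $\beta$ one has $\sigma(\rho)\to1$ and $t(\tau,\sigma(\rho))\to\tau$ as $\rho\to0$, and $ds=dt$ on $\beta$, so $\tau$ restricts to proper time and $\partial/\partial\tau|_{\beta(\tau)}=\dot\beta(\tau)$. Evaluating~\eqref{gtautau} at $\sigma=1$ gives $\dot b(a(\tau))=1/\dot a(\tau)$ with a vanishing integral term, so $g_{\tau\tau}(\tau,0)=-1$, while the $\lambda_k$-terms in~\eqref{fermimetric} vanish at $\rho=0$; hence $\{\partial_\tau,\partial_x,\partial_y,\partial_z\}$ is orthonormal along $\beta$. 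For parallelism, introduce Cartesian-type comoving spatial coordinates $\xi^1,\xi^2,\xi^3$ centred at $\chi=0$ (so $\chi^2=\sum_i(\xi^i)^2$ near the axis); a one-line Christoffel computation using $\Gamma^{\xi^j}_{t\xi^i}=(\dot a/a)\delta^j_i$ and $\Gamma^t_{t\xi^i}=0$ shows $a(t)^{-1}\partial_{\xi^i}$ is parallel along $\beta$. By~\eqref{X} at $p=(\tau,0)$ the geodesics $\Psi_\tau$ leave $\beta(\tau)$ with unit initial velocity $a(\tau)^{-1}\partial_\chi$, so near the axis $\rho=a(\tau)\chi+O(\chi^3)$ and $x=a(\tau)\xi^1+O(|\xi|^3)$, etc.; the change of coordinates $(t,\xi^i)\mapsto(\tau,x,y,z)$ therefore has Jacobian $\mathrm{diag}(1,a(\tau),a(\tau),a(\tau))$ on $\beta$, giving $\partial/\partial x^i|_{\beta(\tau)}=a(\tau)^{-1}\partial_{\xi^i}|_{\beta(\tau)}$, so $\{\partial_x,\partial_y,\partial_z\}$ is parallel along $\beta$ as well.

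For (ii), with the parallel tetrad $e_0=\dot\beta$, $e_i=\partial/\partial x^i|_\beta$ and any unit spatial $e(\tau)=\lambda^je_j(\tau)/|\lambda|$, Remark~\ref{4dim} and uniqueness of geodesics identify $s\mapsto\exp_{\beta(\tau)}(s\,e(\tau))$ with $\Psi_\tau(s)$ in the angular direction of $(\lambda^1,\lambda^2,\lambda^3)$, so $x^0(\exp_{\beta(\tau)}(\lambda^je_j))=\tau$ and $x^k(\exp_{\beta(\tau)}(\lambda^je_j))=\lambda^k$, which is~\eqref{F2}. For maximality, any $p$ admitting a Fermi representation equals $\exp_{\beta(\tau)}(\lambda^je_j)$ and hence lies on an orthogonal spacelike geodesic on which the exponential map is defined up to $p$; by Theorem~\ref{general} and Remark~\ref{tdecrease} every such geodesic is $\Psi_\tau(\rho)$ with $0\le\rho<\rho_{\mathcal{M}_\tau}$ (it terminates at the big bang, and $\rho_{\mathcal{M}_\tau}<\infty$ by Theorem~\ref{ineqrho}), and by Corollary~\ref{bound} its $\chi$-coordinate stays below $\chi_{\mathrm{horiz}}$, so $p\in\mathcal{U}_{\mathrm{Fermi}}$ as described in~\eqref{UFermi2}. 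Conversely every point of $\mathcal{U}_{\mathrm{Fermi}}$ is covered, and the coordinate map is a diffeomorphism there: off $\beta$ this is Theorem~\ref{tau,s}, and across $\beta$ it is the smooth polar-to-Cartesian substitution, the metric coefficients remaining smooth because $g_{\tau\tau}$ and $a^2(\tau)\sigma(\rho)^{-1}S_k^2(\chi(\tau,\sigma(\rho)))$ are smooth functions of $\tau$ and $\rho^2$ (near $\rho=0$, $\sigma(\rho)-1$ is a smooth function of $\rho^2$ by~\eqref{thm3}).

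Finally, for (iii), writing $R^2=a^2(\tau)\sigma(\rho)^{-1}S_k^2(\chi(\tau,\sigma(\rho)))$ and using $\rho\,d\rho=x\,dx+y\,dy+z\,dz$ together with $\rho^2\,d\Omega^2=dx^2+dy^2+dz^2-\rho^{-2}(x\,dx+y\,dy+z\,dz)^2$, Eq.~\eqref{fermipolar} turns into $ds^2=g_{\tau\tau}d\tau^2+dx^2+dy^2+dz^2+\rho^{-4}(R^2-\rho^2)\big[(y^2+z^2)dx^2+\cdots\big]$, i.e.~\eqref{fermimetric} with $\rho^4\lambda_k=R^2-\rho^2$ as in~\eqref{lambda}; since $R^2=\rho^2+O(\rho^4)$ along $\beta$ — which, once (ii) is in hand, is the familiar $g_{\mu\nu}=\mathrm{diag}(-1,1,1,1)+O(\rho^2)$ property of a Fermi chart — $\lambda_k$ is a smooth function of $\tau$ and $\rho^2$. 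The hard part, I expect, will be the bookkeeping at the axis $\beta$, where the polar description degenerates and one must check carefully that both the coordinate change and every metric coefficient extend smoothly across $\rho=0$, and making the parallel-transport claim for the tetrad fully rigorous rather than by appeal to a standard Robertson--Walker fact; everything else follows directly from Theorems~\ref{general} and~\ref{tau,s} and the lemmas and corollaries preceding this theorem.
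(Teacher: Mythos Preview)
Your proposal is correct and considerably more self-contained than the paper's own proof, which is essentially two lines: it cites Theorem~3 and Remark~6 of \cite{randles} for the facts that $\{\partial_\tau,\partial_x,\partial_y,\partial_z\}$ is a parallel tetrad along $\beta$ and that the line element takes the form~\eqref{fermimetric} with~\eqref{lambda}, and then invokes Theorem~\ref{tau,s} of the present paper to identify $\mathcal{U}_{\mathrm{Fermi}}$ (defined by~\eqref{UFermi}) with the set~\eqref{UFermi2}. In other words, the paper treats parts (i) and (iii) of your outline as already established in the earlier reference, and only the maximality statement~\eqref{UFermi2} as new---and that follows immediately from the bijection/diffeomorphism of Theorem~\ref{tau,s}.

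What you have done differently is to reconstruct the content of \cite{randles} from scratch: the Christoffel computation showing $a(t)^{-1}\partial_{\xi^i}$ is parallel along $\beta$, the Jacobian argument identifying $\partial_{x^i}|_\beta$ with $a(\tau)^{-1}\partial_{\xi^i}|_\beta$, the verification of the defining relations~\eqref{F2}, and the explicit polar-to-Cartesian rewriting of~\eqref{fermipolar} into~\eqref{fermimetric}. These arguments are all sound (your $\rho=a(\tau)\chi+O(\chi^3)$ and Jacobian claims follow from $dt/d\rho|_{\rho=0}=0$ in~\eqref{X}, and your algebraic identity for $\rho^2\,d\Omega^2$ yields~\eqref{lambda} exactly). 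The trade-off is clear: your route is independent of \cite{randles} and makes the paper self-contained at the cost of length and the axis bookkeeping you correctly flag; the paper's route is shorter but presupposes the reader has \cite{randles} at hand. For the maximality part (ii) the two approaches coincide, both resting on Theorem~\ref{tau,s} (hence ultimately on Lemma~\ref{increase} and Corollaries~\ref{bound},~\ref{horizon}).
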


\begin{proof}
The formula for the line element follows from Theorem 3 and Remark 6 in \cite{randles}  where it was also shown that $\{\partial/\partial\tau,$ $\partial/\partial x, \partial/\partial y, \partial/\partial z\}$ is a parallel tetrad along $\beta(\tau)=(\tau,0,0,0)$.  It follows from Thm. \ref{tau,s} that $\mathcal{U}_{\mathrm{Fermi}}$ as defined in Eq.\eqref{UFermi} is given by Eq.\eqref{UFermi2}.
\end{proof}

\begin{corollary}\label{altforms}
The metric coefficient, $g_{\tau\tau}(\tau,\rho)$ given by Eq.\eqref{gtautau} with $\rho>0$ has the following alternative forms:

\begin{enumerate}
\item [(a)]
\begin{equation*}
\begin{split}
\quad g_{\tau\tau}&(\tau,\rho)=-\dot{a}(\tau)^{2}\left[a^{2}(\tau)-a^{2}(t_{0})\right]\times\\
&\left[\frac{1}{\dot{a}(t_{0})\sqrt{a^{2}(\tau)-a^{2}(t_{0})}}-\int_{t_{0}}^{\tau}\frac{\ddot{a}(t)}{\dot{a}(t)^{2}}\frac{dt}{\sqrt{a^{2}(\tau)-a^{2}(t)}}\right]^{2}
\end{split}
\end{equation*}
\item [(b)] 
\begin{equation*}
g_{\tau\tau}(\tau,\rho)=-\left[a^{2}(\tau)-a^{2}(t_{0})\right]\left(\dfrac{d\chi_{t_{0}}}{d\tau}\right)^{2}
\end{equation*}
\item [(c)] 
\begin{equation*}
g_{\tau\tau}(\tau,\rho)=-\frac{4\sigma^{2}(\sigma-1)}{\dot{b}\left(\frac{a(\tau)}{\sqrt{\sigma}}\right)^{2}}J(\tau,\sigma)^{2},
\end{equation*}
\end{enumerate}
where $\sigma=\sigma_{\tau}(\rho)$ is given by Eq. \eqref{inverse}, $J(\tau,\sigma)$ is the Jacobian determinant given in Eq.\eqref{jacobian}, and $t_{0}=t_{0}(\tau,\rho)$ is defined implicitly by Eq.\eqref{properAlt}.
\end{corollary}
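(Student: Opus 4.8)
The plan is to establish form (c) first by direct algebra from Eqs.\eqref{gtautau} and \eqref{jacobian}, then to read off (b) by specializing (c) to the parameter value $\sigma=\sigma(\tau)$ attached to $t_0$ and invoking Eq.\eqref{jacobian2}, and finally to get (a) by substituting the formula \eqref{dchi5} for $d\chi_{t_0}/d\tau$ into (b). Throughout, fix $(\tau,\rho)$ with $\rho>0$.

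First I would record the dictionary among the three ways of naming the spacetime point $\Psi_\tau(\rho)$. Put $\sigma=\sigma_\tau(\rho)>1$ as in Eq.\eqref{inverse}, and let $t_0=t_0(\tau,\rho)$ be its cosmological time; by Remark \ref{tdecrease} one has $0<t_0<\tau$, and by Theorem \ref{general} (Eq.\eqref{thm1}, equivalently the change of variables taking Eq.\eqref{thm3} to Eq.\eqref{properAlt}) one has $a(t_0)=a(\tau)/\sqrt{\sigma}$, i.e. $\sigma=(a(\tau)/a(t_0))^2=\sigma(\tau)$ in the notation of Eq.\eqref{sigma(tau)}. This yields the three identities I will use repeatedly: $a^2(\tau)-a^2(t_0)=a^2(\tau)(\sigma-1)/\sigma$; since $b=a^{-1}$, $\dot b(a(\tau)/\sqrt{\sigma})=\dot b(a(t_0))=1/\dot a(t_0)$; and, differentiating $\dot b(a(t))\dot a(t)=1$ once more, $\ddot b(a(t))=-\ddot a(t)/\dot a(t)^3$, which is exactly what converts the $\ddot b$-integrals into $\ddot a$-integrals under the substitution $\tilde\sigma=(a(\tau)/a(t))^2$ already used in Eq.\eqref{key2}.

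For (c), abbreviate $B=\dot b(a(\tau)/\sqrt{\sigma})$ and $I=\int_1^{\sigma}\ddot b(a(\tau)/\sqrt{\tilde\sigma})\,(\tilde\sigma\sqrt{\tilde\sigma-1})^{-1}\,d\tilde\sigma$, so that Eq.\eqref{gtautau} reads $g_{\tau\tau}=-\dot a(\tau)^2\big(B+a(\tau)\tfrac{\sqrt{\sigma-1}}{2\sqrt{\sigma}}I\big)^2$ while the second line of Eq.\eqref{jacobian} reads $J(\tau,\sigma)=\tfrac{\dot a(\tau)a(\tau)}{2\sigma\sqrt{\sigma}}\,B\big(\tfrac{\sqrt{\sigma}B}{a(\tau)\sqrt{\sigma-1}}+\tfrac{I}{2}\big)$. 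Multiplying the inner factor of $J$ by $a(\tau)\sqrt{\sigma-1}/\sqrt{\sigma}$ reproduces exactly $B+a(\tau)\tfrac{\sqrt{\sigma-1}}{2\sqrt{\sigma}}I$, so solving the $J$-identity for that inner factor gives $B+a(\tau)\tfrac{\sqrt{\sigma-1}}{2\sqrt{\sigma}}I=\tfrac{2\sigma\sqrt{\sigma-1}}{\dot a(\tau)B}\,J(\tau,\sigma)$; squaring and substituting into $g_{\tau\tau}$ cancels $\dot a(\tau)^2$ and leaves $g_{\tau\tau}=-\tfrac{4\sigma^2(\sigma-1)}{B^2}J(\tau,\sigma)^2$, which is (c). This step is purely algebraic; the only care needed is keeping the common factors $B,\sqrt{\sigma},\sqrt{\sigma-1},a(\tau)$ straight.

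For (b), I would specialize (c) to $\sigma=\sigma(\tau)$ (which is the case at hand), replace $B=1/\dot a(t_0)$, insert the closed form $J(\tau,\sigma(\tau))=\tfrac{a(\tau)}{2\dot a(t_0)\sigma(\tau)\sqrt{\sigma(\tau)}}\,d\chi_{t_0}/d\tau$ from the last line of Eq.\eqref{jacobian2}, and collapse the powers of $\sigma$ using $(\sigma-1)/\sigma=(a^2(\tau)-a^2(t_0))/a^2(\tau)$; the $\dot a(t_0)$'s and all powers of $\sigma$ cancel, leaving $g_{\tau\tau}=-[a^2(\tau)-a^2(t_0)](d\chi_{t_0}/d\tau)^2$. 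For (a), I would substitute the formula \eqref{dchi5} for $d\chi_{t_0}/d\tau$ into (b): since $a(\tau)$ is a common factor of both terms inside the bracket in Eq.\eqref{dchi5}, it cancels the prefactor $a(\tau)$ in the denominator, so $(d\chi_{t_0}/d\tau)^2=\dot a(\tau)^2\big[\tfrac{1}{\dot a(t_0)\sqrt{a^2(\tau)-a^2(t_0)}}-\int_{t_0}^{\tau}\tfrac{\ddot a(t)}{\dot a(t)^2}\tfrac{dt}{\sqrt{a^2(\tau)-a^2(t)}}\big]^2$, and multiplying by $-[a^2(\tau)-a^2(t_0)]$ gives exactly (a). As a cross-check, (a) also follows directly from Eq.\eqref{gtautau} via the substitution $\tilde\sigma=(a(\tau)/a(t))^2$ in the integral — which reverses orientation, with $\tilde\sigma=1\leftrightarrow t=\tau$ and $\tilde\sigma=\sigma\leftrightarrow t=t_0$ — together with the identity $\ddot b(a(t))=-\ddot a(t)/\dot a(t)^3$. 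There is no genuine analytic obstacle here; the point most prone to error is the orientation and the sign carried by $\ddot b$ in that change of variables, since a sign slip would flip the relative sign of the two terms in the bracket of (a) and of Eq.\eqref{dchi5}. (The restriction $\rho>0$ is what keeps all three forms meaningful: as $\rho\to 0$ one has $\sigma\to1$, $t_0\to\tau$, and each expression degenerates into an indeterminate $0\cdot\infty$ form even though $g_{\tau\tau}\to-1$.)
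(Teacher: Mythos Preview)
Your proof is correct and follows exactly the route the paper takes: establish (c) from Eqs.\eqref{gtautau} and \eqref{jacobian}, derive (b) from (c) via Eq.\eqref{jacobian2} together with $\sigma=(a(\tau)/a(t_0))^2$, and obtain (a) from (b) using Eq.\eqref{dchi5}. Your write-up simply fills in the algebra that the paper leaves to the reader.
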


\begin{proof}
The assumption that $\rho>0$ forces $\tau>t_{0}$, where $t_{0}$ is the cosmological time coordinate for the point with Fermi coordinates $(\tau,\rho)$. Part (c) is established by combining Eqs.\eqref{gtautau} and \eqref{jacobian}.  Part (b) follows by combining Part (c) with Eq.\eqref{jacobian2} and Eq.\eqref{parameter} with $t=t_{0}$.  Part (a) now follows from Part (b) and Eq.\eqref{dchi5}.
\end{proof}

\begin{remark}
It is easily shown that alternative expressions for coefficients in Theorems \ref{tau,s} and \ref{fermi} are given by,
\begin{equation}
\frac{a^2(\tau)}{\sigma(\rho)}S^2_k(\chi(\tau,\sigma(\rho)))=a^{2}(t_{0})S^2_k(\chi_{t_{0}}(\tau))
\end{equation}
and
\begin{equation}\label{lambda2}
\lambda_{k}(\tau,\rho) =\frac{a^{2}(t_{0})S^2_k(\chi_{t_{0}}(\tau))-\rho^{2}}{\rho^{4}},
\end{equation}
where $t_{0}=t_{0}(\tau,\rho)$ is defined implicitly by Eq.\eqref{properAlt}.
\end{remark}

\section{Radial Spacelike Geodesics}

In this section we describe properties of spacelike geodesics orthogonal to the worldline $\beta(t)$ of a comoving observer.
With the notation of 
Remark \ref{4dim} and Eq.\eqref{H}, such a spacelike geodesic with initial point $\beta(\tau)$ may be expressed in the form,
\begin{equation}
\Psi_{\tau}(\rho)=(H^{-1}(\tau, \rho),\theta_{0},\phi_{0}),
\end{equation}
for fixed $\tau$ and $0<\rho<\rho_{\mathcal{M}_{\tau}}$.  The following three corollaries follow from Theorem \ref{fermi} and Remark \ref{tdecrease}.

\begin{corollary}
Let $\beta(t)$ be the worldline of a comoving observer in a Robertson-Walker spacetime with regular scale factor $a(t)$.  Then no two spacelike geodesics orthogonal to $\beta(t)$ with different initial points on $\beta(t)$ ever intersect. 
\end{corollary}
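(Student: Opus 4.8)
The plan is to reduce the non-intersection statement to the injectivity of the maps already constructed in Section 4. First I would restrict attention, without loss of generality, to the two-dimensional submanifold $\mathcal{M}_{\theta_0,\varphi_0}$ (as justified by Remark \ref{4dim}, the angular coordinates $\theta_0,\phi_0$ are carried along unchanged by a radial spacelike geodesic, so two such geodesics with the same angular data intersect in $\mathcal{M}$ if and only if their projections intersect in $\mathcal{M}_{\theta_0,\varphi_0}$; and geodesics with different angular data can only meet on the worldline $\beta$ itself, which is excluded since the initial points are required to be distinct points of $\beta$). So it suffices to show: if $\tau_1\neq\tau_2$, the spacelike geodesics $\psi_{\tau_1}$ and $\psi_{\tau_2}$ orthogonal to $\beta$ have disjoint images in $\mathcal{M}_{\theta_0,\varphi_0}$.

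The key observation is that the map $F:U_0\to V_0$, $F(\tau,\sigma)=\psi_\tau(\sigma)$, is a bijection by Lemma \ref{lem1}. The image of the geodesic $\psi_{\tau}$ is exactly $F(\{\tau\}\times[1,\infty))$ together with the point $\beta(\tau)=F(\tau,1)$. Since $F$ is injective on $U_0$, the sets $F(\{\tau_1\}\times[1,\infty))$ and $F(\{\tau_2\}\times[1,\infty))$ are disjoint whenever $\tau_1\neq\tau_2$, because the preimages $\{\tau_1\}\times[1,\infty)$ and $\{\tau_2\}\times[1,\infty)$ are disjoint in $U_0$. Thus the two geodesics cannot intersect at any event with $\chi>0$. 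It remains only to note that they also cannot intersect at $\chi=0$: the only events on $\psi_{\tau_i}$ with $\chi=0$ are the initial points $\beta(\tau_i)$ (this follows from $F(\tau,1)=(\tau,0)$ and the injectivity of $F$ on $U_0$, i.e. no other $\sigma>1$ maps to $\chi=0$ — indeed $\chi(\tau,\sigma)$ is strictly increasing in $\sigma$ by Eq.\eqref{thm2} since the integrand is positive), and $\beta(\tau_1)\neq\beta(\tau_2)$ by hypothesis.

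Assembling these pieces: given distinct initial points $\beta(\tau_1)\neq\beta(\tau_2)$ on $\beta$, the associated orthogonal spacelike geodesics in $\mathcal{M}_{\theta_0,\varphi_0}$ have images contained in the disjoint slices $F(\{\tau_i\}\times[1,\infty))$, hence never meet; lifting back to $\mathcal{M}$ via Remark \ref{4dim} and the observation about angular coordinates completes the argument. The only point requiring a little care — the ``main obstacle,'' though it is modest — is the bookkeeping at the coordinate singularity $\chi=0$ and the reduction from $\mathcal{M}$ to $\mathcal{M}_{\theta_0,\varphi_0}$: one must make sure that ``different initial points'' genuinely prevents a shared endpoint on $\beta$ and that no spurious intersection is introduced or hidden by the singular behaviour of the $(t,\chi)$ chart at the worldline. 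Everything else is an immediate consequence of the bijectivity established in Lemma \ref{lem1} (equivalently, the fact that $(\{\tau,\rho,\theta,\varphi\},\mathcal{U})$ is a genuine coordinate chart, as in Theorem \ref{tau,s}), which is precisely the statement that Fermi ``radial'' coordinates are well defined.
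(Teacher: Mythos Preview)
Your argument is correct and uses precisely the machinery the paper relies on: the paper simply records this corollary as an immediate consequence of Theorem \ref{fermi} (equivalently, of the bijectivity established in Lemma \ref{lem1} and Theorem \ref{tau,s}), without writing out the details you have supplied. Your unpacking of the reduction to $\mathcal{M}_{\theta_0,\varphi_0}$ and the separate treatment of $\chi=0$ is more explicit than the paper's one-line reference, but the underlying idea is identical.
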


\begin{corollary}
Let $\beta(t)$ be the worldline of a comoving observer in a Robertson-Walker spacetime with regular scale factor $a(t)$.  Then no spacelike geodesic orthogonal to $\beta(t)$ includes any point in the cosmological event horizon of $\beta(t)$. 
\end{corollary}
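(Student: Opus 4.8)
The plan is to obtain the statement as an essentially immediate consequence of Corollary \ref{bound}. First I would dispose of the trivial case: if $\int_{t_{0}}^{\infty}a(t)^{-1}\,dt=\infty$ for one (hence every) $t_{0}>0$, then $\beta$ has no cosmological event horizon and the assertion is vacuous, so from that point on I would assume $\chi_{\mathrm{horiz}}(t)<\infty$ for all $t>0$, so that the event horizon of $\beta$ is the hypersurface $\{(t,\chi,\theta,\varphi)\in\mathcal{M}:\chi=\chi_{\mathrm{horiz}}(t)\}$.

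Next I would fix $\tau>0$ and consider the spacelike geodesic $\Psi_\tau(\rho)$, $0\le\rho<\rho_{\mathcal{M}_\tau}$, orthogonal to $\beta$ with initial point $\beta(\tau)$. I would treat the initial point separately: $\Psi_\tau(0)=\beta(\tau)$ has $\chi$-coordinate $0<\chi_{\mathrm{horiz}}(\tau)$, hence is not on the horizon. For $0<\rho<\rho_{\mathcal{M}_\tau}$, let $t_{0}=t_{0}(\tau,\rho)$ be the cosmological time coordinate of $\Psi_\tau(\rho)$; by Remark \ref{tdecrease} the coordinate $t$ strictly decreases along the geodesic away from its initial value $\tau$ while remaining positive, so $0<t_{0}<\tau$ and Corollary \ref{bound} applies. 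By the geometric meaning of $\chi_{t_{0}}(\tau)$ recorded just after Eq.~\eqref{key} (equivalently, from Eqs.~\eqref{sigma(tau)} and \eqref{key}), the $\chi$-coordinate of $\Psi_\tau(\rho)$ is exactly $\chi_{t_{0}}(\tau)$, and Corollary \ref{bound} gives $\chi_{t_{0}}(\tau)<\chi_{\mathrm{horiz}}(t_{0})$; thus $\Psi_\tau(\rho)$ lies strictly inside the horizon. Since $\tau$ and $\rho$ are arbitrary, this would prove the claim.

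I would also note an equivalent and shorter route through Theorem \ref{tau,s}: for $0<\rho<\rho_{\mathcal{M}_\tau}$ we have $(\tau,\rho)\in W$, and since $H:V\to W$ is a diffeomorphism, $H^{-1}(\tau,\rho)\in V$, which means its $\chi$-coordinate is strictly below $\chi_{\mathrm{horiz}}$ evaluated at its $t$-coordinate; the case $\rho=0$ is handled as above.

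I do not expect any real obstacle, since all of the analytic work is already contained in Corollary \ref{bound} (and in the construction of the chart in Theorem \ref{tau,s}). The only point that needs a moment's care is the bookkeeping that every point of $\Psi_\tau$ other than $\beta(\tau)$ possesses a well-defined cosmological time coordinate lying in $(0,\tau)$ — which is exactly Remark \ref{tdecrease} together with positivity of cosmological time — so that Corollary \ref{bound} genuinely covers every such point.
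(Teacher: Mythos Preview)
Your proposal is correct and essentially matches the paper: the paper simply states that this corollary (together with the two adjacent ones) follows from Theorem~\ref{fermi} and Remark~\ref{tdecrease}, which is precisely your second route via the chart diffeomorphism $H:V\to W$. Your first route through Corollary~\ref{bound} is an equally valid and slightly more elementary variant that extracts only the ingredient actually needed, but the underlying content is the same.
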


\begin{corollary}\label{rhotrel}
Cosmological time $t$ decreases to zero along any spacelike geodesic, $\Psi_{\tau}(\rho)$, orthogonal to the path of the Fermi observer at fixed proper time $\tau$, as the proper distance $\rho\rightarrow\rho_{\mathcal{M}_{\tau}}$, and $t$ is strictly decreasing as a function of $\rho$.  Thus, for fixed angular coordinates, the Fermi time coordinate $\tau$ and cosmological time coordinate $t_{0}$ uniquely determine a spacetime point.
\end{corollary}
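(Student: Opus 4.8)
The plan is to extract everything from Theorem \ref{fermi} together with the structure of the map $H$ built in Theorem \ref{tau,s}, so that no new computation is required. First I would recall from Corollary \ref{affinegeodesic} and Remark \ref{4dim} that the spacelike geodesic orthogonal to $\beta$ at $t=\tau$, parametrized by arc length, is $\Psi_{\tau}(\rho)=(t(\tau,\sigma(\rho)),\chi(\tau,\sigma(\rho)),\theta_0,\phi_0)$, where $\sigma(\rho)=\sigma_\tau(\rho)$ is the smooth inverse of $\rho=\rho_\tau(\sigma)$ furnished by Theorem \ref{general} and Eq.\eqref{inverse}. By Eq.\eqref{thm1}, $t(\tau,\sigma(\rho))=b\!\left(a(\tau)/\sqrt{\sigma(\rho)}\right)$; since $\sigma_\tau$ is strictly increasing in $\rho$ and $b$ is strictly increasing (being the inverse of the increasing function $a$), the composition $\rho\mapsto a(\tau)/\sqrt{\sigma_\tau(\rho)}$ is strictly decreasing, hence $\rho\mapsto t(\tau,\sigma(\rho))$ is strictly decreasing. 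This is exactly Remark \ref{tdecrease} reread through the arc-length parametrization.

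Next I would verify the limiting value. As $\rho\to\rho_{\mathcal{M}_\tau}$, by Theorem \ref{general} and Definition \ref{radiusMtau} the parameter $\sigma_\tau(\rho)\to\infty$ (the integral in Eq.\eqref{thm3} is taken up to $\sigma=\infty$ to produce $\rho_{\mathcal{M}_\tau}$, and $\rho_{\mathcal{M}_\tau}<\infty$ by Theorem \ref{ineqrho}). Then $a(\tau)/\sqrt{\sigma_\tau(\rho)}\to 0$, and since $b$ is continuous with $b(0)=0$ (because $a(0)=0$ by Definition \ref{regular}(a)), we get $t(\tau,\sigma(\rho))=b\!\left(a(\tau)/\sqrt{\sigma_\tau(\rho)}\right)\to 0$. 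Combined with the monotonicity just established, $t$ decreases to $0$ along $\Psi_\tau$ as $\rho\to\rho_{\mathcal{M}_\tau}$, and $t$ is strictly decreasing in $\rho$ on $(0,\rho_{\mathcal{M}_\tau})$.

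For the final assertion — that $\tau$ and $t_0$ determine a spacetime point once the angular coordinates are fixed — I would invoke Theorem \ref{tau,s}: the map $H:V_0\to W_0$, $(t,\chi)\mapsto(\tau,\rho)$, is a bijection, so given $(\tau,t_0)$ with $t_0<\tau$ the value $\rho=\rho_\tau(\sigma(\tau))$ (equivalently, the quantity $\chi_{t_0}(\tau)$ on the $\chi$-side, via Eq.\eqref{key}) is uniquely determined; by the strict monotonicity of $t$ in $\rho$ along $\Psi_\tau$, the correspondence $t_0\leftrightarrow\rho$ is itself a strictly decreasing bijection between $(0,\tau)$ and $(0,\rho_{\mathcal{M}_\tau})$, so $(\tau,t_0,\theta_0,\phi_0)$ pins down $\Psi_\tau(\rho)$ uniquely. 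I do not expect a genuine obstacle here; the only point requiring a little care is the passage to the limit $\sigma\to\infty\Rightarrow\rho\to\rho_{\mathcal{M}_\tau}$, which needs $\rho_{\mathcal{M}_\tau}<\infty$ and the continuity of $b$ at $0$ — both already available — together with the observation that $\rho_\tau$ maps $[1,\infty)$ onto $[0,\rho_{\mathcal{M}_\tau})$ bijectively, so its inverse $\sigma_\tau$ indeed blows up precisely as $\rho\uparrow\rho_{\mathcal{M}_\tau}$.
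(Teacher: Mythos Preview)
Your proposal is correct and is precisely the unpacking of what the paper intends: the corollary is stated immediately after the sentence ``The following three corollaries follow from Theorem \ref{fermi} and Remark \ref{tdecrease},'' with no separate proof given. You have simply spelled out why Remark \ref{tdecrease} (strict monotonicity of $t$ in $\rho$ via $t=b(a(\tau)/\sqrt{\sigma(\rho)})$), Definition \ref{radiusMtau} together with Theorem \ref{ineqrho} (so $\sigma_\tau(\rho)\to\infty$ as $\rho\uparrow\rho_{\mathcal{M}_\tau}<\infty$, forcing $t\to b(0)=0$), and the bijectivity in Theorem \ref{tau,s} yield the claim---exactly the route the paper signals.
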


\noindent Part b) and part c) with $\alpha<1$ of the following theorem were deduced in \cite{randles}.

\begin{theorem} \label{ineqrho}
Let $\beta(t)$ be path of a comoving observer in a Robertson-Walker spacetime with regular scale factor $a(t)$.  Then any spacelike geodesic with initial point on $\beta(t)$ and which is orthogonal to $\beta(t)$ has maximum possible proper length, $\rho_{\mathcal{M}_{\tau}}<\infty$.  Moreover,
\begin{enumerate}
\item[(a)] In all cases,
$$\rho_{\mathcal{M}_{\tau}} \leq  \frac{\pi}{2}\frac{1}{H(\tau)}.$$
\item[(b)] If $a(t)$ non inflationary (i.e. $\ddot{a}\leq0$), then
$$\rho_{\mathcal{M}_{\tau}} \leq \frac{1}{H(\tau)}.$$ 
\item[(c)] If $a(t)=t^{\alpha}$ for some  $\alpha>0$, then
$$\rho_{\mathcal{M}_{\tau}}=\tau \frac{\sqrt{\pi}\,\,\Gamma(\frac{1+\alpha}{2\alpha})}{\Gamma(\frac{1}{2\alpha})}.$$
\end{enumerate}
\end{theorem}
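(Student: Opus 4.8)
The plan is to work directly from the closed form $\rho_{\mathcal{M}_\tau}=\int_{0}^{\tau}a(t)\,dt/\sqrt{a^{2}(\tau)-a^{2}(t)}$ in Eq.\eqref{thm3'} and, in each case, to pull a factor that is monotone in $t$ out of the integrand, evaluating what remains by an elementary substitution. For part (a) I would factor $a(t)=\bigl(a(t)/\dot a(t)\bigr)\dot a(t)=H(t)^{-1}\dot a(t)$. By Definition \ref{regular}(c) the Hubble parameter $H$ is non-increasing, so $H(t)^{-1}\le H(\tau)^{-1}$ for $0<t\le\tau$, whence $\rho_{\mathcal{M}_\tau}\le H(\tau)^{-1}\int_{0}^{\tau}\dot a(t)\,dt/\sqrt{a^{2}(\tau)-a^{2}(t)}$. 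The substitution $x=a(t)/a(\tau)$ turns the remaining integral into $\int_{0}^{1}dx/\sqrt{1-x^{2}}=\pi/2$, which simultaneously gives $\rho_{\mathcal{M}_\tau}<\infty$ (so the orthogonal spacelike geodesic does have finite maximal proper length) and the bound $\rho_{\mathcal{M}_\tau}\le \pi/(2H(\tau))$.

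For part (b), the extra hypothesis $\ddot a\le 0$ makes $\dot a$ non-increasing, and here I would change variables by $u=a(t)$, $t=b(u)$, $dt=du/\dot a(b(u))$, so that $\rho_{\mathcal{M}_\tau}=\int_{0}^{a(\tau)}u\,du/\bigl(\dot a(b(u))\sqrt{a^{2}(\tau)-u^{2}}\bigr)$. Since $b$ is increasing and $\dot a$ non-increasing, $\dot a(b(u))\ge\dot a(\tau)$ on $(0,a(\tau)]$; factoring $1/\dot a(\tau)$ out leaves $\int_{0}^{a(\tau)}u\,du/\sqrt{a^{2}(\tau)-u^{2}}=a(\tau)$, giving $\rho_{\mathcal{M}_\tau}\le a(\tau)/\dot a(\tau)=1/H(\tau)$.

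For part (c), with $a(t)=t^{\alpha}$ one has $\rho_{\mathcal{M}_\tau}=\int_{0}^{\tau}t^{\alpha}\,dt/\sqrt{\tau^{2\alpha}-t^{2\alpha}}$; the substitution $t=\tau s$ followed by $w=s^{2\alpha}$ converts this to $\frac{\tau}{2\alpha}\int_{0}^{1}w^{\frac{1+\alpha}{2\alpha}-1}(1-w)^{-1/2}\,dw=\frac{\tau}{2\alpha}B\!\left(\frac{1+\alpha}{2\alpha},\tfrac12\right)$, and the stated formula follows from $B(p,q)=\Gamma(p)\Gamma(q)/\Gamma(p+q)$, $\Gamma(\tfrac12)=\sqrt{\pi}$, and $\Gamma(z+1)=z\Gamma(z)$ applied with $z=1/(2\alpha)$. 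This reproves (b) and (c) uniformly for all $\alpha>0$, not merely the range $\alpha<1$ treated in \cite{randles}.

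I do not expect a serious obstacle: the content is entirely in identifying the right monotone factor to extract in (a) and (b) and in the beta-integral bookkeeping in (c). The only points requiring care are justifying the changes of variables and the convergence of the improper integrals at the endpoint $t=\tau$ (and at $t=0$ when $\dot a(0^{+})=\infty$), but in every case the transformed integral is a standard convergent one, so this is routine rather than a genuine difficulty.
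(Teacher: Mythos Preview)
Your proposal is correct and follows essentially the same approach as the paper: for (a) both factor $a(t)=H(t)^{-1}\dot a(t)$ and bound $H(t)^{-1}\le H(\tau)^{-1}$ before evaluating the remaining integral via $x=a(t)/a(\tau)$; for (b) the paper instead multiplies and divides by $\dot a(t)$ and bounds $1/\dot a(t)\le 1/\dot a(\tau)$ directly in the $t$-integral, which is equivalent to your substitution $u=a(t)$ followed by the same bound; and for (c) the paper simply says ``direct calculation,'' which your beta-integral computation carries out explicitly.
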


\begin{proof}
From Eq.\eqref{thm3'}

\begin{equation}\label{rhomaxeq}
\rho_{\mathcal{M}_{\tau}}=\int_{0}^{\tau}\frac{a(t)}{\dot{a}(t)}\frac{\dot{a}(t)\,dt}{\sqrt{a^{2}(\tau)-a^{2}(t)}}.
\end{equation}
For part a), since $a(t)$ is regular, the Hubble parameter $H(t)$ is decreasing, and therefore $1/H(t)=a(t)/\dot{a}(t)$ is an increasing function.  Thus, by Eq.\eqref{rhomaxeq},
\begin{equation}
\rho_{\mathcal{M}_{\tau}}\leq\frac{1}{H(\tau)}\int_{0}^{\tau}\frac{\dot{a}(t)\,dt}{\sqrt{a^{2}(\tau)-a^{2}(t)}}=\frac{\pi}{2}\frac{1}{H(\tau)}.
\end{equation}
For part b), $\dot{a}(t)$ is a decreasing function, so from Eq.\eqref{rhomaxeq},
\begin{equation}
\rho_{\mathcal{M}_{\tau}}\leq\frac{1}{\dot{a}(\tau)}\int_{0}^{\tau}\frac{a(t)\dot{a}(t)\,dt}{\sqrt{a^{2}(\tau)-a^{2}(t)}}=\frac{1}{H(\tau)}.
\end{equation}
Part c) follows from direct calculation.
\end{proof}

\begin{remark}
The upper bounds in Theorem \ref{ineqrho}a) and b) are sharp.  The inequality in part b) is equality for the case of the Milne universe (c.f.\cite{randles}). Referring to part c),
$$\lim_{\alpha\to\infty} \frac{\alpha \sqrt{\pi}\,\,\Gamma(\frac{1+\alpha}{2\alpha})}{\Gamma(\frac{1}{2\alpha})}=\frac{\pi}{2},$$
so the upper bound in part a) is reached asymptotically for large $\alpha$ for the scale factor $a(t)=t^{\alpha}$.  We note that the inquality in part a) is equality for the de Sitter universe, but its scale factor, $a(t)=\exp(Ht)$, does not satisfy Def.\ref{regular}a.
\end{remark}

\begin{theorem}
Let $a(t)$ be regular, and assume that the angular coordinates $(\theta_{0},\phi_{0})$ are fixed.  Let $d=a(t)\chi$ be Hubble distance from $(t,0)$ to $(t,\chi)$, and let $(\tau,\rho)$ be Fermi coordinates for $(t,\chi)$, so that $\rho$ is the proper distance along the unique spacelike geodesic, $\Psi_{\tau}(\rho)$, containing the point $(t,\chi)$ and orthogonal to $\beta$ at $\beta(\tau)$.  Then,

\begin{equation}\label{comparedist}
\frac{a(t)}{a(\tau)} d\leq \rho\leq\frac{a(\tau)}{a(t)} d.
\end{equation}
\end{theorem}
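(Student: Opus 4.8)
The plan is to write both $\rho$ and $d$ as integrals against a common nonnegative kernel and then apply monotonicity of $a$. By Corollary~\ref{rhotrel}, the point $(t,\chi)$ with Fermi coordinates $(\tau,\rho)$ has $t\le\tau$, and $\chi=\chi_{t}(\tau)$, so $d=a(t)\chi=a(t)\,\chi_{t}(\tau)$. Take $t_{0}=t$. Using Eq.\eqref{key2} we get
\begin{equation*}
d = a(t_{0})\,\chi_{t_{0}}(\tau) = a(t_{0})\,a(\tau)\int_{t_{0}}^{\tau}\frac{dt'}{a(t')\sqrt{a^{2}(\tau)-a^{2}(t')}},
\end{equation*}
while Eq.\eqref{properAlt} gives
\begin{equation*}
\rho = \int_{t_{0}}^{\tau}\frac{a(t')\,dt'}{\sqrt{a^{2}(\tau)-a^{2}(t')}} = \int_{t_{0}}^{\tau}a^{2}(t')\,\frac{dt'}{a(t')\sqrt{a^{2}(\tau)-a^{2}(t')}}.
\end{equation*}
Both integrals converge: $\chi_{t_0}(\tau)<\chi_{\mathrm{horiz}}(t_0)$ by Corollary~\ref{bound}, and $\rho<\rho_{\mathcal{M}_\tau}<\infty$ by Theorem~\ref{ineqrho}.

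Next I would introduce the kernel $K(t')=\bigl(a(t')\sqrt{a^{2}(\tau)-a^{2}(t')}\bigr)^{-1}\ge 0$ on $(t_{0},\tau)$, so that
\begin{equation*}
\frac{a(t_{0})}{a(\tau)}\,d = a^{2}(t_{0})\int_{t_{0}}^{\tau}K(t')\,dt', \qquad
\frac{a(\tau)}{a(t_{0})}\,d = a^{2}(\tau)\int_{t_{0}}^{\tau}K(t')\,dt', \qquad
\rho = \int_{t_{0}}^{\tau}a^{2}(t')\,K(t')\,dt'.
\end{equation*}
Since $a$ is increasing (Definition~\ref{regular}(b)) and $t_{0}\le t'\le\tau$ on the range of integration, we have $a^{2}(t_{0})\le a^{2}(t')\le a^{2}(\tau)$; multiplying through by $K(t')\ge 0$ and integrating yields
\begin{equation*}
a^{2}(t_{0})\int_{t_{0}}^{\tau}K(t')\,dt' \;\le\; \int_{t_{0}}^{\tau}a^{2}(t')\,K(t')\,dt' \;\le\; a^{2}(\tau)\int_{t_{0}}^{\tau}K(t')\,dt',
\end{equation*}
which is exactly $\tfrac{a(t_{0})}{a(\tau)}d \le \rho \le \tfrac{a(\tau)}{a(t_{0})}d$. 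Rewriting $t_{0}=t$ gives Eq.\eqref{comparedist}.

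There is no real obstacle here; the only points requiring care are the bookkeeping identification $t_{0}=t$ (justified by Remark~\ref{tdecrease} and Corollary~\ref{rhotrel}, which guarantee $t\le\tau$ and that cosmological time is the correct second coordinate) and the finiteness of the two integrals, both of which are already established earlier in the paper. The heart of the argument is simply that $\rho$ and $d$ share the kernel $K$ up to a weight that lies between $a^{2}(t)$ and $a^{2}(\tau)$.
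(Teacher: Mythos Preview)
Your proof is correct and follows essentially the same route as the paper: both arguments rewrite $\rho$ and $d$ via the integral representations \eqref{key2} and \eqref{properAlt} and then compare integrands pointwise using the monotonicity of $a$ on $[t_0,\tau]$. Your packaging via the common kernel $K$ is slightly more symmetric than the paper's, which treats the two inequalities separately, but the underlying estimate is identical.
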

\begin{proof}
From Eqs.\eqref{key2} and \eqref{properAlt},

\begin{equation}
\begin{split}
\rho&=\int_{t_{0}}^{\tau}\frac{a(t)}{\sqrt{a^{2}(\tau)-a^{2}(t)}}\,dt\\ 
&= \int_{t_{0}}^{\tau}\frac{a^{2}(t)}{a(\tau)}\frac{1}{a(t)}\frac{a(\tau)}{\sqrt{a^{2}(\tau)-a^{2}(t)}}\,dt\\
&\leq a(\tau)\chi_{t_{0}}(\tau)=\frac{a(\tau)}{a(t_{0})}a(t_{0})\chi_{t_{0}}(\tau).
\end{split}
\end{equation}
Recall that $\chi_{t_{0}}(\tau)$ is the value of the $\chi$-coordinate of the spacetime point with $t$-coordinate $t_{0}$ on the spacelike geodesic orthogonal to $\beta$ with initial point $(\tau,0)$. Then replacing $t_{0}$ by $t$ yields $\rho \leq d\, a(\tau)/a(t)$.  Similarly,

\begin{equation}
\begin{split}
\chi_{t_{0}}(\tau)&=\int_{t_{0}}^{\tau}\frac{1}{a(t)}\frac{a(\tau)}{\sqrt{a^{2}(\tau)-a^{2}(t)}}\,dt\\
&=\int_{t_{0}}^{\tau}\frac{a(\tau)}{a^{2}(t)}\frac{a(t)}{\sqrt{a^{2}(\tau)-a^{2}(t)}}\,dt\\
&\leq \frac{a(\tau)}{a^{2}(t_{0})}\rho
\end{split}
\end{equation}
Replacing $t_{0}$ by $t$ and rearranging terms yields the first inequality in \eqref{comparedist}.
\end{proof}

\begin{theorem}\label{rhoincrease}
If $a(t)$ is regular and the left side of Eq. \eqref{condition} is bounded below by a constant $-K\leq-1$, then for all $\tau>0$,

\begin{equation}
\frac{d\rho_{\mathcal{M}_{\tau}}}{d\tau}\geq0.
\end{equation}
\end{theorem}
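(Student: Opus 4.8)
The plan is to differentiate the closed form $\rho_{\mathcal{M}_{\tau}}=\int_{0}^{\tau}a(t)\big(a^{2}(\tau)-a^{2}(t)\big)^{-1/2}\,dt$ from Eq.\eqref{thm3'} in $\tau$ and to check that the resulting integrand is pointwise nonnegative, using only the regularity inequality $a(t)\ddot a(t)/\dot a(t)^{2}\le 1$ and $\dot a(\tau)>0$. The only delicate point is the integrable singularity of the integrand at the upper endpoint $t=\tau$, so the first step is to remove it by the substitution $x=a(t)/a(\tau)$ (equivalently $\sigma=1/x^{2}$ in the $\sigma$-form of Eq.\eqref{thm3'}), which turns the integral into one over the fixed interval $[0,1]$:
\begin{equation*}
\rho_{\mathcal{M}_{\tau}}=a(\tau)\int_{0}^{1}\frac{x\,\dot b\big(a(\tau)x\big)}{\sqrt{1-x^{2}}}\,dx ,
\end{equation*}
where $b$ is the inverse function of $a$ (Definition \ref{regular}(b)) and the integral is finite by Theorem \ref{ineqrho}.

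Next I would differentiate under the integral sign. By the product rule together with Leibniz's rule and dominated convergence (invoked as in the proof of Lemma \ref{increase}; the endpoint $x=1$ causes no trouble since $\ddot b$ is continuous near $a(\tau)$, and the endpoint $x=0$ is dealt with in the last paragraph),
\begin{equation*}
\frac{d\rho_{\mathcal{M}_{\tau}}}{d\tau}
=\dot a(\tau)\int_{0}^{1}\frac{x}{\sqrt{1-x^{2}}}\Big[\dot b\big(a(\tau)x\big)+a(\tau)x\,\ddot b\big(a(\tau)x\big)\Big]\,dx .
\end{equation*}
Differentiating the identity $b(a(t))=t$ once, then twice, gives $\dot b(a(t))=1/\dot a(t)$ and $\ddot b(a(t))=-\ddot a(t)/\dot a(t)^{3}$, so on writing $t=b(a(\tau)x)$, i.e. $a(t)=a(\tau)x$, the bracket equals
\begin{equation*}
\frac{1}{\dot a(t)}-a(t)\frac{\ddot a(t)}{\dot a(t)^{3}}
=\frac{1}{\dot a(t)}\left(1-\frac{a(t)\ddot a(t)}{\dot a(t)^{2}}\right),
\end{equation*}
which is nonnegative by Definition \ref{regular}(c). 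Since $\dot a(\tau)>0$ and $x/\sqrt{1-x^{2}}\ge 0$, the entire integrand is nonnegative, so $d\rho_{\mathcal{M}_{\tau}}/d\tau\ge 0$. Undoing the substitution also yields the compact formula $d\rho_{\mathcal{M}_{\tau}}/d\tau=H(\tau)\int_{0}^{\tau}\big(1-a(t)\ddot a(t)/\dot a(t)^{2}\big)a(t)\big(a^{2}(\tau)-a^{2}(t)\big)^{-1/2}\,dt$.

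The step I expect to be the main obstacle is the rigorous justification of the interchange of differentiation and integration, because of the $(1-x^{2})^{-1/2}$ factor at $x=1$ and the possibility that $\dot b,\ddot b$ are unbounded near $x=0$ (this happens precisely when $\dot a(0)=0$, e.g. for $a(t)=t^{\alpha}$ with $\alpha>1$). If producing a clean dominating function near $x=0$ is awkward, the fallback is to first run the argument for the truncated radius $\rho(\tau,t_{0})=\int_{t_{0}}^{\tau}a(t)\big(a^{2}(\tau)-a^{2}(t)\big)^{-1/2}\,dt$ with $t_{0}>0$ fixed (Eq.\eqref{properAlt}): after the same substitution $x=a(t)/a(\tau)$ the lower limit becomes $a(t_{0})/a(\tau)>0$, so the $x=0$ endpoint disappears, Leibniz's rule applies on any compact $\tau$-interval, and the boundary term generated by the $\tau$-dependent lower limit is manifestly nonnegative (the lower limit is decreasing in $\tau$, the integrand positive). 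This gives $\partial_{\tau}\rho(\tau,t_{0})\ge 0$ for every fixed $t_{0}$, and since $\rho_{\mathcal{M}_{\tau}}=\lim_{t_{0}\to 0^{+}}\rho(\tau,t_{0})$ is an increasing limit of functions nondecreasing in $\tau$, it is itself nondecreasing in $\tau$.
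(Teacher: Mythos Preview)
Your approach is essentially the same as the paper's: both rewrite the integral over a $\tau$-independent domain (your $x=a(t)/a(\tau)\in[0,1]$ is just the paper's $\sigma=1/x^{2}\in[1,\infty)$), differentiate under the integral sign, and arrive at the identical formula $d\rho_{\mathcal{M}_{\tau}}/d\tau=H(\tau)\int_{0}^{\tau}\big(1-a(t)\ddot a(t)/\dot a(t)^{2}\big)a(t)\big(a^{2}(\tau)-a^{2}(t)\big)^{-1/2}\,dt\ge 0$. The one point on which the paper is more explicit is precisely the step you flagged: rather than your truncation fallback, the paper supplies a direct dominating function by chaining the regularity inequality $\ddot a/\dot a^{2}\le 1/a$ with the monotonicity of $H$, obtaining $\big|\tfrac{d}{d\tau}\dot b(a(\tau)/\sqrt{\sigma})\big|\le \sqrt{\sigma}/a(\tau)$, which is integrable against $\sigma^{-3/2}(\sigma-1)^{-1/2}$ over $[1,\infty)$ and handles the $x\to 0$ (i.e.\ $\sigma\to\infty$) endpoint in one stroke.
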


\begin{proof}\label{dominated}
Using Eq.\eqref{thm3'}, the computation of $d\rho_{\mathcal{M}_{\tau}}/d\tau$ requires the interchange of the integral and a derivative with respect to $\tau$.  To justify this, observe first that since $a(t)$ is regular, then for any $\sigma\geq1$ there exists $t\leq\tau$ such that $\sigma=a^{2}(\tau)/a^{2}(t)$.  Therefore,

\begin{equation}\label{dominated}
\begin{split}
\left|\frac{d}{d\tau}\dot{b}\left(\frac{a(\tau)}{\sqrt{\sigma}}\right)\right|
=&\left|\ddot{b}\left(\frac{a(\tau)}{\sqrt{\sigma}}\right)\frac{\dot{a}(\tau)}{\sqrt{\sigma}}\right|\\
=&\,\,\frac{1}{\dot{a}(t)}\left|\frac{\ddot{a}(t)}{\dot{a}(t)^{2}}\right|\,\frac{\dot{a}(\tau)}{a(\tau)}\,a(t)\\
\leq&\,\,\frac{K}{a(t)}\frac{H(\tau)}{H(t)}\\
\leq&\,\,\frac{K}{a(t)}=K\frac{\sqrt{\sigma}}{\,a(\tau)},
\end{split}
\end{equation}
where in the second line we used the chain rule, and in the third line we used Definition \ref{regular} including the fact that the Hubble parameter, $H(t)$, is a decreasing function of $t$.\\

\noindent Now from Eqs.\eqref{thm3'},\eqref{dominated}, and the Lebesgue dominated convergence theorem,

\begin{equation}
\frac{d\rho_{\mathcal{M}_{\tau}}}{d\tau}=\frac{\dot{a}(\tau)}{2}\int_{1}^{\infty}\frac{\dot{b}\left(\frac{a(\tau)}{\sqrt{\sigma}}\right)d\sigma}{\sigma^{3/2}\sqrt{\sigma-1}}+\frac{\dot{a}(\tau)a(\tau)}{2}\int_1^{\infty}\frac{\ddot{b}\left(\frac{a(\tau)}{\sqrt{\sigma}}\right)}{\sigma^2\sqrt{\sigma-1}}d\sigma,
\end{equation}
or, equivalently, with the change of variables, $\sigma=a^{2}(\tau)/a^{2}(t)$, 
\begin{equation}
\frac{d\rho_{\mathcal{M}_{\tau}}}{d\tau}=\frac{\dot{a}(\tau)}{a(\tau)}\int_{0}^{\tau}\left(1-\frac{a(t)\ddot{a}(t)}{\dot{a}(t)^{2}}\right)\frac{a(t)\,dt}{\sqrt{a^{2}(\tau)-a^{2}(t)}}\geq0,
\end{equation}
by Definition\ref{regular}.
\end{proof}

\begin{theorem}\label{infiniteradius}
With the same assumptions as in Theorem \ref{rhoincrease} and if $\ddot{a}(t)\leq0$ for all $t$ sufficiently large, then,

\begin{equation}
\lim_{\tau\to\infty}\rho_{\mathcal{M}_{\tau}}=\infty.
\end{equation}
\end{theorem}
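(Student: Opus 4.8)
\emph{Proof idea.} The plan is not to control the full integral $\rho_{\mathcal M_\tau}=\int_0^\tau a(t)\,(a^2(\tau)-a^2(t))^{-1/2}\,dt$ from Eq.\eqref{thm3'}, but only a single tail piece on which the hypothesis is available. First I would fix $T>0$ with $\ddot a(t)\le 0$ for all $t\ge T$; such $T$ exists by assumption, and then $\dot a(T)>0$ (otherwise the non-increasing function $\dot a$ would be $\le 0$ on all of $[T,\infty)$, forcing $a$ constant there and contradicting that $a$ is increasing with an inverse). On $[T,\infty)$ one has $\dot a(t)\le\dot a(T)$, hence $a(t)/\dot a(t)\ge a(t)/\dot a(T)$ there. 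Discarding the integral over $[0,T]$ and inserting the harmless factor $\dot a(t)/\dot a(t)$, I would estimate, for $\tau>T$,
\begin{equation*}
\rho_{\mathcal M_\tau}\;\ge\;\int_T^\tau\frac{a(t)}{\dot a(t)}\cdot\frac{\dot a(t)\,dt}{\sqrt{a^2(\tau)-a^2(t)}}\;\ge\;\frac{1}{\dot a(T)}\int_T^\tau\frac{a(t)\dot a(t)\,dt}{\sqrt{a^2(\tau)-a^2(t)}}\;=\;\frac{\sqrt{a^2(\tau)-a^2(T)}}{\dot a(T)},
\end{equation*}
the last equality because $-\sqrt{a^2(\tau)-a^2(t)}$ is an antiderivative of the integrand that extends continuously to $t=\tau$ (everything is finite by Theorem \ref{ineqrho}). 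Since a regular scale factor is increasing with inverse defined on all of $[0,\infty)$, $a(\tau)\to\infty$ as $\tau\to\infty$, so the right-hand side diverges, which is the claim.

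This is exactly the estimate behind Theorem \ref{ineqrho}(b) run in the opposite direction: there one uses $\dot a(t)\ge\dot a(\tau)$ on $[0,\tau]$ to bound $\rho_{\mathcal M_\tau}$ \emph{above} by $1/H(\tau)$, whereas here one uses $\dot a(t)\le\dot a(T)$ on the eventually-concave tail $[T,\tau]$ to bound it \emph{below} by something unbounded. An alternative, more in the measure-theoretic style already used in the paper, is to substitute $a(t)=a(\tau)\sin\phi$, which turns Eq.\eqref{thm3'} into $\rho_{\mathcal M_\tau}=\int_0^{\pi/2}\frac{d\phi}{H(b(a(\tau)\sin\phi))}$ with $H=\dot a/a$ the Hubble parameter; regularity makes $H$ non-increasing, its limit must be $0$ (a positive asymptotic Hubble rate would force exponential growth of $a$, incompatible with the at-most-linear growth imposed by $\ddot a\le 0$ for large $t$), and $b(a(\tau)\sin\phi)\to\infty$ for each fixed $\phi\in(0,\pi/2)$, so Fatou's lemma gives $\liminf_{\tau\to\infty}\rho_{\mathcal M_\tau}\ge\int_0^{\pi/2}(+\infty)\,d\phi=+\infty$.

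The only real obstacle is that the first estimate concludes precisely because $a(\tau)\to\infty$; if one prefers not to read unboundedness out of Definition \ref{regular}(b), the remaining case $L:=\sup_t a(t)<\infty$ must be treated separately. There, concavity of $a$ on $[T,\tau]$ gives $a(\tau)-a(t)\le\dot a(t)(\tau-t)\le\dot a(T)(\tau-t)$, hence $a^2(\tau)-a^2(t)\le 2L\dot a(T)(\tau-t)$, and therefore
\begin{equation*}
\rho_{\mathcal M_\tau}\;\ge\;\int_T^\tau\frac{a(T)\,dt}{\sqrt{2L\dot a(T)(\tau-t)}}\;=\;\frac{2a(T)\sqrt{\tau-T}}{\sqrt{2L\dot a(T)}}\;\longrightarrow\;\infty\quad(\tau\to\infty).
\end{equation*}
So the whole argument is a short dichotomy — unbounded versus bounded scale factor — each branch a one-line estimate obtained by localizing to the tail $[T,\tau]$ and using $\dot a(t)\le\dot a(T)$ there. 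Monotonicity of $\tau\mapsto\rho_{\mathcal M_\tau}$ (Theorem \ref{rhoincrease}) is a convenient device for phrasing the conclusion as a contradiction against a finite upper bound, but is not logically needed, since the displayed lower bounds already tend to infinity.
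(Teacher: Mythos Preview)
Your argument is correct and takes a somewhat different route from the paper's. The paper bounds $\rho_{\mathcal M_\tau}\ge\frac{1}{a(\tau)}\int_0^\tau a(t)\,dt$ (by replacing $\sqrt{a^2(\tau)-a^2(t)}$ with the larger $a(\tau)$), invokes L'H\^opital's rule to deduce $\lim_{\tau\to\infty}\rho_{\mathcal M_\tau}\ge\lim_{\tau\to\infty}1/H(\tau)$, and then argues that $a(\tau)\to\infty$ while $\dot a(\tau)$ stays bounded on the eventually-concave tail. Your main line is more direct: you localize to $[T,\tau]$, use $\dot a(t)\le\dot a(T)$ there, and evaluate the resulting integral in closed form to obtain $\rho_{\mathcal M_\tau}\ge\sqrt{a^2(\tau)-a^2(T)}/\dot a(T)$, which diverges outright. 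Both arguments rest on the same two facts---unboundedness of $a$ from Definition~\ref{regular}(b) and boundedness of $\dot a$ on the tail---but yours avoids L'H\^opital and makes explicit the duality with the upper bound in Theorem~\ref{ineqrho}(b), which is a nice structural observation. Your Fatou-type alternative is also valid and elegant; the paper's route has the minor conceptual advantage of exhibiting $1/H(\tau)$ as the natural asymptotic lower bound. Your separate treatment of the case $\sup_t a(t)<\infty$ is unnecessary, since regularity already makes $a$ a bijection of $[0,\infty)$ onto itself, as you note.
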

\begin{proof}
Using Definition \ref{radiusMtau},

\begin{equation}\label{rholimit}
\rho_{\mathcal{M}_{\tau}}=\int_{0}^{\tau}\frac{a(t)\,dt}{\sqrt{a^{2}(\tau)-a^{2}(t)}}\geq\frac{1}{a(\tau)}\int_{0}^{\tau}a(t)dt.
\end{equation}
By Theorem \ref{rhoincrease}, $\rho_{\mathcal{M}_{\tau}}$ is increasing, so taking the limit of both sides of Eq. \eqref{rholimit} and using L'H\^{o}pital's rule gives,
\begin{equation}
\lim_{\tau\to\infty}\rho_{\mathcal{M}_{\tau}}\geq \lim_{\tau\to\infty}\frac{1}{H(\tau)}= \lim_{\tau\to\infty} \frac{a(\tau)}{\dot{a}(\tau)}=\infty,
\end{equation}
because $a(\tau)$ increases to infinity, and $\dot{a}(\tau)$ must be bounded.
\end{proof}

\begin{remark}\label{finiteuniverse}
The conclusion of Theorem \ref{infiniteradius} may or may not hold for the inflationary case.  If $a(t)=t^{\alpha}$, then $\rho_{\mathcal{M}_{\tau}}\to\infty$ as $\tau\to\infty$, by Theorem \ref{ineqrho}(c), even for the inflationary cases, $\alpha>1$.  By contrast if $a(t)=\sinh t$ (an inflationary scale factor for an empty universe with positive cosmological constant and negative curvature, i.e., $k=-1$), then $\rho_{\mathcal{M}_{\tau}}\to1$ as $\tau\to\infty$.  The scale factor,

\begin{equation}\label{lambdamatter2}
a(t)= A\left[\sinh\left(\frac{3}{2}\sqrt{\frac{\Lambda}{3}}\,\gamma \,t\right)\right]^{2/3\gamma},
\end{equation}
is described in Example \ref{2/3}.  For small $t$, $\ddot{a}(t)<0$, while $\ddot{a}(t)>0$ for all $t$ sufficiently large.  A calculation for $\gamma = 1$ shows that,
\begin{equation}
 \lim_{\tau\to\infty} \frac{a(\tau)}{\dot{a}(\tau)}=\sqrt{\frac{3}{\Lambda}},
\end{equation}
and thus by Theorems \ref{ineqrho} and \ref{rhoincrease}, whose hypotheses hold here, 

$$\rho_{\mathcal{M}_{\tau}} \leq  \frac{\pi}{2}\sqrt{\frac{3}{\Lambda}},$$
for all $\tau$. Thus, the proper distance to the big bang from any comoving observer in this cosmology is bounded for all proper times of the observer.
\end{remark}

\section{Relative Velocities and Hubble Inequalities}

\noindent For fixed angular coordinates $\theta_{0},\phi_{0}$, let the 4-velocity $u'$ at a point $q\in\mathcal{M}_{\tau}$ of a radially moving test particle with worldline $\beta'$ be given by,
\begin{equation}
u'=\left. \dot{\tau}\frac{\partial }{\partial \tau }\right| _{q} +\left. \dot{\rho}\frac{\partial }{\partial \rho }\right| _{q},
\end{equation}
where the overdot indicates differentiation with respect to proper time of $\beta' $. The Fermi relative velocity of $u'$ with respect to $\beta$ at proper time $\tau$ is given by,
\begin{equation}\label{fermidefinition2}
v_{\mathrm{Fermi}}=\left. \frac{d\rho}{d\tau} \,\frac{\partial}{\partial\rho}\right| _{\beta(\tau)},
\end{equation}
so that $v_{\mathrm{Fermi}}$ is in the tangent space of $\beta(\tau)$.  The requirement that $g(u',u')=-1$ forces $\|v_{\mathrm{Fermi}}\|<\sqrt{-g_{\tau\tau}(\tau, \rho)}$, which is the non local speed of light in the radial direction, relative to the Fermi observer.  

\begin{definition}[\cite{bolos}]\label{defkin}
\label{kinrelvel}Let $p,q \in\mathcal{M}_{\tau}$ and let $u$, $u^{\prime }$ be 4-velocities at $p$ and $q$ respectively. The kinematic
relative velocity of $u^{\prime }$ with respect to $u$
is the unique vector $v_{\mathrm{kin}}\in u^{\bot }$ such that,
\begin{equation*}
\tau _{qp}u^{\prime }=\gamma \left( u+v_{\mathrm{kin}}\right),
\end{equation*}
where $\tau _{qp}$ is the parallel transport operator along the unique geodesic orthogonal to $u$ from $q$ to $p$ and $\gamma $ is a (uniquely determined) scalar. Equivalently,
\begin{equation*}
\label{fkinrelvel} v_{\mathrm{kin}}\equiv\frac{1}{-g\left( \tau
_{qp}u^{\prime },u\right) }\tau _{qp}u^{\prime }-u.
\end{equation*}
\end{definition}

\noindent We consider relative velocities of comoving test particles.  It was shown in \cite{Bolos12} that for a comoving particle, with $\chi=\chi_{0}$ fixed, 

\begin{equation}\label{comovingkin}
\|v_{\mathrm{kin}}\|=\sqrt{1-\frac{a^{2}(t_{0})}{a^{2}(\tau)}}.
\end{equation}
where $t_{0}$ is the cosmological time coordinate of the comoving test particle at Fermi time $\tau$.  The following theorem and corollary were given in \cite{Bolos12}.

\begin{theorem}[\cite{Bolos12}]\label{findmetricA}
For a Robertson-Walker spacetime with scale factor $a(t)$ that is a smooth, increasing, unbounded function of $t$, the kinematic and Fermi speeds of any test particle, in the Fermi coordinate chart, undergoing radial motion with respect to a comoving observer, determine the Fermi metric tensor element $g_{\tau\tau}$ at the spacetime point of the particle, via,
\begin{equation*}
g_{\tau\tau}(\tau,\rho)=-\frac{\|v_{\mathrm{Fermi}}\|^2}{\|v_{\mathrm{kin}}\|^2}.
\end{equation*}
\end{theorem}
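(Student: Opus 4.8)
\noindent The plan is to reduce the computation to the two-dimensional totally geodesic submanifold $\mathcal{M}_{\theta_{0},\varphi_0}$ obtained by fixing the angular coordinates, as in Remark \ref{4dim}. There the Fermi line element of Theorem \ref{tau,s} becomes $ds^2 = g_{\tau\tau}(\tau,\rho)\,d\tau^2 + d\rho^2$, with no cross term $d\tau\,d\rho$, with $g_{\rho\rho}\equiv 1$, and with $g_{\tau\tau}(\tau,0) = -(\dot a(\tau))^2(\dot b(a(\tau)))^2 = -1$ by Eq.\eqref{gtautau} together with $\dot b(a(\tau)) = 1/\dot a(\tau)$. Since the radial worldline $\beta'$ of the test particle and the spacelike geodesic $\psi_\tau$ joining $q\in\mathcal{M}_\tau$ to $p=\beta(\tau)=(\tau,0)$ both lie in $\mathcal{M}_{\theta_{0},\varphi_0}$, and this submanifold is totally geodesic (by the reflection symmetry of the Robertson--Walker metric in the angular variables), every parallel transport occurring in Definition \ref{defkin} stays inside it; the whole argument is thus a $1{+}1$-dimensional one.

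\noindent First I would dispose of the easy half. By Eq.\eqref{fermidefinition2}, $v_{\mathrm{Fermi}} = (d\rho/d\tau)\,\partial_\rho|_{\beta(\tau)}$ with $d\rho/d\tau = \dot\rho/\dot\tau$ along $\beta'$ (overdot $=$ $\beta'$-proper-time derivative), so $g_{\rho\rho}\equiv 1$ gives $\|v_{\mathrm{Fermi}}\|^2 = (\dot\rho/\dot\tau)^2$. The constraint $g(u',u')=-1$ with $u'=\dot\tau\,\partial_\tau+\dot\rho\,\partial_\rho$ reads $g_{\tau\tau}(\tau,\rho)\dot\tau^2 = -1-\dot\rho^2<0$, so $\dot\tau\neq 0$ and the quotient to be computed is well defined whenever $\dot\rho\neq 0$.

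\noindent The heart of the proof is the identification of the parallel transport operator $\tau_{qp}$ along the unique geodesic through $p$ orthogonal to $u=\partial_\tau|_p$ that passes through $q=(\tau,\rho)$. By Corollary \ref{affinegeodesic} and the construction of the Fermi coordinates, this geodesic is exactly the coordinate line $\rho\mapsto(\tau,\rho)$, parametrized by arc length, whose unit tangent field is $\partial_\rho$; being the velocity field of a geodesic it is parallel along it, so $\tau_{qp}(\partial_\rho|_q)=\partial_\rho|_p$. Within the $1{+}1$-dimensional tangent planes, $\tau_{qp}$ is a linear isometry preserving orientation and time orientation, and since it fixes the unit spacelike direction $\partial_\rho$ it must act as the identity in the associated orthonormal frames; hence it carries the unit timelike vector $\partial_\tau|_q/\sqrt{-g_{\tau\tau}(\tau,\rho)}$ to $\partial_\tau|_p/\sqrt{-g_{\tau\tau}(\tau,0)} = \partial_\tau|_p$, i.e.\ $\tau_{qp}(\partial_\tau|_q) = \sqrt{-g_{\tau\tau}(\tau,\rho)}\,\partial_\tau|_p$.

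\noindent The rest is bookkeeping. Applying $\tau_{qp}$ to $u'$ gives $\tau_{qp}u' = \dot\tau\sqrt{-g_{\tau\tau}(\tau,\rho)}\,\partial_\tau|_p+\dot\rho\,\partial_\rho|_p$, so $-g(\tau_{qp}u',u)=\dot\tau\sqrt{-g_{\tau\tau}(\tau,\rho)}$ (using $g(\partial_\tau|_p,\partial_\tau|_p)=-1$ and $g(\partial_\rho|_p,\partial_\tau|_p)=0$). The second formula of Definition \ref{defkin} then yields
$$v_{\mathrm{kin}} \;=\; \frac{\tau_{qp}u'}{-g(\tau_{qp}u',u)}-u \;=\; \frac{\dot\rho}{\dot\tau\sqrt{-g_{\tau\tau}(\tau,\rho)}}\,\partial_\rho|_p,$$
so $\|v_{\mathrm{kin}}\|^2 = \dot\rho^2/\big(\dot\tau^2(-g_{\tau\tau}(\tau,\rho))\big)$. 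Dividing, the factor $(\dot\rho/\dot\tau)^2$ cancels and $\|v_{\mathrm{Fermi}}\|^2/\|v_{\mathrm{kin}}\|^2 = -g_{\tau\tau}(\tau,\rho)$, which is the assertion. I expect the one genuinely delicate point to be justifying the description of $\tau_{qp}$: one must verify that the radial plane is totally geodesic so parallel transport does not leave it, that the geodesic of Definition \ref{defkin} really is the $\rho$-coordinate line, and that an orientation- and time-orientation-preserving isometry of a Lorentzian $2$-plane fixing a unit spacelike vector is forced to be trivial. The hypotheses that $a$ be smooth and increasing are exactly what make $\psi_\tau$ and the coordinates $(\tau,\rho)$ available (Theorem \ref{general}); unboundedness plays no role here beyond guaranteeing that the Fermi chart is large enough to contain the point $q$ under discussion.
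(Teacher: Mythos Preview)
Your proof is correct. Note, however, that the paper does not actually supply its own proof of this theorem: it is quoted from \cite{Bolos12} (the text preceding the statement reads ``The following theorem and corollary were given in \cite{Bolos12}''), so there is no in-paper argument to compare against.

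On the substance: your computation is sound. The key step---identifying the parallel transport $\tau_{qp}$ along the $\rho$-coordinate line---is handled correctly. Since the Fermi metric in the radial plane has $g_{\tau\rho}=0$ and $g_{\rho\rho}\equiv 1$, the vector field $\partial_\rho$ is the unit tangent to an affinely parametrized geodesic and is therefore parallel along it; your orthonormal-frame argument (a Lorentz isometry of a $2$-plane fixing a unit spacelike vector and preserving time orientation is the identity) then forces $\tau_{qp}\big(\partial_\tau|_q/\sqrt{-g_{\tau\tau}(\tau,\rho)}\big)=\partial_\tau|_p$. One can also verify this directly from the Christoffel symbols: in the $1{+}1$ metric $ds^2=g_{\tau\tau}\,d\tau^2+d\rho^2$ one has $\Gamma^\tau_{\rho\tau}=\tfrac12\,g_{\tau\tau}^{-1}\partial_\rho g_{\tau\tau}$ and $\Gamma^\rho_{\rho\tau}=0$, so the parallel-transport equation for $V=V^\tau\partial_\tau$ along $\rho$ gives $V^\tau\propto |g_{\tau\tau}|^{-1/2}$, matching your claim. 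The remaining algebra---computing $\tau_{qp}u'$, extracting $v_{\mathrm{kin}}$, and taking the ratio---is routine and correct. The only cosmetic point is that the identity $\|v_{\mathrm{Fermi}}\|^2/\|v_{\mathrm{kin}}\|^2=-g_{\tau\tau}$ presupposes $\dot\rho\neq 0$ (equivalently $\|v_{\mathrm{kin}}\|\neq 0$), which you note; when $\dot\rho=0$ both speeds vanish and the statement should be read as a limit or simply excluded.
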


\begin{corollary}\label{superluminalcor}
With the same assumptions as above, the Fermi relative velocity of a radially moving test particle at position $(\tau, \rho)$  within a Fermi coordinate chart satisfies \begin{equation*}
\|v_{\mathrm{Fermi}}\|<\sqrt{-g_{\tau\tau}(\tau, \rho)},
\end{equation*}
and it is possible for the Fermi speed of a test particle to exceed the central observer's local speed of light ($c=1$) if and only if $-g_{\tau\tau}(\tau, \rho) >1$.
\end{corollary}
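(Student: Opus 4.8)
The statement to prove is Corollary \ref{superluminalcor}, which has two parts: first, the inequality $\|v_{\mathrm{Fermi}}\| < \sqrt{-g_{\tau\tau}(\tau,\rho)}$; second, the equivalence that a superluminal Fermi speed (exceeding $c=1$) is possible precisely when $-g_{\tau\tau}(\tau,\rho) > 1$. The plan is to derive both from Theorem \ref{findmetricA} together with the physical constraint that $\|v_{\mathrm{kin}}\|$ is a genuine (subluminal) relative speed, i.e. $0 \le \|v_{\mathrm{kin}}\| < 1$, which is guaranteed by Eq.~\eqref{comovingkin} for comoving particles and, more generally, follows from the fact that $v_{\mathrm{kin}}$ (Definition \ref{defkin}) is the spatial part of a unit timelike vector decomposed against another unit timelike vector.

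First I would establish the inequality. By Theorem \ref{findmetricA}, for a radially moving test particle we have $g_{\tau\tau}(\tau,\rho) = -\|v_{\mathrm{Fermi}}\|^2 / \|v_{\mathrm{kin}}\|^2$, hence $\|v_{\mathrm{Fermi}}\|^2 = -g_{\tau\tau}(\tau,\rho)\,\|v_{\mathrm{kin}}\|^2$. Since $\|v_{\mathrm{kin}}\| < 1$ (the kinematic relative speed of one observer relative to another is always strictly less than the local light speed, as it is the ordinary velocity appearing in the decomposition $\tau_{qp}u' = \gamma(u + v_{\mathrm{kin}})$ of a future-pointing unit timelike vector), we get $\|v_{\mathrm{Fermi}}\|^2 < -g_{\tau\tau}(\tau,\rho)$, i.e. $\|v_{\mathrm{Fermi}}\| < \sqrt{-g_{\tau\tau}(\tau,\rho)}$. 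This reproduces the bound already noted after Eq.~\eqref{fermidefinition2}, namely that $\sqrt{-g_{\tau\tau}}$ is the non-local radial speed of light relative to the Fermi observer.

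Next I would prove the equivalence. If $-g_{\tau\tau}(\tau,\rho) \le 1$, then by the inequality just proved $\|v_{\mathrm{Fermi}}\| < \sqrt{-g_{\tau\tau}(\tau,\rho)} \le 1$, so the Fermi speed cannot reach or exceed $1$; this gives the "only if" direction. Conversely, suppose $-g_{\tau\tau}(\tau,\rho) > 1$ at the spacetime point in question. Using $\|v_{\mathrm{Fermi}}\|^2 = -g_{\tau\tau}(\tau,\rho)\,\|v_{\mathrm{kin}}\|^2$, one can choose a radial test particle through that point whose kinematic speed $\|v_{\mathrm{kin}}\|$ satisfies $1/\sqrt{-g_{\tau\tau}(\tau,\rho)} < \|v_{\mathrm{kin}}\| < 1$ — this interval is nonempty precisely because $-g_{\tau\tau}(\tau,\rho) > 1$ — and for such a particle $\|v_{\mathrm{Fermi}}\|^2 = -g_{\tau\tau}(\tau,\rho)\,\|v_{\mathrm{kin}}\|^2 > 1$, so its Fermi speed is superluminal. (For the comoving family, Eq.~\eqref{comovingkin} shows $\|v_{\mathrm{kin}}\| = \sqrt{1 - a^2(t_0)/a^2(\tau)}$ sweeps out all of $[0,1)$ as $t_0$ ranges over $(0,\tau]$, so one may even take the witnessing particle to be comoving.)

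The main obstacle is the converse direction: one must exhibit an actual admissible worldline realizing the superluminal Fermi speed, rather than merely manipulating the algebraic identity. This requires checking that the prescribed kinematic speed corresponds to a legitimate timelike worldline through the given point in the Fermi chart — equivalently, that $\dot\rho/\dot\tau$ can be chosen so that $g(u',u') = -1$ holds with $\|v_{\mathrm{Fermi}}\| = |d\rho/d\tau|$ as large as desired up to $\sqrt{-g_{\tau\tau}}$. For the comoving family this is automatic from Eq.~\eqref{comovingkin}; in general it follows because, given any target $\|v_{\mathrm{kin}}\| \in [0,1)$, the corresponding $u'$ in $\mathcal{M}_\tau$ is the boosted frame vector, and the normalization $g(u',u') = -1$ then fixes $\dot\tau$ and $\dot\rho$ consistently. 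I would phrase the converse in terms of the comoving witnesses to keep the argument self-contained and to connect directly with the contrast, asserted in the Introduction, that superluminal relative Fermi velocities \emph{necessarily} occur in non-inflationary models (where $-g_{\tau\tau}$ exceeds $1$ somewhere) but not in inflationary ones.
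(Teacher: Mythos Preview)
Your argument is correct in substance and close to what the paper does. Note, however, that the paper does not give a self-contained proof of this corollary: it cites the result from \cite{Bolos12} and, for the inequality, relies on the one-line observation made just after Eq.~\eqref{fermidefinition2}, namely that the normalization $g(u',u')=-1$ for $u'=\dot\tau\,\partial_\tau+\dot\rho\,\partial_\rho$ directly forces $(\dot\rho/\dot\tau)^2=-g_{\tau\tau}-1/\dot\tau^2<-g_{\tau\tau}$. Your route via Theorem~\ref{findmetricA} and the bound $\|v_{\mathrm{kin}}\|<1$ is an equivalent repackaging of the same fact, and for the biconditional your reasoning that any $\|v_{\mathrm{kin}}\|\in[0,1)$ (equivalently any $\|v_{\mathrm{Fermi}}\|\in[0,\sqrt{-g_{\tau\tau}})$) can be realized by a radial timelike $u'$ at the given point is exactly what is needed.

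One caveat: your parenthetical suggesting that the witnessing particle ``may even be taken to be comoving'' does not work at a \emph{fixed} spacetime point $(\tau,\rho)$. At that point the cosmological time $t_0$ is determined (Corollary~\ref{rhotrel}), so the comoving particle there has a single value of $\|v_{\mathrm{kin}}\|=\sqrt{1-a^2(t_0)/a^2(\tau)}$, and there is no freedom to push it into the interval $(1/\sqrt{-g_{\tau\tau}},\,1)$. The sweep of $\|v_{\mathrm{kin}}\|$ over $[0,1)$ as $t_0$ varies corresponds to moving to \emph{different} radii $\rho$, where $g_{\tau\tau}$ also changes. So drop that parenthetical and keep your general construction of a suitably boosted radial $u'$ at $(\tau,\rho)$; that is the correct witness.
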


\noindent Using the preceding theorem, we find:

\begin{theorem}\label{superluminal}
With the same assumptions as in Theorem \ref{findmetricA}, the Fermi velocity of a comoving test particle with fixed coordinate $\chi_{0}$ relative to the central observer, $\beta$, at proper time $\tau$ is given by,

\begin{equation}\label{comovingFermi}
\begin{split}
\|v_{\mathrm{Fermi}}\|=&\frac{\dot{a}(\tau)}{a(\tau)}\left[\frac{\sqrt{a^{2}(\tau)-a^{2}(t_{0})}}{\dot{a}(t_{0})}-\int_{t_{0}}^{\tau}\frac{\ddot{a}(t)}{\dot{a}(t)^{2}}\frac{a^{2}(\tau)-a^{2}(t_{0})}{\sqrt{a^{2}(\tau)-a^{2}(t)}}dt\right],
\end{split}
\end{equation}
where $t_{0}$ is the cosmological time coordinate of the comoving test particle at Fermi time $\tau$.
\end{theorem}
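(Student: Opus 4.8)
The plan is to derive Eq.\eqref{comovingFermi} without differentiating the proper distance $\rho(\tau)$ directly, and instead to exploit the link between the Fermi and kinematic relative velocities provided by Theorem \ref{findmetricA}. A comoving test particle with fixed spatial coordinate $\chi_0$ has constant angular coordinates and a $\rho$-coordinate that varies with $\tau$, so it moves radially with respect to $\beta$; since the scale factor is smooth, increasing, and unbounded, Theorem \ref{findmetricA} applies and yields
\[
\|v_{\mathrm{Fermi}}\|^{2}=-\,g_{\tau\tau}(\tau,\rho)\,\|v_{\mathrm{kin}}\|^{2},
\]
where $(\tau,\rho)$ are the Fermi coordinates of the particle's event and $t_0$ is its cosmological time coordinate there. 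The task then reduces to inserting the two known factors on the right-hand side.

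The kinematic factor is given by Eq.\eqref{comovingkin}, so $\|v_{\mathrm{kin}}\|^{2}=(a^{2}(\tau)-a^{2}(t_0))/a^{2}(\tau)$. For the metric coefficient I would use the form in Corollary \ref{altforms}(a) (equivalently Corollary \ref{altforms}(b) together with Eq.\eqref{dchi5}), which already displays $-g_{\tau\tau}(\tau,\rho)$ as $\dot a(\tau)^{2}\,[a^{2}(\tau)-a^{2}(t_0)]$ times the square of exactly the bracket appearing in \eqref{comovingFermi}. Multiplying the two factors gives $\|v_{\mathrm{Fermi}}\|^{2}=\frac{\dot a(\tau)^{2}}{a^{2}(\tau)}\,[a^{2}(\tau)-a^{2}(t_0)]^{2}\,[\,\cdots\,]^{2}$; taking the square root and then distributing the factor $a^{2}(\tau)-a^{2}(t_0)$ over the bracket — so that its first term becomes $\sqrt{a^{2}(\tau)-a^{2}(t_0)}/\dot a(t_0)$ and the factor passes into the integrand of the second term — produces Eq.\eqref{comovingFermi} verbatim.

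The only point requiring care is the sign of the square root: one must know that the bracketed quantity is nonnegative, so that $\|v_{\mathrm{Fermi}}\|$ is the positive root. This is precisely what the earlier lemmas guarantee. Since $\rho>0$ forces $\tau>t_0$, we have $a^{2}(\tau)-a^{2}(t_0)>0$, while by Eq.\eqref{dchi5} the bracket equals $\big(a(\tau)/\dot a(\tau)\big)\,d\chi_{t_0}/d\tau$, which is positive by Lemma \ref{increase}; both factors being positive, the positive root is the right one. (Routing the computation instead through Corollary \ref{altforms}(b) one writes $\|v_{\mathrm{Fermi}}\|=\frac{a^{2}(\tau)-a^{2}(t_0)}{a(\tau)}\frac{d\chi_{t_0}}{d\tau}$ and then substitutes Eq.\eqref{dchi5}.) I expect no real obstacle beyond bookkeeping — in particular keeping in mind that $t_0$ is held fixed inside $d\chi_{t_0}/d\tau$ in Corollary \ref{altforms} even though, for the comoving particle, $t_0$ depends on $\tau$ through the constraint $\chi_{t_0}(\tau)=\chi_0$; since $g_{\tau\tau}$ is a local quantity at the event in question, this is harmless. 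No estimate beyond Theorem \ref{findmetricA}, Corollary \ref{altforms}, Lemma \ref{increase}, and Eq.\eqref{comovingkin} is needed.
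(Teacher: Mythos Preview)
Your proposal is correct and follows essentially the same route as the paper's proof, which simply cites Theorem~\ref{findmetricA}, Corollary~\ref{altforms}(a), and Eq.~\eqref{comovingkin}. Your additional remark on the sign of the square root via Lemma~\ref{increase} is a welcome clarification that the paper leaves implicit.
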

\begin{proof}
The proof follows directly from Theorem \ref{findmetricA}, Corollary \ref{altforms}(a), and Eq.\eqref{comovingkin}.
\end{proof}

\noindent The following theorem establishes inequalities for the Fermi relative velocity of comoving test particles analogous to Hubble's law.

\begin{theorem}\label{hubbleineq}
With the same assumptions as in Theorem \ref{findmetricA}, the Fermi velocity of a comoving test particle with fixed coordinate $\chi_{0}$ relative to the central observer at proper time $\tau$ is given by,

\begin{equation}\label{hubble}
\begin{split}
\|v_{\mathrm{Fermi}}\|
=& H(\tau)\rho -H(\tau) \int_{t_{0}}^{\tau}\frac{\ddot{a}(t)}{\dot{a}(t)^{2}}\frac{a^{2}(t)-a^{2}(t_{0})}{\sqrt{a^{2}(\tau)-a^{2}(t)}}dt, 
\end{split}
\end{equation}
where $t_{0}$ is the cosmological time coordinate of the comoving test particle at Fermi time $\tau$, and $\rho$ is the proper distance from the Fermi observer to the comoving test particle.  Thus, if $\ddot{a}(t)\leq0$ for $t_{0}<t<\tau$, then
\begin{equation}\label{superhubble}
\|v_{\mathrm{Fermi}}\|=\dot{\rho}\geq H(\tau)\rho.
\end{equation}
If $\ddot{a}(t)\geq0$ for $t_{0}<t<\tau$, then,
\begin{equation}\label{subhubble}
\|v_{\mathrm{Fermi}}\|=\dot{\rho}\leq H(\tau)\rho.
\end{equation}

\end{theorem}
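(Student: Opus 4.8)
The plan is to take the closed-form expression for $\|v_{\mathrm{Fermi}}\|$ of a comoving test particle from Theorem~\ref{superluminal}, rewrite it by one integration by parts so that the term $H(\tau)\rho$ appears explicitly, and then read off the two inequalities from the sign of the remaining integrand.

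First I would recall from Eq.~\eqref{properAlt} that the proper distance to the comoving particle at Fermi time $\tau$ is $\rho=\int_{t_{0}}^{\tau} a(t)\big(a^{2}(\tau)-a^{2}(t)\big)^{-1/2}\,dt$, so $H(\tau)\rho=\frac{\dot{a}(\tau)}{a(\tau)}\int_{t_{0}}^{\tau} a(t)\big(a^{2}(\tau)-a^{2}(t)\big)^{-1/2}\,dt$. Inside the bracket of Theorem~\ref{superluminal} I would split $a^{2}(\tau)-a^{2}(t_{0})=\big(a^{2}(t)-a^{2}(t_{0})\big)+\big(a^{2}(\tau)-a^{2}(t)\big)$; the second summand cancels against the $1/\sqrt{a^{2}(\tau)-a^{2}(t)}$ in the denominator, producing
\begin{equation*}
\begin{split}
\|v_{\mathrm{Fermi}}\| = &\, H(\tau)\left[\frac{\sqrt{a^{2}(\tau)-a^{2}(t_{0})}}{\dot{a}(t_{0})} - \int_{t_{0}}^{\tau}\frac{\ddot{a}(t)}{\dot{a}(t)^{2}}\sqrt{a^{2}(\tau)-a^{2}(t)}\,dt\right]\\
&\, - H(\tau)\int_{t_{0}}^{\tau}\frac{\ddot{a}(t)}{\dot{a}(t)^{2}}\frac{a^{2}(t)-a^{2}(t_{0})}{\sqrt{a^{2}(\tau)-a^{2}(t)}}\,dt.
\end{split}
\end{equation*}
The last integral is exactly the one in \eqref{hubble}, so it suffices to show that the bracketed quantity equals $\rho$.

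That identity is the crux. I would evaluate $\int_{t_{0}}^{\tau}\frac{\ddot{a}(t)}{\dot{a}(t)^{2}}\sqrt{a^{2}(\tau)-a^{2}(t)}\,dt$ by parts, writing $\frac{\ddot{a}(t)}{\dot{a}(t)^{2}}=\frac{d}{dt}\!\left(-\frac{1}{\dot{a}(t)}\right)$ and using $\frac{d}{dt}\sqrt{a^{2}(\tau)-a^{2}(t)}=-a(t)\dot{a}(t)\big(a^{2}(\tau)-a^{2}(t)\big)^{-1/2}$. The boundary term vanishes at $t=\tau$ because $\sqrt{a^{2}(\tau)-a^{2}(t)}\to0$ there while $\dot{a}(\tau)>0$, and at $t=t_{0}$ it equals $\frac{\sqrt{a^{2}(\tau)-a^{2}(t_{0})}}{\dot{a}(t_{0})}$; the leftover integral is $-\int_{t_{0}}^{\tau} a(t)\big(a^{2}(\tau)-a^{2}(t)\big)^{-1/2}\,dt=-\rho$. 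Hence $\int_{t_{0}}^{\tau}\frac{\ddot{a}(t)}{\dot{a}(t)^{2}}\sqrt{a^{2}(\tau)-a^{2}(t)}\,dt=\frac{\sqrt{a^{2}(\tau)-a^{2}(t_{0})}}{\dot{a}(t_{0})}-\rho$, so the bracket above collapses to $\rho$ and \eqref{hubble} follows. The one point requiring care is the legitimacy of this integration by parts near $t=\tau$, where $\frac{d}{dt}\sqrt{a^{2}(\tau)-a^{2}(t)}$ blows up; I would carry it out on $[t_{0},\tau-\varepsilon]$ and let $\varepsilon\downarrow0$, observing that the boundary term at $\tau-\varepsilon$ tends to $0$ and that the $(a^{2}(\tau)-a^{2}(t))^{-1/2}$ singularity is integrable, with $\ddot{a}/\dot{a}^{2}\le1/a$ (Definition~\ref{regular}(c)) keeping everything controlled on $(t_{0},\tau)$.

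Finally, the two Hubble inequalities come for free from \eqref{hubble}: on $t_{0}<t<\tau$ the factor $\dot{a}(t)^{2}$ is positive, $a^{2}(t)-a^{2}(t_{0})\ge0$ by monotonicity of $a$, and $\sqrt{a^{2}(\tau)-a^{2}(t)}>0$, so the integrand in \eqref{hubble} has the sign of $\ddot{a}(t)$. If $\ddot{a}(t)\le0$ throughout $(t_{0},\tau)$ the integral is $\le0$ and, since $H(\tau)>0$, $\|v_{\mathrm{Fermi}}\|\ge H(\tau)\rho$, i.e.\ \eqref{superhubble}; if $\ddot{a}(t)\ge0$ throughout it is $\ge0$ and $\|v_{\mathrm{Fermi}}\|\le H(\tau)\rho$, i.e.\ \eqref{subhubble}. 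The equality $\|v_{\mathrm{Fermi}}\|=\dot{\rho}$ is the identification already in force in Theorem~\ref{findmetricA} (cf.\ Eq.~\eqref{introvf}). I expect the only real obstacle to be assembling the integration-by-parts identity cleanly and justifying the limit at $t=\tau$; the rest is bookkeeping.
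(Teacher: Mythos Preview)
Your proposal is correct and follows essentially the same route as the paper: the paper integrates $\rho=\int_{t_0}^{\tau}a(t)\big(a^{2}(\tau)-a^{2}(t)\big)^{-1/2}dt$ by parts (writing $a(t)=\dot a(t)^{-1}\cdot a(t)\dot a(t)$) to obtain exactly the identity $\rho=\frac{\sqrt{a^{2}(\tau)-a^{2}(t_{0})}}{\dot a(t_{0})}-\int_{t_{0}}^{\tau}\frac{\ddot a(t)}{\dot a(t)^{2}}\sqrt{a^{2}(\tau)-a^{2}(t)}\,dt$, and then combines it with Eq.~\eqref{comovingFermi} via the same algebraic split $a^{2}(\tau)-a^{2}(t_{0})=(a^{2}(\tau)-a^{2}(t))+(a^{2}(t)-a^{2}(t_{0}))$ that you make explicit. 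One small point: the hypotheses of Theorem~\ref{findmetricA} do not include regularity, so you need not invoke Definition~\ref{regular}(c) to justify the integration by parts---smoothness of $a$ and $\dot a>0$ already make $\ddot a/\dot a^{2}$ continuous (hence bounded) on $[t_{0},\tau]$, so the only singularity is the integrable one of $\rho$ itself at $t=\tau$, and your $\varepsilon$-limit argument handles that.
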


\begin{proof}
Using integration by parts, Eq.\eqref{properAlt} may be rewritten as,

\begin{equation}\label{byparts}
\begin{split}
\rho=&\int_{t_{0}}^{\tau}\frac{a(t)}{\sqrt{a^{2}(\tau)-a^{2}(t)}}\,dt=\frac{1}{2}\int_{t_{0}}^{\tau}\frac{1}{\dot{a}(t)}\frac{2a(t)\dot{a}(t)}{\sqrt{a^{2}(\tau)-a^{2}(t)}}\,dt\\
=&\frac{\sqrt{a^{2}(\tau)-a^{2}(t_{0})}}{\dot{a}(t_{0})}-\int_{t_{0}}^{\tau}\frac{\ddot{a}(t)}{\dot{a}(t)^{2}}\sqrt{a^{2}(\tau)-a^{2}(t)}dt.
\end{split}
\end{equation}
Combining Eq.\eqref{byparts} with Eq.\eqref{comovingFermi} establishes Eq.\eqref{hubble}.
\end{proof}

\begin{remark}
Equality holds in each of Eqs.\eqref{superhubble} and \eqref{subhubble}
 for the Milne universe, for which $a(t)=t$, so that $\|v_{\mathrm{Fermi}}\|= H(\tau)\rho=\rho/\tau$.
\end{remark}

\begin{corollary}\label{superluminalcor2}
Suppose that  $a(t)$ is a smooth, increasing, unbounded function of $t$, and $\ddot{a}(t)<0$ for $0<t<\tau$.  Then the Fermi speed, $\|v_{\mathrm{Fermi}}\|$, and kinematic speed ,$\|v_{\mathrm{kin}}\|$, of comoving test particles each increase with proper distance from the central observer, and,

\begin{equation}\label{limitsuper2}
\|v_{\mathrm{Fermi}}\|>\|v_{\mathrm{kin}}\|,
\end{equation}
\begin{equation}\label{limitsuper3}
\lim_{\quad\rho\to\rho_{\mathcal{M}_{\tau}}}\|v_{\mathrm{kin}}\|=1,
\end{equation}
\begin{equation}\label{limitsuper}
\lim_{\quad\rho\to\rho_{\mathcal{M}_{\tau}}}\|v_{\mathrm{Fermi}}\| \quad =\lim_{\quad\rho\to\rho_{\mathcal{M}_{\tau}}}\sqrt{-g_{\tau\tau}(\tau,\rho)}\geq1,
\end{equation}
and
\begin{equation}\label{superg}
-g_{\tau\tau}(\tau,\rho)>1,
\end{equation}
for $0<\rho<\rho_{\mathcal{M}_{\tau}}$.

\end{corollary}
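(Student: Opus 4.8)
The plan is to reduce the entire corollary to the single sharp estimate $-g_{\tau\tau}(\tau,\rho)>1$ for $0<\rho<\rho_{\mathcal{M}_\tau}$, i.e. \eqref{superg}, from which \eqref{limitsuper2}, \eqref{limitsuper3} and \eqref{limitsuper} follow with little extra work. Fix $\tau$ and let $t_0=t_0(\tau,\rho)\in(0,\tau)$ denote the cosmological time coordinate of the point with Fermi coordinates $(\tau,\rho)$; by Corollary \ref{rhotrel} this is a well-defined, strictly decreasing function of $\rho$ with $t_0\to0$ as $\rho\to\rho_{\mathcal{M}_\tau}$. Applying Theorem \ref{findmetricA} to the comoving test particle through $(\tau,\rho)$ and combining with Eqs.\eqref{comovingkin} and \eqref{comovingFermi}, one first records the identity
\begin{equation*}
\sqrt{-g_{\tau\tau}(\tau,\rho)}=\frac{\|v_{\mathrm{Fermi}}\|}{\|v_{\mathrm{kin}}\|}=\frac{\dot a(\tau)}{\dot a(t_0)}+\dot a(\tau)\sqrt{a^2(\tau)-a^2(t_0)}\int_{t_0}^{\tau}\Big(-\frac{\ddot a(t)}{\dot a(t)^2}\Big)\frac{dt}{\sqrt{a^2(\tau)-a^2(t)}}.
\end{equation*}
Because $\ddot a<0$ on $(0,\tau)\supseteq(t_0,\tau)$, each term on the right is positive; the whole difficulty is to show the sum exceeds $1$. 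One should expect naive pointwise bounds on the integrand to fail: in the $t$-variable the relevant ``integrand minus $1/\dot a(t_0)$'' expression is sign-indefinite.

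To expose the right structure I would change variables to $s=a(t)$ and then rescale $s=a(\tau)u$. Using $-\ddot a(t)/\dot a(t)^2\,dt=\frac{d}{dt}\big(1/\dot a(t)\big)\,dt=\ddot b(s)\,ds$ and $1/\dot a(t_0)=\dot b(a(t_0))$, and noting that strict concavity of $a$ on $(0,\tau)$ makes $b=a^{-1}$ strictly convex there (so $\ddot b>0$ on the relevant interval), one writes $\dot b(a(t_0))=\dot b(a(\tau))-\int_{a(t_0)}^{a(\tau)}\ddot b(s)\,ds$, uses $\dot a(\tau)\dot b(a(\tau))=1$, and combines the two resulting integrals to reach the clean identity
\begin{equation*}
\sqrt{-g_{\tau\tau}(\tau,\rho)}=1+\dot a(\tau)\,a(\tau)\int_{\xi}^{1}\ddot b\big(a(\tau)u\big)\Big[\frac{\sqrt{1-\xi^2}}{\sqrt{1-u^2}}-1\Big]\,du,\qquad \xi:=\frac{a(t_0)}{a(\tau)}\in(0,1).
\end{equation*}
On $(\xi,1)$ one has $1-u^2<1-\xi^2$, so the bracket is strictly positive (and $0$ at $u=\xi$), while $\ddot b(a(\tau)u)>0$ and $\dot a(\tau)a(\tau)>0$; hence $\sqrt{-g_{\tau\tau}}>1$, which is \eqref{superg}, and then $\|v_{\mathrm{Fermi}}\|^2=-g_{\tau\tau}\|v_{\mathrm{kin}}\|^2>\|v_{\mathrm{kin}}\|^2$ yields \eqref{limitsuper2}. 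An equivalent derivation, avoiding \eqref{comovingFermi}, would start from Corollary \ref{altforms}(b) and Eq.\eqref{dchi5} for $d\chi_{t_0}/d\tau$; the same substitution produces the same identity.

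For the monotonicity assertion I would exploit that $\rho$ enters the boxed identity only through $\xi$, which is strictly decreasing in $\rho$ (since $t_0$ is, by Corollary \ref{rhotrel}, and $a$ is increasing). Differentiating in $\xi$, the Leibniz boundary term at $u=\xi$ cancels against $\frac{d}{d\xi}\int_\xi^1\ddot b(a(\tau)u)\,du$, leaving $-\frac{\xi}{\sqrt{1-\xi^2}}\int_\xi^1\ddot b(a(\tau)u)(1-u^2)^{-1/2}\,du<0$, so $\sqrt{-g_{\tau\tau}}$ is strictly increasing in $\rho$; since $\|v_{\mathrm{kin}}\|=\sqrt{1-a^2(t_0)/a^2(\tau)}$ is positive and strictly increasing in $\rho$ too, so is $\|v_{\mathrm{Fermi}}\|=\sqrt{-g_{\tau\tau}}\,\|v_{\mathrm{kin}}\|$. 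For \eqref{limitsuper3}: $t_0\to0$ as $\rho\to\rho_{\mathcal{M}_\tau}$ (Corollary \ref{rhotrel}) gives $a(t_0)\to a(0)=0$, hence $\|v_{\mathrm{kin}}\|\to1$. For \eqref{limitsuper}: the monotonicity just proved gives existence of $\lim_{\rho\to\rho_{\mathcal{M}_\tau}}\|v_{\mathrm{Fermi}}\|$ and $\lim_{\rho\to\rho_{\mathcal{M}_\tau}}\sqrt{-g_{\tau\tau}}$ in $(1,\infty]$; since $\sqrt{-g_{\tau\tau}}=\|v_{\mathrm{Fermi}}\|/\|v_{\mathrm{kin}}\|$ and $\|v_{\mathrm{kin}}\|\to1$, the two limits coincide, and $\|v_{\mathrm{Fermi}}\|>\|v_{\mathrm{kin}}\|\to1$ forces the common limit to be $\geq1$.

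The crux of the argument is the step in the second paragraph: realizing that passing from $t$ to $s=a(t)$ (equivalently to $u=a(t)/a(\tau)$) converts the sign-indefinite integrand into a manifestly nonnegative one, and arranging the algebra so that the ``$1$'' splits off exactly. Everything else — the interchanges of limit, differentiation and integration used to obtain the identity and to differentiate it in $\xi$ — is routine and can be justified by dominated convergence exactly as in the proofs of Lemmas \ref{limit} and \ref{increase}; the only integrable singularity of the integrands is the harmless $(1-u^2)^{-1/2}$ at $u=1$.
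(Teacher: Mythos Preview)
Your proof is correct but takes a genuinely different route from the paper's. The paper proves \eqref{limitsuper2} first, directly from Eq.~\eqref{comovingFermi}: since $\ddot a<0$ makes the integral term positive, it bounds $1/\sqrt{a^2(\tau)-a^2(t)}$ from below by the constant $1/\sqrt{a^2(\tau)-a^2(t_0)}$, pulls this out, and uses the exact evaluation $\int_{t_0}^{\tau}\ddot a/\dot a^{2}\,dt=1/\dot a(t_0)-1/\dot a(\tau)$ to collapse the bracket to $\|v_{\mathrm{kin}}\|$; then \eqref{superg} follows from Theorem~\ref{findmetricA}. For monotonicity the paper simply asserts, from the formula in Theorem~\ref{superluminal}, that $\|v_{\mathrm{Fermi}}\|$ increases as $t_0$ decreases and invokes Corollary~\ref{rhotrel}. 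You instead pass to the variable $u=a(t)/a(\tau)$ and rearrange into the exact identity $\sqrt{-g_{\tau\tau}}=1+\dot a(\tau)a(\tau)\int_\xi^1\ddot b(a(\tau)u)\big[\sqrt{1-\xi^2}/\sqrt{1-u^2}-1\big]\,du$, from which \eqref{superg} is immediate, and then differentiate in $\xi$ to get monotonicity of $\sqrt{-g_{\tau\tau}}$, hence of $\|v_{\mathrm{Fermi}}\|=\sqrt{-g_{\tau\tau}}\,\|v_{\mathrm{kin}}\|$. The paper's argument is shorter and avoids introducing $b$; your identity buys a fully explicit monotonicity proof (where the paper is somewhat brief) and could be pushed further to give quantitative information about the limit in \eqref{limitsuper}.
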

\begin{proof}
With $\ddot{a}(t)<0$, Theorem \ref{superluminal} shows that the Fermi speed $\|v_{\mathrm{Fermi}}\|$ of comoving test particles increases as $t_{0}$ decreases to zero.  It then follows from Corollary \ref{rhotrel} that $\|v_{\mathrm{Fermi}}\|$ is an increasing function of $\rho$, so the limit in Eq.\eqref{limitsuper} exists.  Assuming $\ddot{a}(t)<0$, it follows from Eq.\eqref{comovingFermi} that,
\begin{equation}\label{comovingFermi2}
\begin{split}
\|v_{\mathrm{Fermi}}\|>&\,\frac{\dot{a}(\tau)}{a(\tau)}\left[\frac{\sqrt{a^{2}(\tau)-a^{2}(t_{0})}}{\dot{a}(t_{0})}-\int_{t_{0}}^{\tau}\frac{\ddot{a}(t)}{\dot{a}(t)^{2}}\frac{a^{2}(\tau)-a^{2}(t_{0})}{\sqrt{a^{2}(\tau)-a^{2}(t_{0})}}dt\right]\\
=&\,\frac{\dot{a}(\tau)}{a(\tau)}\sqrt{a^{2}(\tau)-a^{2}(t_{0})}\left[\frac{1}{\dot{a}(t_{0})}-\int_{t_{0}}^{\tau}\frac{\ddot{a}(t)}{\dot{a}(t)^{2}}dt\right]\\
=&\,\frac{\dot{a}(\tau)}{a(\tau)}\sqrt{a^{2}(\tau)-a^{2}(t_{0})}\left[\frac{1}{\dot{a}(t_{0})}-\frac{1}{\dot{a}(t_{0})}+\frac{1}{\dot{a}(\tau)}\right]\\
=&\,\frac{\sqrt{a^{2}(\tau)-a^{2}(t_{0})}}{a(\tau)}=\|v_{\mathrm{kin}}\|
\end{split}
\end{equation}
Thus,
\begin{equation}
\lim_{\quad\rho\to\rho_{\mathcal{M}_{\tau}}}\|v_{\mathrm{Fermi}}\|=\lim_{t_{0}\to0}\|v_{\mathrm{Fermi}}\|\geq\lim_{t_{0}\to0}\|v_{\mathrm{kin}}\|=1.
\end{equation}
Eq.\eqref{superg} now follows from Theorem \ref{findmetricA}.
\end{proof}

\begin{remark}\label{fast}
The inequality in Eq.\eqref{limitsuper} can be strict.  It was shown in \cite{randles} (see also \cite{Bolos12,sam}) for power law scale factors, $a(t)=t^{\alpha}$ with $0<\alpha<1$, that for comoving test particles, $\|v_{\mathrm{Fermi}}\|$ is an increasing function of $\rho$, and, 
\begin{equation}\label{limitsuper2}
\lim_{\quad\rho\to\rho_{\mathcal{M}_{\tau}}}\|v_{\mathrm{Fermi}}\|=\frac{\rho_{\mathcal{M}_{\tau}}}{\tau}=\frac{\sqrt{\pi}\,\,\Gamma(\frac{1+\alpha}{2\alpha})}{\Gamma(\frac{1}{2\alpha})}>1.
\end{equation}
\end{remark}

\noindent It follows from Corollaries \ref{superluminalcor} and \ref{superluminalcor2} that superluminal relative Fermi velocities of test particles (not necessarily comoving) necessarily exist in strictly non inflationary cosmologies, but not in the presence of inflation, as the next corollary shows.

\begin{corollary}\label{nosuper}
Suppose that  $a(t)$ is a smooth, increasing, unbounded function of $t$, and $\ddot{a}(t)>0$ for $t_{0}<t<\tau$.  Then relative to the central observer, $\beta$, at proper time $\tau$, the Fermi speed, $\|v_{\mathrm{Fermi}}\|$, of a comoving test particle with Fermi time coordinate $\tau$ and curvature coordinates $\chi_{0}$ and $t_{0}$ satisfies,

\begin{equation}
\|v_{\mathrm{Fermi}}\|<\|v_{\mathrm{kin}}\|<1
\end{equation}
and
\begin{equation}\label{subg}
-g_{\tau\tau}(\tau,\rho)<1,
\end{equation}
for $0<\rho\leq\rho_{t_{0}}$, where $(\tau, \rho_{t_{0}}, \theta_{0},\phi_{0})$ are the Fermi polar coordinates for the spacetime point with curvature normal coordinates  $(t_{0}, \chi_{0}, \theta_{0},\phi_{0})$.  Within this range of coordinates $\|v_{\mathrm{Fermi}}\|<1$ for any radially moving test particle.
\end{corollary}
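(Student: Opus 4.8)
The plan is to run the argument in the proof of Corollary \ref{superluminalcor2} with every inequality reversed, the reversal being forced by the hypothesis $\ddot{a}>0$ in place of $\ddot{a}<0$. I would begin from the exact formula \eqref{comovingFermi} for the Fermi speed of a comoving test particle (Theorem \ref{superluminal}, which needs only that $a$ be smooth, increasing and unbounded) together with \eqref{comovingkin} for the kinematic speed. Fix $\rho\in(0,\rho_{t_{0}}]$ and let $t'=t'(\rho)$ be the cosmological time coordinate of the corresponding point of $\mathcal{M}_{\tau}$; by Corollary \ref{rhotrel} the map $\rho\mapsto t'$ is strictly decreasing with $t'(\rho_{t_{0}})=t_{0}$, so $t_{0}\le t'<\tau$, and the inflationary hypothesis gives $\ddot{a}>0$ on all of $(t',\tau)$. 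Since that spacetime point lies on the worldline of the comoving test particle passing through it, Theorem \ref{superluminal} applies with $t_{0}$ replaced by $t'$, so $\|v_{\mathrm{Fermi}}\|$ at $(\tau,\rho)$ is given by \eqref{comovingFermi} while $\|v_{\mathrm{kin}}\|=\sqrt{a^{2}(\tau)-a^{2}(t')}/a(\tau)$ by \eqref{comovingkin}.

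The heart of the proof is a single estimate. Because $a$ is increasing, $a^{2}(\tau)-a^{2}(t)\le a^{2}(\tau)-a^{2}(t')$ for $t'\le t\le\tau$, hence $1/\sqrt{a^{2}(\tau)-a^{2}(t)}\ge 1/\sqrt{a^{2}(\tau)-a^{2}(t')}$, with strict inequality for $t>t'$; multiplying by the strictly positive factor $\ddot{a}(t)/\dot{a}(t)^{2}$, integrating, and using the telescoping identity $\int_{t'}^{\tau}\ddot{a}/\dot{a}^{2}\,dt=1/\dot{a}(t')-1/\dot{a}(\tau)$ yields
\begin{equation*}
\int_{t'}^{\tau}\frac{\ddot{a}(t)}{\dot{a}(t)^{2}}\,\frac{a^{2}(\tau)-a^{2}(t')}{\sqrt{a^{2}(\tau)-a^{2}(t)}}\,dt>\sqrt{a^{2}(\tau)-a^{2}(t')}\left(\frac{1}{\dot{a}(t')}-\frac{1}{\dot{a}(\tau)}\right).
\end{equation*}
Feeding this lower bound into \eqref{comovingFermi}, the $1/\dot{a}(t')$ terms cancel and one is left with $\|v_{\mathrm{Fermi}}\|<\frac{\dot{a}(\tau)}{a(\tau)}\cdot\frac{\sqrt{a^{2}(\tau)-a^{2}(t')}}{\dot{a}(\tau)}=\frac{\sqrt{a^{2}(\tau)-a^{2}(t')}}{a(\tau)}=\|v_{\mathrm{kin}}\|$, and $\|v_{\mathrm{kin}}\|<1$ since $0<a(t')<a(\tau)$.

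It then remains only to translate these inequalities back. By Theorem \ref{findmetricA}, $-g_{\tau\tau}(\tau,\rho)=\|v_{\mathrm{Fermi}}\|^{2}/\|v_{\mathrm{kin}}\|^{2}<1$, and by Corollary \ref{superluminalcor}, $\|v_{\mathrm{Fermi}}\|<\sqrt{-g_{\tau\tau}(\tau,\rho)}<1$ for any radially moving test particle at $(\tau,\rho)$. Specializing to $\rho=\rho_{t_{0}}$ (so $t'=t_{0}$) recovers the chain $\|v_{\mathrm{Fermi}}\|<\|v_{\mathrm{kin}}\|<1$ for the prescribed comoving particle, and letting $\rho$ range over $(0,\rho_{t_{0}}]$ yields the remaining assertions about $-g_{\tau\tau}$ and about general radial test particles.

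The points I expect to require care are bookkeeping rather than analysis: keeping the inequality strict (it is, since $\ddot{a}>0$ on the entire open interval $(t',\tau)$ forces the integrand difference above to be positive on a set of positive measure), and checking that for every admissible $\rho$ the inflationary hypothesis still covers the interval $(t'(\rho),\tau)$ — which it does, since $t'(\rho)\ge t_{0}$. Beyond that, the monotonicity of $a$ and the telescoping of $\int\ddot{a}/\dot{a}^{2}$ do all the work, exactly as in the proof of Corollary \ref{superluminalcor2}.
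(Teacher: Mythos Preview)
Your proof is correct and follows essentially the same route as the paper: bound the integral in \eqref{comovingFermi} from below using $1/\sqrt{a^{2}(\tau)-a^{2}(t)}\ge 1/\sqrt{a^{2}(\tau)-a^{2}(t')}$ together with $\ddot{a}>0$, evaluate the resulting telescoping integral $\int\ddot{a}/\dot{a}^{2}$, cancel, and then invoke Theorem~\ref{findmetricA} and Corollary~\ref{superluminalcor}. The paper's proof presents the estimate only at the endpoint $\rho=\rho_{t_{0}}$ and leaves implicit the extension to $0<\rho\le\rho_{t_{0}}$; your introduction of $t'=t'(\rho)\ge t_{0}$ and the observation that $(t',\tau)\subset(t_{0},\tau)$ so that the hypothesis $\ddot{a}>0$ still applies is a welcome clarification, but not a different argument.
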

\begin{proof}
From Theorem \ref{superluminal},
\begin{equation}\label{comovingFermi2}
\begin{split}
\|v_{\mathrm{Fermi}}\|<&\,\frac{\dot{a}(\tau)}{a(\tau)}\left[\frac{\sqrt{a^{2}(\tau)-a^{2}(t_{0})}}{\dot{a}(t_{0})}-\int_{t_{0}}^{\tau}\frac{\ddot{a}(t)}{\dot{a}(t)^{2}}\sqrt{a^{2}(\tau)-a^{2}(t_{0})}dt\right]\\
=&\,\frac{\dot{a}(\tau)}{a(\tau)}\sqrt{a^{2}(\tau)-a^{2}(t_{0})}\left[\frac{1}{\dot{a}(t_{0})}-\int_{t_{0}}^{\tau}\frac{\ddot{a}(t)}{\dot{a}(t)^{2}}dt\right]\\
=&\,\frac{\sqrt{a^{2}(\tau)-a^{2}(t_{0})}}{a(\tau)}=\|v_{\mathrm{kin}}\|<1.
\end{split}
\end{equation}
Eq.\eqref{subg} follows from Theorem \ref{findmetricA}.
\end{proof}

\begin{corollary} \label{remgtautau} 
Let $a(t)=t^{\alpha}$ with $\alpha > 1$. Then relative to the central observer, the Fermi relative speed of any comoving test particle is less than $1$.
\end{corollary}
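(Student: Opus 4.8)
The plan is to recognize this statement as an immediate specialization of Corollary \ref{nosuper}. First I would check that the power-law scale factor $a(t)=t^{\alpha}$ with $\alpha>1$ meets the hypotheses of that corollary: it is smooth on $(0,\infty)$; it is strictly increasing because $\dot{a}(t)=\alpha t^{\alpha-1}>0$; it is unbounded because $\alpha>0$; and, crucially, $\ddot{a}(t)=\alpha(\alpha-1)t^{\alpha-2}>0$ for all $t>0$ precisely because $\alpha>1$. Hence the convexity condition $\ddot{a}(t)>0$ holds on \emph{every} interval $(t_{0},\tau)$ with $0<t_{0}<\tau$.

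Next I would fix an arbitrary comoving test particle with fixed curvature coordinate $\chi_{0}$ and an arbitrary proper time $\tau>0$ of the central observer $\beta$, and let $t_{0}$ be the cosmological time coordinate of the particle at Fermi time $\tau$. By Remark \ref{tdecrease} (or Corollary \ref{rhotrel}), cosmological time strictly decreases along the spacelike geodesic orthogonal to $\beta$ from $\beta(\tau)$ outward, so $0<t_{0}<\tau$ and $(t_{0},\tau)$ is a genuine nonempty interval on which $\ddot{a}>0$. Therefore Corollary \ref{nosuper} applies directly and yields $\|v_{\mathrm{Fermi}}\|<\|v_{\mathrm{kin}}\|<1$ for this particle. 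Since the particle and the proper time $\tau$ were arbitrary, the Fermi relative speed of every comoving test particle is strictly less than $1$ at every proper time of the observer.

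There is essentially no obstacle here: the content of the corollary is entirely contained in Corollary \ref{nosuper}, and the only point deserving a remark is that the particle's cosmological time coordinate $t_{0}$ really does lie strictly between $0$ and $\tau$, which is what makes $(t_{0},\tau)$ the relevant interval for the hypothesis $\ddot{a}>0$; this is guaranteed by the monotonicity established in Corollary \ref{rhotrel}. (Alternatively, one may note that $a(t)=t^{\alpha}$ is regular in the sense of Definition \ref{regular}, since $a(t)\ddot{a}(t)/\dot{a}(t)^{2}=(\alpha-1)/\alpha<1$, and invoke the Fermi-chart machinery of Section \ref{scalecondition} directly, but this is not needed for the argument.)
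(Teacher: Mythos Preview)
Your proposal is correct and takes essentially the same approach as the paper: compute $\ddot{a}(t)=\alpha(\alpha-1)t^{\alpha-2}>0$ for all $t>0$ and invoke Corollary~\ref{nosuper}. The paper's proof is the one-line verification of $\ddot{a}>0$; your additional checks (smoothness, monotonicity, unboundedness, and $0<t_{0}<\tau$ via Corollary~\ref{rhotrel}) are accurate but merely make explicit what the paper leaves implicit.
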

\begin{proof}
$\ddot{a}(t) =\alpha(\alpha-1)t^{\alpha - 2}>0$ for all $t>0$.
\end{proof}

\begin{remark}\label{slow}
Corollary \ref{remgtautau} was observed in \cite{sam} where it was also shown  for inflationary power law scale factors, $a(t)=t^{\alpha}$ with $\alpha>1$, that for comoving test particles, 
\begin{equation}\label{limitsuper2}
\lim_{\quad\rho\to\rho_{\mathcal{M}_{\tau}}}\|v_{\mathrm{Fermi}}\|=\frac{\rho_{\mathcal{M}_{\tau}}}{\tau}=\frac{\sqrt{\pi}\,\,\Gamma(\frac{1+\alpha}{2\alpha})}{\Gamma(\frac{1}{2\alpha})}<1.
\end{equation}
Power law cosmologies with $\alpha>1$ have been used to model dark energy, and astronomical measurements have been made to support their consideration \cite{power}.
\end{remark}

\begin{figure}[!h]
\begin{center}
\includegraphics[width=0.7\textwidth]{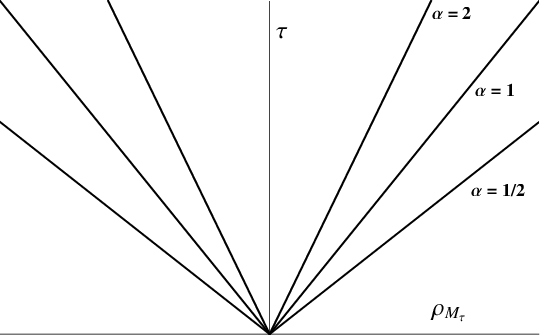}
\end{center}
\caption{The diameter $2\rho_{\mathcal{M}_{\tau}}$ of $\mathcal{M}_{\tau}$ versus $\tau$ for $a(t)=t^{\alpha}$ when: i) $\alpha = 1/2$ (non inflationary universe); ii) $\alpha = 1$ (Milne universe); iii) $\alpha = 2$ (inflationary universe). Superluminal relative Fermi velocities of comoving test particles occur only in the noninflationary case in the region of spacetime where $\rho_{\mathcal{M}_{\tau}}>\tau$. }  
\label{3universes}
\end{figure}

\noindent Figure \ref{3universes} shows the diameter $2\rho_{\mathcal{M}_{\tau}}$ of $\mathcal{M}_{\tau}$ versus $\tau$ for the Milne universe with scale factor $a(t)=t$; the (non inflationary) radiation dominated universe with scale factor $a(t)=\sqrt{t}$; and an inflationary universe with scale factor $a(t)=t^{2}$.  By Theorem \ref{ineqrho}, $\rho_{\mathcal{M}_{\tau}}$ is a linear function of $\tau$ in each of these cases.  For the Milne universe, $\rho_{\mathcal{M}_{\tau}}=\tau$, and the Fermi coordinate chart, depicted in Figure \ref{3universes}, is just the interior of the forward lightcone of Minkowski space; the metric in Fermi coordinates is the Minkowski metric \cite{randles}.  For any given proper time $\tau$ of the comoving observer $\beta$, the proper distance $\rho_{\mathcal{M}_{\tau}}$ to the big bang is greatest in the non inflationary case, $\alpha = 1/2$, and least for the inflationary universe with $\alpha = 2$.  In each of the three cases, the set of points $(\tau, \rho_{\mathcal{M}_{\tau}})$ is the image of $t=0$, i.e., the big bang, under the Fermi coordinate transformation. By Remarks \ref{fast} and \ref{slow} and Corollary \ref{nosuper}, superluminal relative Fermi velocities of comoving test particles occur only for the non inflationary case, $0<\alpha<1$, and only in the region of spacetime with $\rho_{\mathcal{M}_{\tau}}>\tau$, that is, ``outside'' the Milne universe, in the figure.

\section{Conclusions}

\noindent This paper describes geometric properties of a class of cosmological models that includes models consistent with current astronomical measurements (see Section \ref{defreg}). We have shown that a maximal Fermi chart for a comoving observer extends to the cosmological event horizon, if there is one, or is otherwise global.  Key to the proofs is Definition \ref{regular}(c). \\

\noindent Theorem \ref{hubbleineq} gives a version of Hubble's law for Fermi relative velocities that is qualitatively different for inflationary and non inflationary periods of a model universe.  Such periods are also distinguished by the possibility, or lack thereof, of superluminal Fermi relative velocities of radially receding test particles, as described in Corollaries \ref{superluminalcor2} and \ref{nosuper}.  \\

\noindent There are also qualitative differences between non inflationary and inflationary universes in the time evolution of the radius, $\rho_{\mathcal{M}_{\tau}}$, of the spaceslice ${\mathcal{M}_{\tau}}$ of all $\tau$-simultaneous events, depending on the asymptotic behavior of the Hubble parameter, as shown in Theorem \ref{infiniteradius} and Remark \ref{finiteuniverse}.   In the former case, $\rho_{\mathcal{M}_{\tau}}\to\infty$ as $\tau\to\infty$, while $\rho_{\mathcal{M}_{\tau}}$ remains bounded in some inflationary models. In all cases, any spacelike geodesic orthogonal to, and with initial point on, the worldline of a comoving observer terminates within the cosmological event horizon (if there is one) at the big bang. In this sense, all spacetime events are simultaneous with the big bang.  Since $\rho_{\mathcal{M}_{\tau}}$ increases with proper time $\tau$, by Theorem \ref{rhoincrease}, this may be taken as a rigorous definition of the notion of ``expansion of space.'' \\ 

\noindent Cosmological time $t_{0}$ decreases monotonically along the spacelike geodesics orthogonal to the observer's worldline as $\rho$ increases, by Corollary \ref{rhotrel}. It follows that, together with the angular coordinates $\theta$ and $\phi$, Fermi time $\tau$ and the cosmological time $t_{0}$ uniquely specify a spacetime point.  Formulas for the metric coefficients and relative velocities may be understood in this context. Eqs. \ref{key2} and \ref{properAlt} give the  radial coordinates $\chi$ and $\rho$ associated with $\tau$ and $t_{0}$.\\ 

\noindent In the case of a universe with an event horizon (and therefore with periods of inflation by Corollary \ref{eventhorizoninfl}), is it meaningful to ask for the proper distance from the worldline of a comoving observer to its cosmological event horizon, along a spacelike geodesic orthogonal to the worldline, i.e., in Fermi coorindates?  Formally, no.  No such geodesic reaches the event horizon because the $\chi$-coordinate, $\chi_{t_{0}}(\tau)$, of a point with Fermi time coordinate $\tau$ at cosmological time $t_{0}$ is less than $\chi_{\mathrm{horiz}}(t_{0})$ for all $\tau$, by Corollary \ref{bound}, with equality only in the limit as $\tau\to\infty$, by Corollary \ref{horizon}.  However, informally, we may identify the event horizon with $\tau=\infty$.  For a universe for which $\sup_{\tau>0}\rho_{\mathcal{M}_{\tau}}<\infty$ (see Remark \ref{finiteuniverse}), the proper distance from the worldline of the comoving observer to the event horizon at cosmological time $t_{0}$ may informally be understood as,
\begin{equation}
\rho=\lim_{\tau\to\infty}\int_{t_{0}}^{\tau}\frac{a(t)}{\sqrt{a^{2}(\tau)-a^{2}(t)}}\,dt<\infty.
\end{equation}\\

\noindent \textbf{Acknowledgments.} The author thanks Vicente Bol\'os, Peter Collas, Sam Havens, and Evan Randles for critical readings and suggestions.

\end{document}